\NewDocumentCommand{\binomial}{omm}
 {%
  \genfrac(){0pt}{}{#2}{#3}%
  \IfValueT{#1}{_{\!#1}}%
 }
\NewDocumentCommand{\eulerian}{omm}
 {%
  \genfrac<>{0pt}{}{#2}{#3}%
  \IfValueT{#1}{_{\!#1}}%
 }
\def \s {\sigma}
\theoremstyle{plain}
\newtheorem{lemma}{Lemma}[section]
\newtheorem{thm}{Theorem}[section]
\newtheorem{defn}[thm]{Definition}
\theoremstyle{definition}
\newtheorem{conj}[thm]{Conjecture}
\def\yz#1\yz {{\color{blue} [[YZ: #1]] }}
\def\bu#1\bu {{\color{red} [[BU: #1]] }}
\title{$\phi^p$ Amplitudes from the Positive Tropical Grassmannian: Triangulations of Extended Diagrams}
\author[a,b]{Bruno Gim\'enez Umbert}\emailAdd{b.gimenez-umbert@soton.ac.uk}
\author[c,b]{and Karen Yeats}\emailAdd{kayeats@uwaterloo.ca}
\affiliation[a]{School of Physics \& Astronomy, University of Southampton, Southampton, SO17 1BJ, UK\\ Mathematical Sciences, University of Southampton, Highfield, Southampton SO17 1BJ, UK}
\affiliation[b]{Perimeter Institute for Theoretical Physics, Waterloo, ON N2L 2Y5, Canada}
\affiliation[c]{Department of Combinatorics \& Optimization, University of Waterloo, Waterloo, ON N2L 3G1,
Canada}
\abstract{The global Schwinger formula, introduced by Cachazo and Early as a single integral over the positive tropical Grassmannian, provides a way to uncover properties of scattering amplitudes which are hard to see in their standard Feynman diagram formulation. In a recent work, Cachazo and one of the authors extended the global Schwinger formula to general $\phi^p$ theories. When $p=4$, it was conjectured that the integral decomposes as a sum over cones which are in bijection with non-crossing chord diagrams, and further that these can be obtained by finding the zeroes of a piece-wise linear function, $H(x)$. In this note we give a proof of this conjecture. We also present a purely combinatorial way of computing $\phi^p$ amplitudes by triangulating a trivial extended version of non-crossing $(p-2)$-chord diagrams, called extended diagrams, and present a proof of the bijection between triangulated extended diagrams and Feynman diagrams when $p=4$. This is reminiscent of recent constructions using Stokes polytopes and accordiohedra. However, the $\phi^p$ amplitude is now partitioned by a new collection of objects, each of which characterizes a polyhedral cone in the positive tropical Grassmannian in the form of an associahedron or of an intersection of two associahedra. Moreover, we comment on the bijection between extended diagrams and double-ordered biadjoint scalar amplitudes. We also conjecture the form of the general piece-wise linear function, $H^{\phi^p}(x)$, whose zeroes generate the regions in which the $\phi^p$ global Schwinger formula decomposes into.}
\begin{document}
\maketitle
\addtocontents{toc}{\protect\setcounter{tocdepth}{1}}
\def \tr {\nonumber\\}
\def \la  {\langle}
\def \ra {\rangle}
\def\hset{\texttt{h}}
\def\gset{\texttt{g}}
\def\sset{\texttt{s}}
\def \be {\begin{equation}}
\def \ee {\end{equation}}
\def \ba {\begin{eqnarray}}
\def \ea {\end{eqnarray}}
\def \k {\kappa}
\def \h {\hbar}
\def \r {\rho}
\def \l {\lambda}
\def \be {\begin{equation}}
\def \en {\end{equation}}
\def \bes {\begin{eqnarray}}
\def \ens {\end{eqnarray}}
\def \red {\color{Maroon}}
\def \pt {{\rm PT}}
\def \s {\sigma} 
\def \ls {{\rm LS}}
\def \ma {\Upsilon}
\def \s {\textsf{s}}
\def \t {\textsf{t}}
\def \R {\textsf{R}}
\def \W {\textsf{W}}
\def \U {\textsf{U}}
\def \e {\textsf{e}}

\numberwithin{equation}{section}

\section{Introduction} \label{sec1}

Tropical geometry and tropical Grassmannians play an important role in various aspects of quantum field theory, as shown by a large amount of recent work on these topics, see \cite{Tourkine:2013rda,Cachazo:2019ngv,Cachazo:2022voc, Drummond:2019qjk, Cachazo:2021llu,Drummond:2019cxm,Henke:2019hve,Arkani-Hamed:2024vna,Arkani-Hamed:2022cqe,Lukowski:2020dpn,Early:2023tjj} for a few examples. The tropical Grassmannian ${\rm Trop}\, G(k,n)$ was first introduced by Speyer and Sturmfels in \cite{SSTrop}, where it was shown that it agreed with the moduli space of phylogenetic trees studied by Billera, Holmes and Vogtmann \cite{BHV}. Speyer and Williams then introduced positive tropical Grassmannians \cite{SWTrop}, ${\rm Trop}^+\, G(k,n)$, where ${\rm Trop}^+\, G(2,n)$ is related to the associahedron and parameterizes the space of planar trees on $n$ leaves. 

For example, figure \ref{gsf5} shows a positive part of the space of metric trees with $n=5$ leaves, which coincides with $\textrm{Trop}^+G(2,5)$ for the canonical ordering $\mathbb{I}:=(12345)$. The figure requires a bit of explanation. A point in each quadrant of this space is in correspondence with a metric tree with only degree $p=3$ vertices (shown in blue), where the internal lengths $\ell_i$ of the edges are interpreted as Schwinger parameters which are defined by the semi-rays in the quadrant. A point in each semi-ray is, therefore, in correspondence with a tree with a degree $p=3$ vertex, a degree $p=4$ vertex and a single edge. In this example, the black tree in the figure is in correspondence with the planar kinematic invariant $s_{23}:=(p_2+p_3)^2=(p_4+p_5+p_1)^2$. Notice that this tree in the figure can be obtained by collapsing the other edge (of length $\ell_2$ or $\ell_4$) of any of the two trees in the quadrants sharing the semi-ray $\ell_3$. This is due to the fact that any two quadrants sharing a semi-ray correspond to two trees related by a \textit{planar mutation} since, after contracting an edge, there is another possible way of opening up a new edge to generate a tree planar with respect to the given ordering. The pentagon in blue is the boundary of the dual associahedron with $n=5$ letters, also known as the \textit{link of the origin}, which is given by the union of all hypersurfaces $\sum_i\ell_i=1$ defined for each quadrant.

For general $n$ and for a given ordering $\alpha$, $\textrm{Trop}^+G(2,n)$ has C$_{n-2}$ positive orthants $(\mathbb{R}^+)^{n-3}$, where C$_m$ is the m$^{th}$ Catalan number, and this count coincides with the number of metric trees with $n$ leaves and only degree $p=3$ vertices which are planar with respect to such ordering $\alpha$. Codimension-$d$ boundaries of the space correspond to trees with $d$ zero-length edges and at least one degree $p>3$ vertex. 

\begin{figure}[h!]
	\centering
	\includegraphics[width=0.9\linewidth]{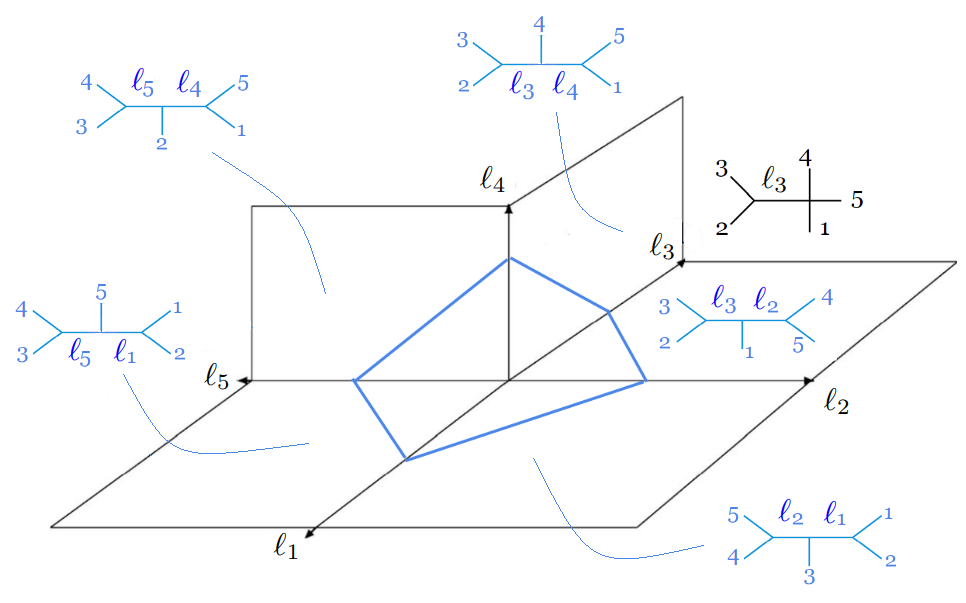}
	\caption{The space of planar metric trees with $n=5$ leaves.}
	\label{gsf5}
\end{figure}

A representative case in which $\textrm{Trop}^+G(2,n)$ plays an important role in physics is the realm of tree-level partial amplitudes of massless scalars in the biadjoint representation of a $U(N)\times U(\Tilde{N})$ flavor group with cubic interactions, $m_n(\alpha,\beta)$ \cite{Cachazo:2013iea}. These amplitudes can be computed by summing over all $n$-particle Feynman diagrams planar with respect to both orderings $\alpha$ and $\beta$. Therefore, when the two orderings coincide, e.g. $\alpha=\beta=\mathbb{I}$, the partial amplitude can be computed by summing over all degree $p=3$ metric trees with $n$ leaves which are planar with respect to $\mathbb{I}:=(12\cdots n)$. For example, when $n=5$ the partial amplitude $m_5(\mathbb{I},\mathbb{I})$ is given by the sum of the five blue trees in figure \ref{gsf5}. 

For the general $n$ case, the partial amplitude $m_n(\mathbb{I},\mathbb{I})$ is given by summing over all trees which are in correspondence with a positive orthant $(\mathbb{R}^+)^{n-3}$ in $\textrm{Trop}^+G(2,n)$. The contribution of a single tree $T$ can be computed as a Schwinger integral of the form
\be
{\cal R}(T)=\prod_{a=1}^{n-3}\int_0^{\infty}d\ell_{a}\textrm{exp}(-\ell_{a}t_{a})=\prod_{a=1}^{n-3}\frac{1}{t_{a}}\,,
\ee
where $\ell_{a}$ is the Schwinger parameter associated to the internal length of an edge, and $t_a$ is the kinematic invariant associated to that edge\footnote{The integral is only defined for $t_a>0$, but after integrating the resulting rational function is valid for any $t_a\neq0$.}. For example, the tree in figure \ref{gsf5} associated to the quadrant parametrized by $\ell_1$ and $\ell_2$ contributes as
$${\cal R}(T)=\int_0^{\infty}d\ell_1d\ell_2\textrm{exp}(-\ell_1 s_{12}-\ell_2 s_{45})=\frac{1}{s_{12}s_{45}}\,.$$

Having understood the connection between $\textrm{Trop}^+G(2,n)$ and partial biadjoint amplitudes, it is very tempting to try and attempt to unify the sum of the C$_{n-2}$ $(n-3)$-dimensional Schwinger integrals into a single object. Indeed, in a recent work Cachazo and Early showed that the partial amplitude $m_n(\mathbb{I},\mathbb{I})$ can be computed as a single integral over ${\rm Trop}^+\, G(2,n)$, known as the global Schwinger formula \cite{Cachazo:2020wgu} (see \cite{Cachazo:2022voc,Arkani-Hamed:2022cqe,GimenezUmbert:2023ykk,Arkani-Hamed:2023mvg,Arkani-Hamed:2024vna,Arkani-Hamed:2023lbd} for more work on global Schwinger parameterizations)

\be\label{amp2}
m_n(\mathbb{I},\mathbb{I}) = \int_{\mathbb{R}^{n-3}}\!\! d^{n-3}x\, {\rm exp}(-F_n(x))\,,
\ee 
which is determined by a piece-wise linear function $F_n(x)$ that we call the \textit{tropical potential}, thus providing a unification of all the Schwinger integrals. See figure \ref{gspn5} for a pictorial representation of the formula. 

Let us explain how to construct the tropical potential for future convenience. We start with a positive parameterization of $G^+(2,n)$ \cite{Alex} and then tropicalize the Pl\"ucker coordinates. A positive parameterization of $G^+(2,n)$ is given by\footnote{Where we have suppressed the torus coordinates.} 
\begin{align}\label{preT}
\begin{pmatrix}
1 & 0 & -1 & -(1+\tilde{x}_1) & -(1+\tilde{x}_1+\tilde{x}_2) & \cdots & -(1+\tilde{x}_1+\cdots+\tilde{x}_{n-3})\\
0 & 1 & 1 & 1 & 1 & \cdots & 1
\end{pmatrix}
\end{align}
where $\tilde{x}_a\in\mathbb{R}^+$ and any maximal minor $\Delta_{ab}$ with $a<b$ is positive. We then tropicalize a minor $\Delta_{ab}$ in \eqref{preT} by replacing addition, ${\tilde x_i}+{\tilde x_j}$, with the min-function ${\rm min}(x_i,x_j)$, and multiplication, ${\tilde x_i}{\tilde x_j}$, with addition $x_i+x_j$, with now the tropical variables being
unconstrained, i.e. $x_a\in\mathbb{R}$. The tropicalized minors evaluate to

\be\label{tropMin} 
\Delta^{\rm Trop}_{ab}(x) = \left\{ \begin{array}{ccc}
 {\rm min}(x_{a-2},x_{a-1},\ldots, x_{b-3}) &\,\,\,\,  & 2\leq a\leq b-1,\, 4\leq b \leq n \\
 0 & & {\rm otherwise}
\end{array}\right. , 
\ee 
where $x_0:=0$ and whenever there is a single argument we use ${\rm min}(x) := x$. The tropical potential function is therefore defined as\footnote{Here we are making the connection to the space of planar metric trees by identifying the matrix of distances $d_{ab}$ from leave $a$ to leave $b$ with the tropical Pl\"ucker coordinates $\Delta_{ab}^{\textrm{Trop}}$. See \cite{Cachazo:2022voc} for more details.}
\begin{align}\label{tropP}
    F_n(x):=\sum_{1\leq a < b \leq n}\!\!\! s_{ab}\, \Delta^{\rm Trop}_{ab}(x) =\sum_{b=4}^n\sum_{a=2}^{b-1}\! s_{ab}\,{\rm min}(x_{a-2},x_{a-1},\ldots, x_{b-3})\,,
\end{align}
where $s_{ab}:=(p_a+p_b)^2$ are kinematic invariants and $p_a$ is the momentum vector of particle $a$. One can also write the piece-wise linear function \eqref{tropP} in terms of planar kinematic invariants, $X_{a,b}:=(p_a+p_{a+1}+\cdots+p_{b-2}+p_{b-1})^2$, as

\be\label{PlanarF} 
F_n(x) = \sum_{a<b}X_{a,b+1}f_{[a,b]}(x)\,,
\ee 
where $s_{ab} = X_{a,b+1}-X_{a+1,b+1}-X_{a,b}+X_{a+1,b}$ and $X_{c,d+1}=0$ whenever $c\geq d$. The \textit{tropical cross-ratios} $f_{[a,b]}(x)$ are defined as

\be
f_{[a,b]}(x):= \Delta^{\rm Trop}_{a,b}(x)-\Delta^{\rm Trop}_{a,b+1}(x)-\Delta^{\rm Trop}_{a-1,b}(x)+\Delta^{\rm Trop}_{a-1,b+1}(x)\,.
\ee

\begin{figure}[h!]
	\centering
	\includegraphics[width=0.9\linewidth]{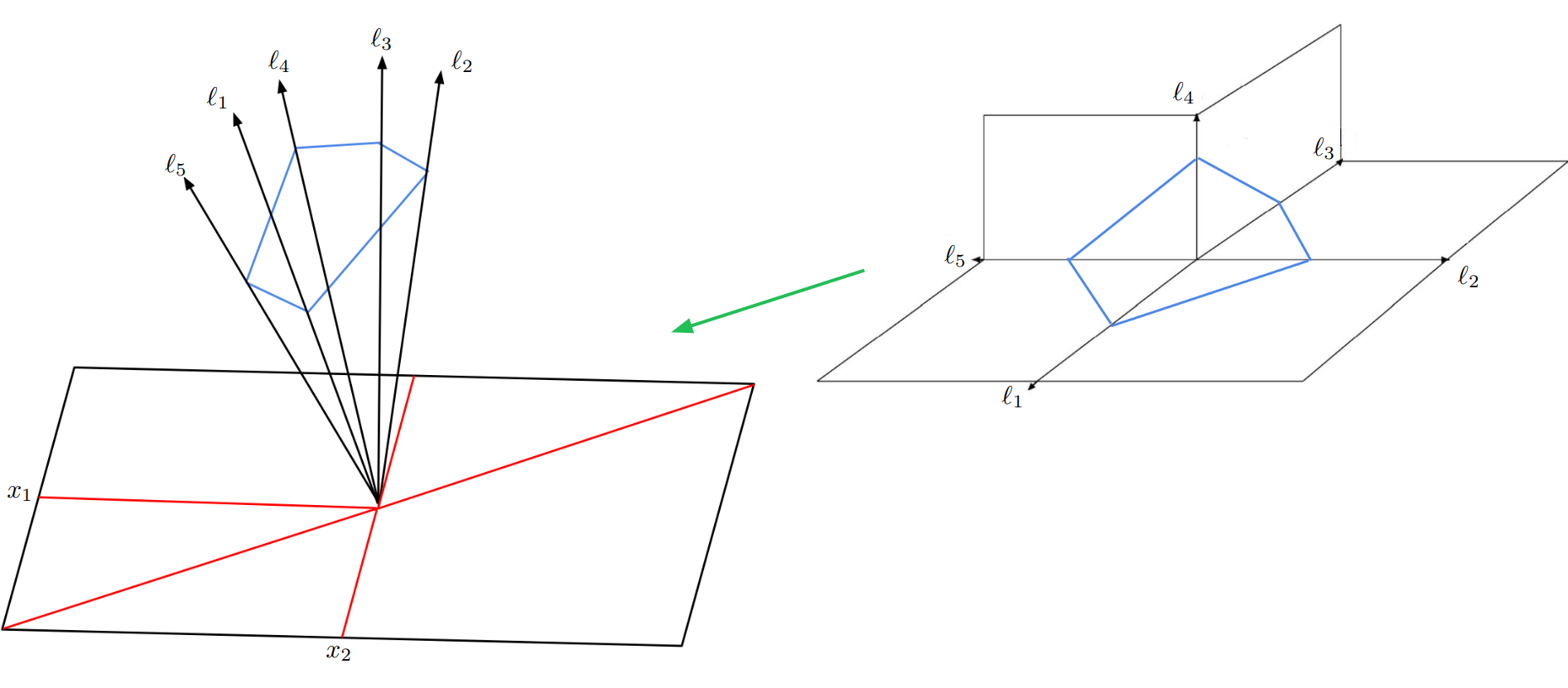}
	\caption{The global Schwinger parameterization for $n=5$, viewed as a projection of $\textrm{Trop}^+G(2,5)$, which unifies five 2-dimensional Schwinger integrals into a single 2-dimensional integral, parametrized by two tropical variables $x_1$ and $x_2$. The red lines on the plane define the domains where the tropical potential $F_5(x)$ becomes linear.}
	\label{gspn5}
\end{figure}
We refer the reader to \cite{Cachazo:2020wgu,Cachazo:2022voc} for more details and for comments on the existence and regions of convergence of the integral \eqref{amp2}.

At first it might seem puzzling that an integral over all ${\rm Trop}^+\, G(2,n)$ computes the partial amplitude $m_n(\mathbb{I},\mathbb{I})$, which only encodes trees with cubic interaction vertices, since ${\rm Trop}^+\, G(2,n)$ also contains trees with higher-degree vertices living in higher codimension boundaries of the space. However, the regions in ${\rm Trop}^+\, G(2,n)$ that contain higher-degree vertices are of measure zero and, therefore, do not contribute to the integral \eqref{amp2}. 

Recently, Cachazo and one of the authors continued the study of global Schwinger formulas for amplitudes in scalar theories with general tree-level $\phi^p$-interactions, for any $p>3$ \cite{Cachazo:2022voc}. Namely, these are amplitudes obtained by summing over trees with only degree-$p$ interaction vertices and which are planar with respect to one ordering (in this work we implicitly consider the canonical ordering $\mathbb{I}$ without loss of generality).

For the simplest example, i.e. $p=4$, the global Schwinger formula is given by

\be\label{amp56}  
A^{\phi^4}_n =  \int_{\mathbb{R}^{n-3}}\!\! d^{n-3}x\, {\rm exp}\left(-\sum_{b-a\equiv 0\,\rm{mod}\,2}X_{a,b+1}f_{[a,b]}(x)\right)Q(x)\,,
\ee 
where $Q(x)$ is a sum over distributions that localize the integral to the regions where trees with only degree-4 vertices live. In \cite{Cachazo:2022voc} it was conjectured that such regions, each of which corresponds to a polyhedral cone in ${\rm Trop}^+\, G(2,n)$ of dimension $n/2-1$ in $\mathbb{R}^{n-3}$, are given by the solutions to the equation $H(x)=0$, where $H(x)$ is

\begin{align}\label{genH}
H(x) = \sum_{a=0}^{n-3}x_a+2\sum_{a=0}^{n-4}\sum_{b=a+1}^{n-3}(-1)^{b-a}\,\textrm{min}(x_a,x_{a+1},...,x_b)\,.
\end{align} 
We give a proof of this conjecture in section \ref{sectriang4}. This result implies that $A_n^{\phi^4}$ amplitudes can be expressed as a sum over regions. Moreover, a remarkable connection to cubic amplitudes was found. Namely, each of the C$_{n/2-1}$ regions that define the support of the distribution $Q(x)$ is related to a double-ordered cubic amplitude $m_{n/2-1}(\alpha,\beta)$ with a smaller number of particles. Therefore, the $\phi^4$ amplitude can be expressed as a sum of products of cubic amplitudes \cite{Cachazo:2013iea}, something which is hard to see from the standard Feynman diagram formulation of the amplitude. In \cite{Cachazo:2022voc} it was also shown that each of these regions in $\textrm{Trop}^+G(2,n)$ are classified by non-crossing chord diagrams, which we review in section \ref{sectriang4}.

One of the main results of this paper, presented in section \ref{sectriang4}, is that we show that one can easily obtain the $\phi^4$ amplitude without ever having to carry out an integral by taking an extended version of the non-crossing chord diagrams, called extended diagrams, and then further decomposing them in a way which we call triangulation. In section \ref{sectriang4} we also present a formal proof of the bijection between triangulations of extended diagrams and $\phi^4$ trees, here using the fact that the extended diagram triangulations give quadrangulations of an even n-gon. Therefore, we proof that all triangulated extended diagrams account for the whole amplitude, showing that the whole collection of extended diagrams provides a novel way of partitioning $A_n^{\phi^4}$ into new objects, that differ e.g. from Stokes polytopes \cite{Banerjee:2018tun,Salvatori:2019phs,Aneesh:2019cvt,Srivastava:2020dly}, whose introduction was motivated by the connection between $\phi^3$ amplitudes and the associahedron \cite{Mizera:2017cqs,Arkani-Hamed:2017mur}. The reason this is different is that these objects are now given by regions in $\textrm{Trop}^+G(2,n)$ in the form of an associahedron or of an intersection of two associahedra. 

For example, figure \ref{p4n16intro} shows a triangulation of an extended diagram for $n=16$, together with the tree the triangulation is related to. Here every black chord separating two colored regions --which are related to cubic amplitudes-- and every triangulating chord --shown in red-- corresponds to a propagator in the tree. For example, the chord going from point 4 to point 9 on the real line gives rise to $1/X_{4,9}$, while the triangulating chord going from point 9 to point 14 gives rise to $1/X_{9,14}$.  In this correspondence a colored region with $m-1$ boundary chords will be related to an $m$-point amplitude and the further decomposition into smaller regions by the triangulating chords will be related to a triangulation of an $m$-gon for the $m$-point amplitude.  This why we use the term triangulation, even though these smaller regions each have $4$ boundary lines and arcs in the extended diagram.

In section \ref{sectriangp} we present an analogous procedure for general $\phi^p$ theories, using a more general version of extended diagrams, first introduced in \cite{Cachazo:2022voc}, which are also obtained using similar rules. Namely, we explain how we can obtain any $\phi^p$ amplitude by triangulating all extended diagrams.

\begin{figure}[H]
\includegraphics[width=15cm]{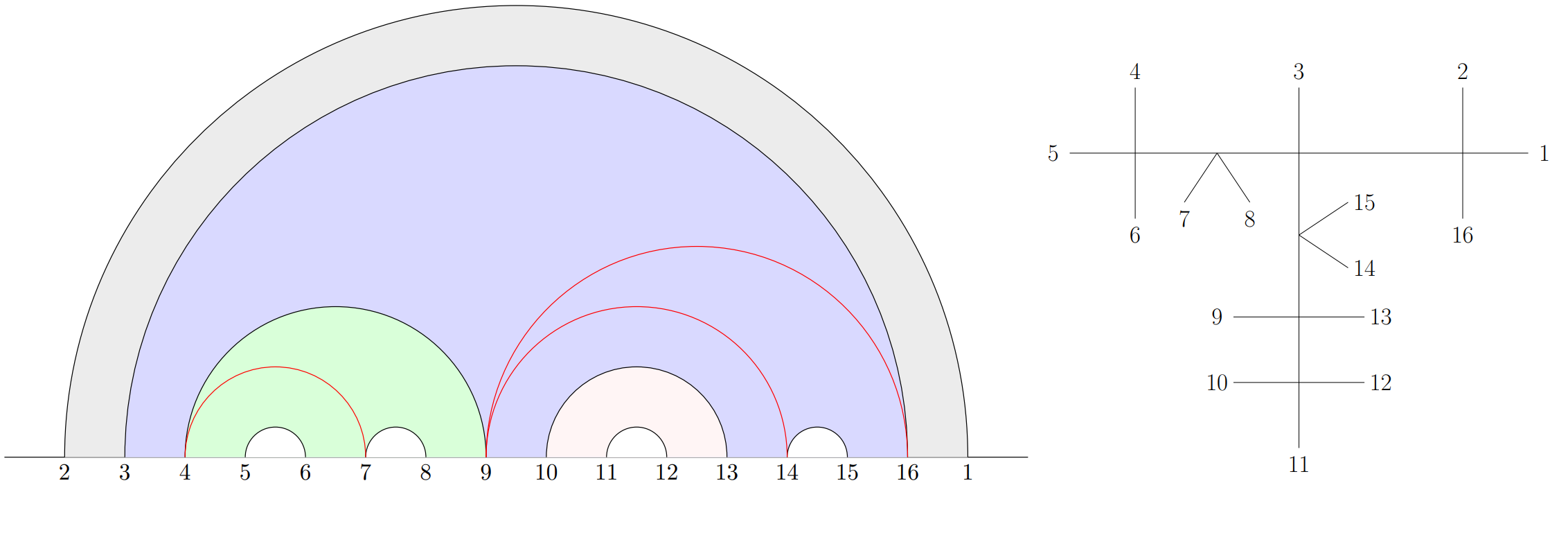}
\centering
\caption{Triangulation of an extended diagram for $n=16$ in $\phi^4$ and its associated tree. The triangulating chords are shown in red, and the colored regions correspond to different cubic subamplitudes.}
\label{p4n16intro}
\end{figure}
In section \ref{secp4cubic} we continue the exploration of the relation between extended diagrams, like the one in figure \ref{p4n16intro}, with double-ordered cubic amplitudes. We also comment on the fact that colored regions in an extended diagram must be related to a cubic subamplitude, by presenting a polygon decomposition of the extended diagram. We conclude in section \ref{discs} with some additional comments and discussions about future research topics. For example, we conjecture the analogous form of the function \eqref{genH} for general $p$, and we present some examples of polygon-decomposed extended diagrams in $\phi^p$ that require further exploration.

\section{$\phi^4$ from Triangulations of Extended Diagrams}\label{sectriang4}
A remarkable implication of the global Scwhinger formulation for $\phi^4$ theories found in \cite{Cachazo:2022voc} is that the amplitude decomposes as a sum over polyhedral cones in $\textrm{Trop}^+G(2,n)$, which are in bijection with non-crossing chord diagrams. Such diagrams provide a systematic way of finding the regions which turn into distributions in the integral \eqref{amp56}, ultimately computing the amplitude.

In this section we show one of the main results in this paper. Namely, that by triangulating a trivial extended version of these diagrams, known as extended non-crossing chord diagrams and originally introduced in \cite{Cachazo:2022voc}, we can obtain the $\phi^4$ amplitude without having to compute the integral.

Before showing how to obtain $\phi^4$ amplitudes from triangulations of extended diagrams, let us first review what these diagrams are.

\subsection{Non-Crossing Chord Diagrams and Extended Diagrams}
First of all, let us recall the definition of a non-crossing chord diagram, where we have slightly changed the labelling with respect to \cite{Cachazo:2022voc} for convenience.

\begin{defn}\label{chords}

Place $n-2$ points labeled $3,4, \ldots ,n$ in increasing order on the real line. A {\it non-crossing chord} diagram is a perfect matching of the points such that all edges, which we call \emph{chords}, can be drawn as arcs on the upper half plane without any crossings. Let us denote the chord connecting points $a$ and $b$ as $\theta_{ab}$.  
\end{defn}

This leads to the following conjecture which we will prove in section~\ref{proof}.
\begin{conj}[Conjecture 6.2 of \cite{Cachazo:2022voc}]\label{conj1}
The regions contributing to $A^{\phi^4}_n$ are in bijection with the set of all $\textrm{C}_{n/2-1}$ possible non-crossing chord diagrams. Moreover, the region $R$ corresponding to a particular diagram is obtained as follows:
\begin{itemize}
    \item For each chord $\theta_{ab}$ set $x_{a-3}=x_{b-3}$.
    \item If a chord $\theta_{ab}$ surrounds another chord $\theta_{cd}$, then $x_{a-3}=x_{b-3} < x_{c-3}=x_{d-3}$.
\end{itemize}
In other words, the regions defined by non-crossing chord diagrams are all the solutions to $H(x)=0$, where $H(x)$ is given by equation \eqref{genH}.
\end{conj}
For example, in $n=6$ there are two non-crossing chord diagrams as shown in figure \ref{fig:n6_regions4}.

\begin{figure}[h!]
	\centering
	\includegraphics[width=0.90\linewidth]{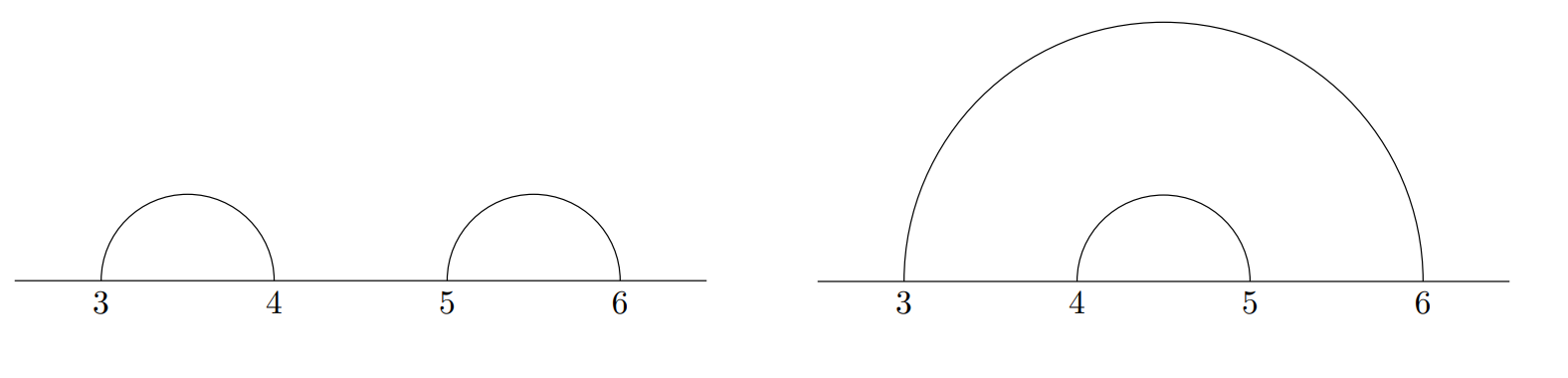}
	\caption{Non-crossing chord diagrams for $n=6$. On the right, the chord $\theta_{36}$ surrounds the chord $\theta_{45}$ and therefore the condition $x_0<x_1$ is imposed.}
	\label{fig:n6_regions4}
\end{figure}
The corresponding regions are
\be
R_1 = \{ x_0 = x_1,\, x_2 = x_3 \}\,, \quad R_2 = \{ x_0=x_3,\, x_1=x_2,\, x_0 < x_1 \}\,. 
\ee 
By evaluating \eqref{amp56} in each of these regions one finds

\be
A_6^{\phi^4:R_1}=\frac{1}{X_{1,4}}+\frac{1}{X_{2,5}}\,, \hspace{15mm} A_6^{\phi^4:R_2}=\frac{1}{X_{3,6}}\,,\label{amp46reg}
\ee
and one obtains the amplitude $A_6^{\phi^4}$ by summing over both regions.

Similarly, for $n=8$ there are five possible non-crossing chord diagrams, shown in figure \ref{figncn8}, which give rise to the five regions
\begin{align}\label{eight4}
    R_1 = \,& \{ x_1=0,\, x_2 = x_3,\, x_4= x_5 \}\,,\hspace{5mm}R_2 =\{  x_1=0, \, x_2=x_5,\, x_3=x_4,\, x_2 < x_3 \}\,, \nonumber \\ 
    R_3 = \,& \{ x_5= 0,\, x_1=x_2,\, x_3=x_4,\, x_1 > 0,\, x_3 > 0 \}\,, \nonumber \\ 
    R_4 = \,& \{ x_3 =0,\, x_1=x_2,\, x_4=x_5,\, x_1>0 \} \,,\hspace{1mm}
    R_5 =\{ x_5 = 0 ,\, x_1=x_4,\, x_2=x_3,\, x_2>x_1>0 \}\,,
\end{align} 
each of which contributes to the $\phi^4$ amplitude after using the integral \eqref{amp56}, as
\begin{align}\label{qmet}
A^{\phi^4: (1)}_8 = \,&  \frac{1}{X_{1,4}X_{4,7}}+\frac{1}{X_{4,7}X_{2,7}}+\frac{1}{X_{2,7}X_{2,5}}+\frac{1}{X_{2,5}X_{1,6}}+\frac{1}{X_{1,6}X_{1,4}}\,, \nonumber \\
A^{\phi^4: (2)}_8 = \,&  \frac{1}{X_{5,8}}\left(\frac{1}{X_{1,4}}+\frac{1}{X_{2,5}}\right)\,,\hspace{1mm}
A^{\phi^4: (3)}_8 = \frac{1}{X_{3,8}}\left(\frac{1}{X_{3,6}}+\frac{1}{X_{5,8}}\right)\,,\nonumber \\
A^{\phi^4: (4)}_8 = \,&  \frac{1}{X_{3,6}}\left(\frac{1}{X_{1,6}}+\frac{1}{X_{2,7}}\right)\,,\hspace{1mm}
A^{\phi^4: (5)}_8 =\frac{1}{X_{4,7}X_{3,8}}\,.
\end{align} 

\begin{figure}[h!]
	\centering
	\includegraphics[width=0.95\linewidth]{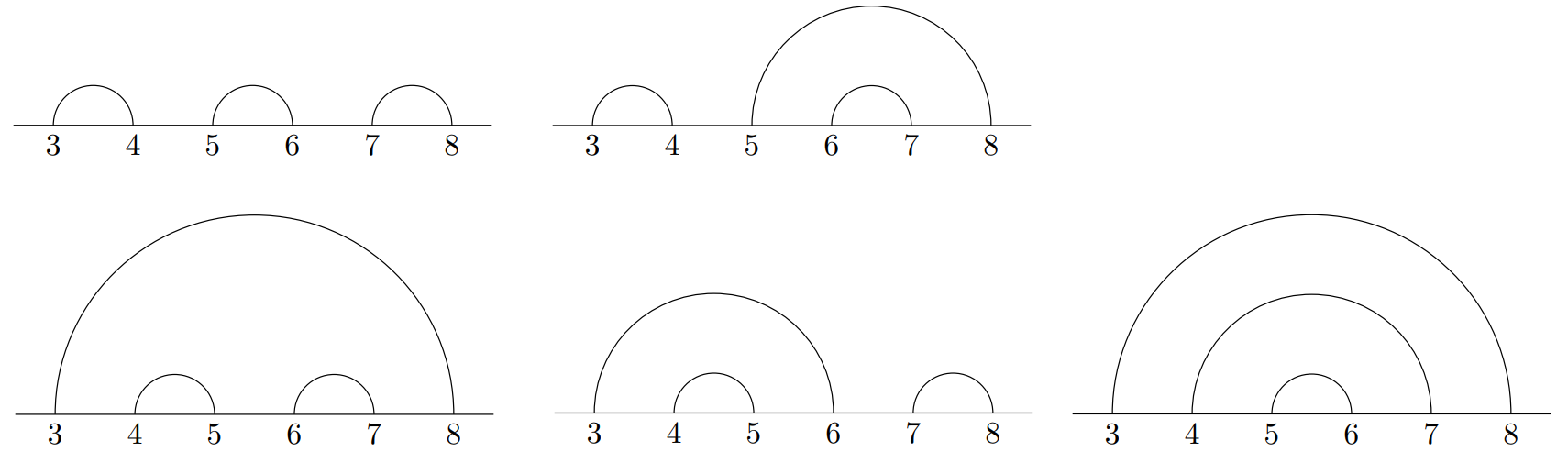}
	\caption{Non-crossing chord diagrams for $n=8$. In the second diagram $\theta_{58}$ surrounds $\theta_{67}$ and, therefore, $x_2<x_3$. In the third diagram $\theta_{38}$ surrounds both $\theta_{45}$ and $\theta_{67}$ so we have $x_0<x_1$ and $x_0<x_3$. In the fourth diagram $\theta_{36}$ surrounds $\theta_{45}$, so we have $x_0<x_1$. Finally, in the fifth diagram $\theta_{38}$ surrounds $\theta_{47}$, which at the same time surrounds $\theta_{56}$, so $x_0<x_1<x_2$.}
	\label{figncn8}
\end{figure}

The reader familiar with the biadjoint scalar theory will notice from these examples that the contribution to every region resembles that of $m_{n/2+1}(\alpha ,\beta)$ for some permutations $\alpha$ and $\beta$. This is an example of how the global Schwinger formula uncovers properties of scattering amplitudes which are non-obvious from the standard Feynman diagram formulation. In section \ref{secp4cubic} we explore this connection. 

Now we proceed to define the central object of this paper: the extended non-crossing chord diagram, which from now on we will call \textit{extended diagram} for simplicity.

\begin{defn}
An extended diagram associated to $A^{\phi^4}_n$ is a non-crossing chord diagram on $n$ points labeled by $\{ 2,3,4,5,\ldots ,n,1 \}$ in which the chord $\theta_{21}$ is always included. We also define a meadow of an extended diagram as any region in the diagram delimited by more than one chord and by the line on which the points lie.  A meadow is an $m$-point meadow if there are $m-1$ chords delimiting it. 
\end{defn}
In other words, an extended diagram is a non-crossing chord diagram with an additional chord, which goes from 2 to 1, surrounding all other chords. In the combinatorial language of chord diagrams, these are known as \textit{indecomposible} non-crossing chord diagrams.  Figures \ref{extp4n6m} and \ref{extp4n8m} show all the extended diagrams for $n=6$ and $n=8$, respectively. The reason why we have colored every meadow will become clear in section \ref{secp4cubic}, but for now it serves us as a visual way to differentiate between them.

\begin{figure}[h!]
	\centering
	\includegraphics[width=0.95\linewidth]{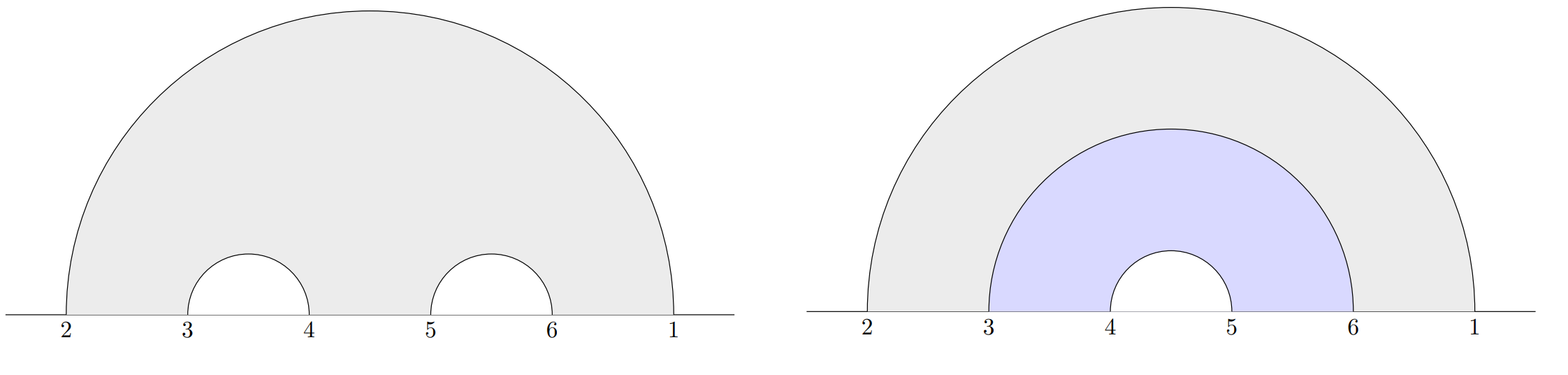}
	\caption{Extended diagrams for $n=6$. Every colored region in the diagram corresponds to a meadow.}
	\label{extp4n6m}
\end{figure}

\begin{figure}[h!]
	\centering
	\includegraphics[width=0.95\linewidth]{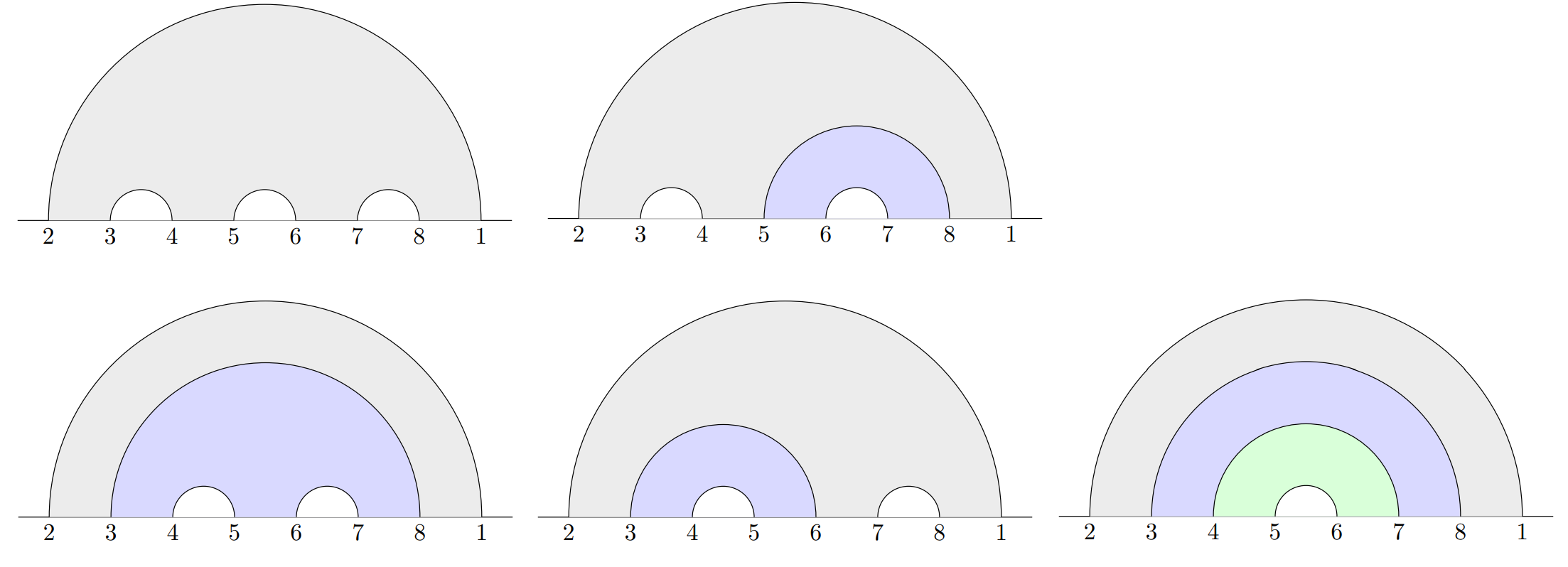}
	\caption{Extended diagrams for $n=8$. Every colored region in the diagram corresponds to a meadow.}
	\label{extp4n8m}
\end{figure}

\subsection{$\phi^4$ from Triangulations of Extended Diagrams}
In \cite{Cachazo:2022voc} it was shown that there are C$_{n/2-1}$ extended diagrams that contribute to the $\phi^4$ amplitude. In this section we show that triangulations of extended diagrams, which are counted by Fuss-Catalan numbers\footnote{Here $\textrm{FC}_m(q,r)$ is the Fuss-Catalan number given by $$\textrm{FC}_m(q,r)\equiv \frac{r}{mq+r} {mq+r \choose m}\,.$$} FC$_{n/2-1}(3,1)$, correspond to $\phi^4$ trees participating in the amplitude. This means that every extended diagram contains at least one $\phi^4$ tree. Therefore, one obtains the amplitude by summing over triangulations.

A triangulation of an extended diagram is given by adding new chords $\Omega_{ab}$ in the following way:
\begin{itemize}
    \item A triangulating chord $\Omega_{ab}$ must start, reading from left to right, from the label $a$ on the right of a lower chord of the meadow or on the left of the upper chord of the meadow, and must end on the label $b$ on the left of a lower chord of the meadow or on the right of the upper chord of the meadow.
    \item A triangulating chord must surround at least another chord in the diagram. 
    \item The triangulation ends when only regions with 4 delimiting chords and lines are left. 
\end{itemize}
Once the triangulation of an extended diagram is done, one can easily compute its contribution to the $\phi^4$ amplitude in the following way:
\begin{itemize}
    \item Every already existing chord $\theta_{ab}$ in the diagram surrounding another chord, with the exception of $\theta_{21}$, and every triangulating chord $\Omega_{ab}$, correspond to a propagator of the form $1/X_{a,b}$, where $X_{a,b}=X_{b,a}$.
    \item The contribution of a triangulated diagram to the amplitude is given by multiplying all propagators involved.
\end{itemize}
In the following subsection we show why every triangulated diagram is in bijection with a $\phi^4$ tree. It is also worth mentioning that the distinction between two kinds of chords, $\theta_{ab}$ and $\Omega_{ab}$, will become more clear in section \ref{secp4cubic}.

For example, let us consider the first diagram in figure \ref{extp4n6m}. The two possible triangulations of this diagram are shown in figure \ref{trin61}, where the triangulating chords are shown in red.

\begin{figure}[H]
\includegraphics[width=15cm]{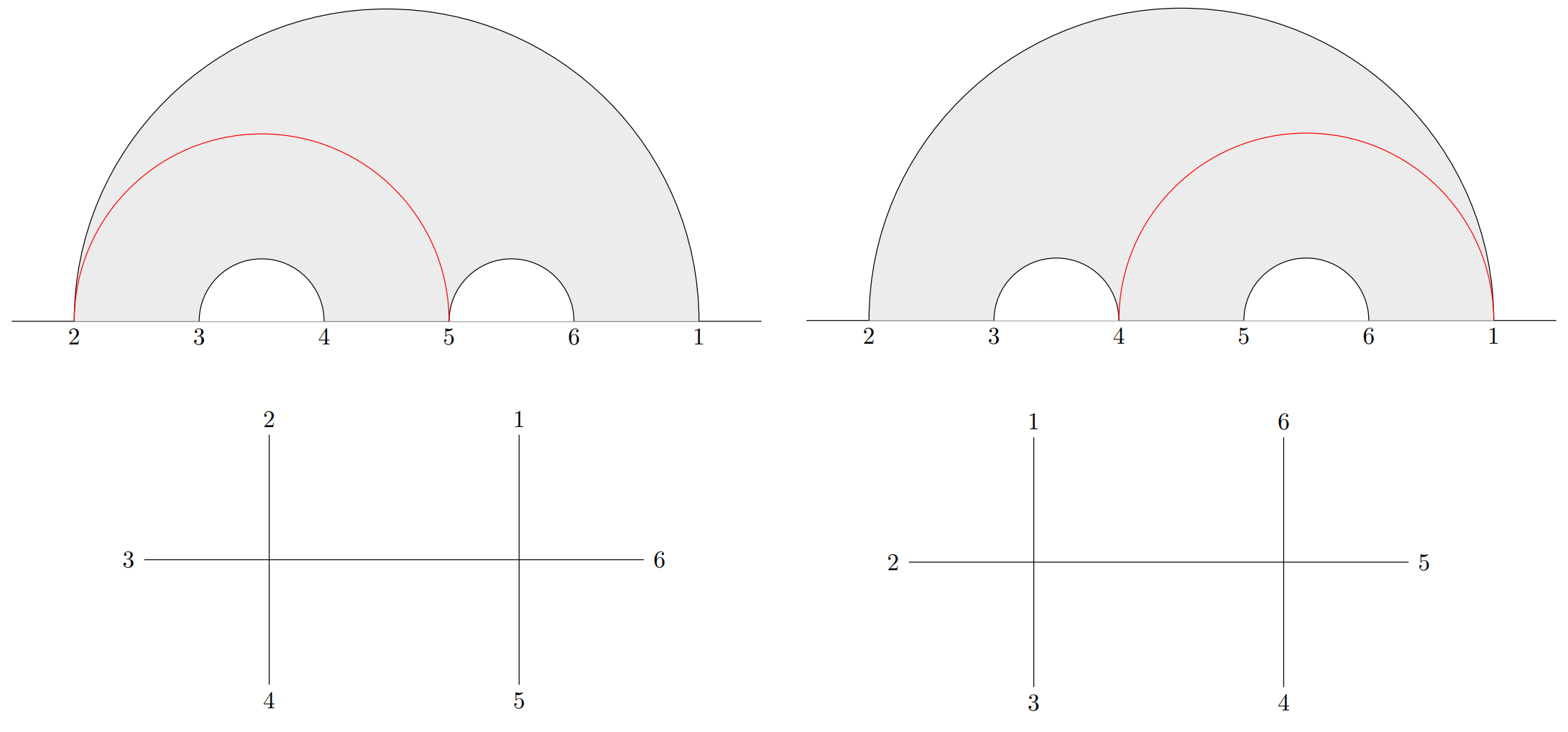}
\centering
\caption{Triangulations of an extended diagram for $n=6$ and the $\phi^4$ tree they correspond to, shown below.}
\label{trin61}
\end{figure}
The diagram on the left has one triangulating chord $\Omega_{25}$, thus it corresponds to a propagator of the form $1/X_{2,5}$, which is the contribution to the amplitude from the tree below. The diagram on the right has one triangulating chord $\Omega_{41}$, and it therefore contributes as $1/X_{1,4}$, the same as the tree below. The sum of the two triangulations gives $A_6^{\phi^4:R_1}$ in \eqref{amp46reg}.

The second diagram in figure \ref{extp4n6m} is already completely triangulated, i.e.\ it is only made of 3-point meadows. This diagram contains a chord of the form $\theta_{36}$ and it therefore contributes as $1/X_{3,6}$, which is the contribution of the remaining tree in the $\phi^4$ amplitude, i.e. $A_6^{\phi^4:R_2}$ in \eqref{amp46reg}. This is the only valid chord contributing to the amplitude in the diagram, since the remaining chords are $\theta_{21}$ and one that does not surround any other chord. Therefore, one gets the total amplitude by summing over triangulated extended diagrams, which gives the same result as summing over regions like in \eqref{amp46reg}.

Now let us move on to a more interesting example, when $n=8$. Consider the first extended diagram in figure \ref{extp4n8m}, i.e. the one with a single meadow. It has five possible triangulations, which are represented in figure \ref{trin81}, again with the triangulating chords in red, together with their associated $\phi^4$ trees below.

\begin{figure}[H]
\includegraphics[width=15cm]{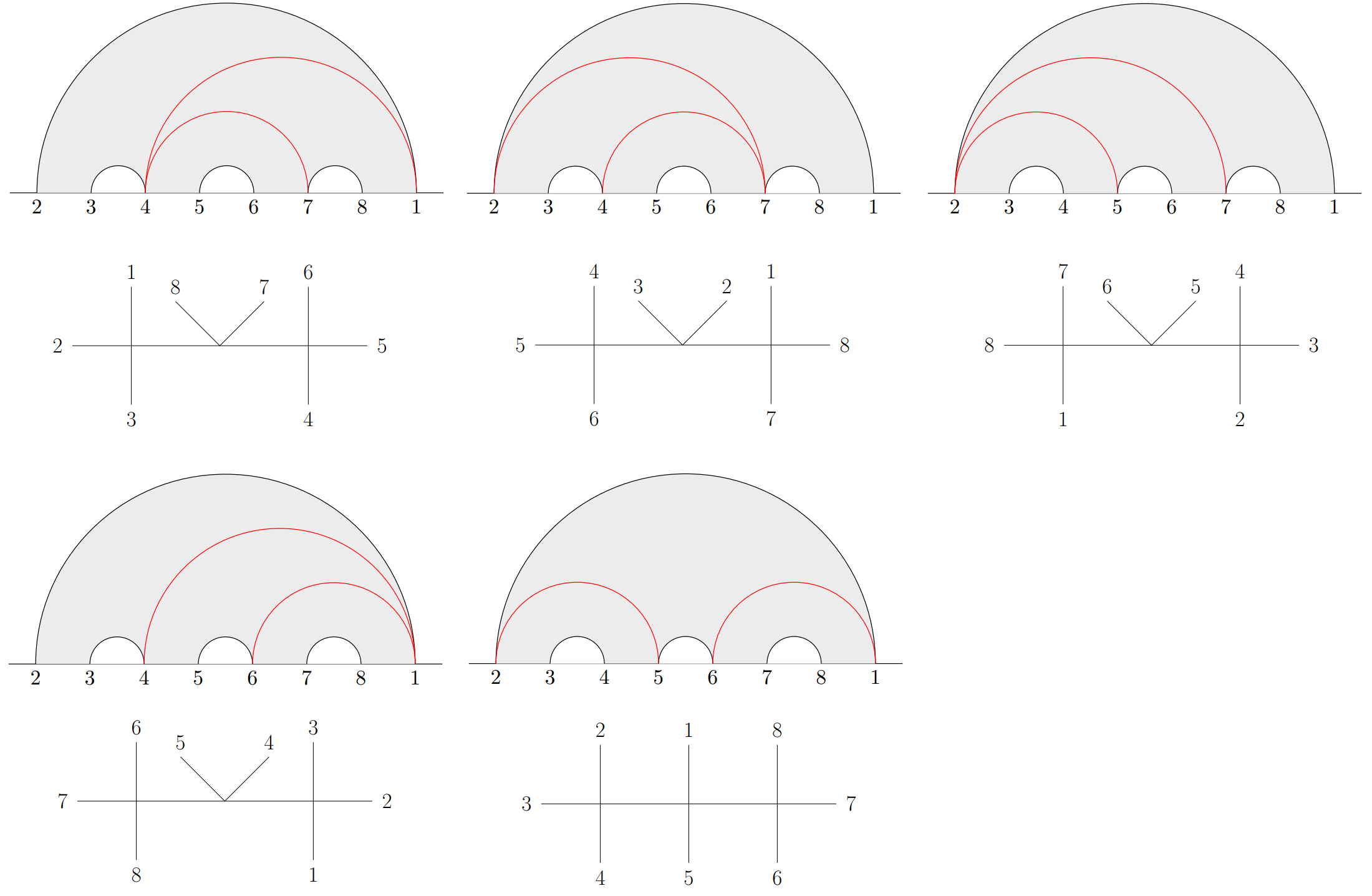}
\centering
\caption{Triangulations of an extended diagram for $n=8$ and their corresponding $\phi^4$ tree.}
\label{trin81}
\end{figure}
For example, the triangulated diagram on the bottom-right of the figure has two chords of the form $\Omega_{25}$ and $\Omega_{61}$ which correspond to propagators of the form $1/X_{2,5}$ and $/X_{1,6}$, respectively. Therefore, this triangulation contributes as
$$\frac{1}{X_{2,5}X_{1,6}}\,,$$
the same as the tree below. Summing over these five triangulations gives $A_8^{\phi^4:(1)}$ in \eqref{qmet}. The attentive reader might have noticed that this extended diagram only consists of a 5-point meadow, and the five triangulations are in bijection with those of a pentagon. This relation
will become clear in section \ref{secp4cubic}.

\begin{figure}[H]
\includegraphics[width=15cm]{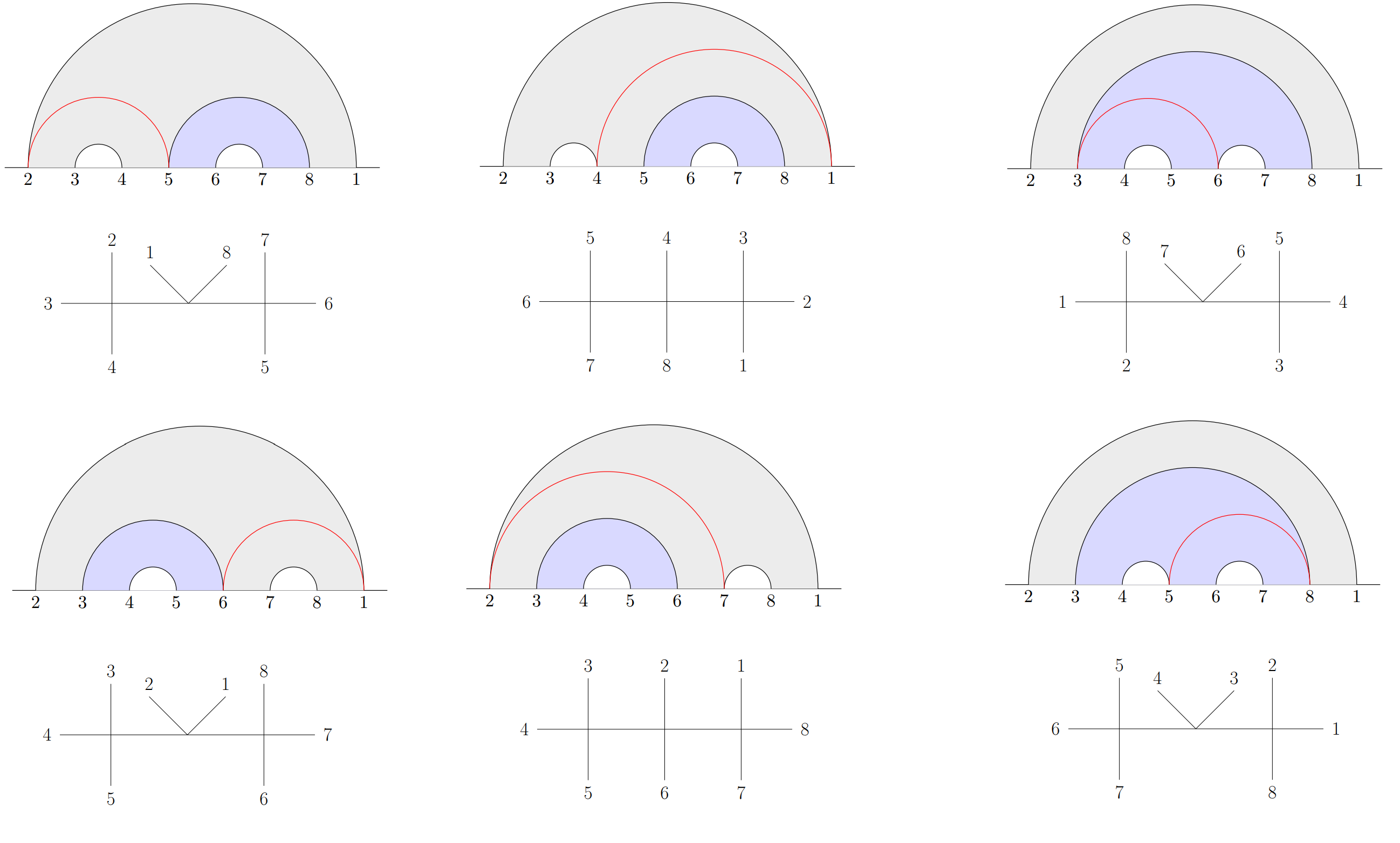}
\centering
\caption{Triangulations for the rest of the extended diagrams in $n=8$, and their corresponding $\phi^4$ tree shown below, with the exception of the extended diagram that only has 3-point meadows.}
\label{trin82}
\end{figure}

Figure \ref{trin82} shows the triangulations for other extended diagrams in $n=8$, with their corresponding trees. For example, the second extended diagram in figure \ref{extp4n8m} has a 4-point meadow and a 3-point meadow. Its triangulations are the first two in the first row of figure \ref{trin82}, which are those of a square. In this case, the first triangulated diagram has two contributing chords $\theta_{58}$ and $\Omega_{25}$, which give rise to a contribution of the form $1/X_{5,8}\times1/X_{2,5}$. The second triangulated diagram has two contributing chords $\Omega_{41}$ and $\theta_{58}$, which give rise to a contribution of the form $1/X_{1,4}\times1/X_{5,8}$. Therefore, the contribution from this extended diagram is given by 
$$\frac{1}{X_{5,8}}\left(\frac{1}{X_{1,4}}+\frac{1}{X_{2,5}}\right)\,,$$
which corresponds to $A_8^{\phi^4:(2)}$ in \eqref{qmet} and is in bijection with a double-ordered 5-point cubic amplitude. One can perform a similar analysis for the rest of the diagrams in figure \ref{trin82}.

Notice that the last diagram in figure \ref{extp4n8m} is already completely triangulated, this is why we have not included it in figure \ref{trin82}, but it of course corresponds to the remaining tree, giving rise to $1/X_{3,8}\times1/X_{4,7}$, which corresponds to $A_8^{\phi^4:(5)}$ in \eqref{qmet}. The sum of all the 12 triangulations, which are in bijection with the 12 $\phi^4$ trees, gives the total amplitude found by summing the terms in \eqref{qmet}.

As a last example, let us look at the triangulated extended diagram in figure \ref{p4n16}, the same presented in section \ref{sec1}, which corresponds to a $\phi^4$ tree in the $n=16$ case.

\begin{figure}[H]
\includegraphics[width=15cm]{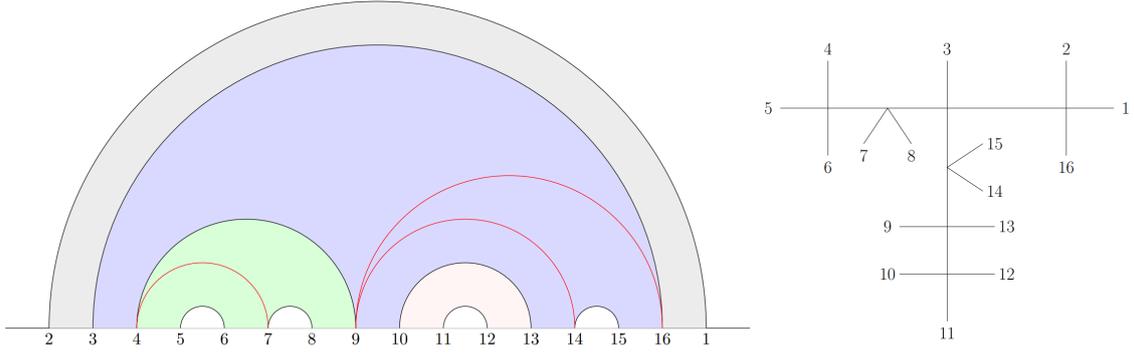}
\centering
\caption{Triangulation of an extended diagram for $n=16$ and their corresponding $\phi^4$ tree. The triangulating chords are shown in red, and the colored regions correspond to different meadows of the original extended diagram.}
\label{p4n16}
\end{figure}
This triangulation has chords of the form $\theta_{3,16}$, $\theta_{49}$, $\theta_{10,13}$, $\Omega_{47}$, $\Omega_{9,14}$ and $\Omega_{9,16}$ which correspond to propagators of the form $1/X_{3,16}$, $1/X_{4,9}$, $1/X_{10,13}$, $1/X_{4,7}$, $1/X_{9,14}$ and $1/X_{9,16}$, respectively. Therefore, this triangulation contributes as 
$$\frac{1}{X_{3,16}X_{4,9}X_{10,13}X_{4,7}X_{9,14}X_{9,16}}\,$$
the same as the tree shown in figure \ref{p4n16}. In fact, this triangulation comes from an extended diagram with two 3-point meadows --colored in grey and pink--, a 5-point meadow --colored in blue-- and a 4-point meadow colored in green. The triangulation is equivalent as that of a square times that of a pentagon, times three more propagators, giving the 6 propagators in the term. This comes from the fact that this extended diagram is in bijection with a 9-point double ordered amplitude, as wee will explain in section \ref{secp4cubic}.

\subsection{Bijection Betwen Triangulations of Extended Diagrams and $\phi^4$ Trees}\label{sec bijection}
Let us now write the main result of this construction more formally.
\begin{thm}
    For each $n\geq 2$ the set of triangulations of extended diagrams is in bijection with the set of 4-valent planar trees with $n$ leaves with leaves labelled cyclically $1,2,\ldots, n$ according to the planar structure.  Consequently, the extended diagrams give a partition of the $\phi^4$ amplitude by collecting tree-level Feynman diagrams according to the extended diagram whose triangulation gives the tree under this bijection.
\end{thm}

\begin{figure}[H]
\includegraphics[width=15cm]{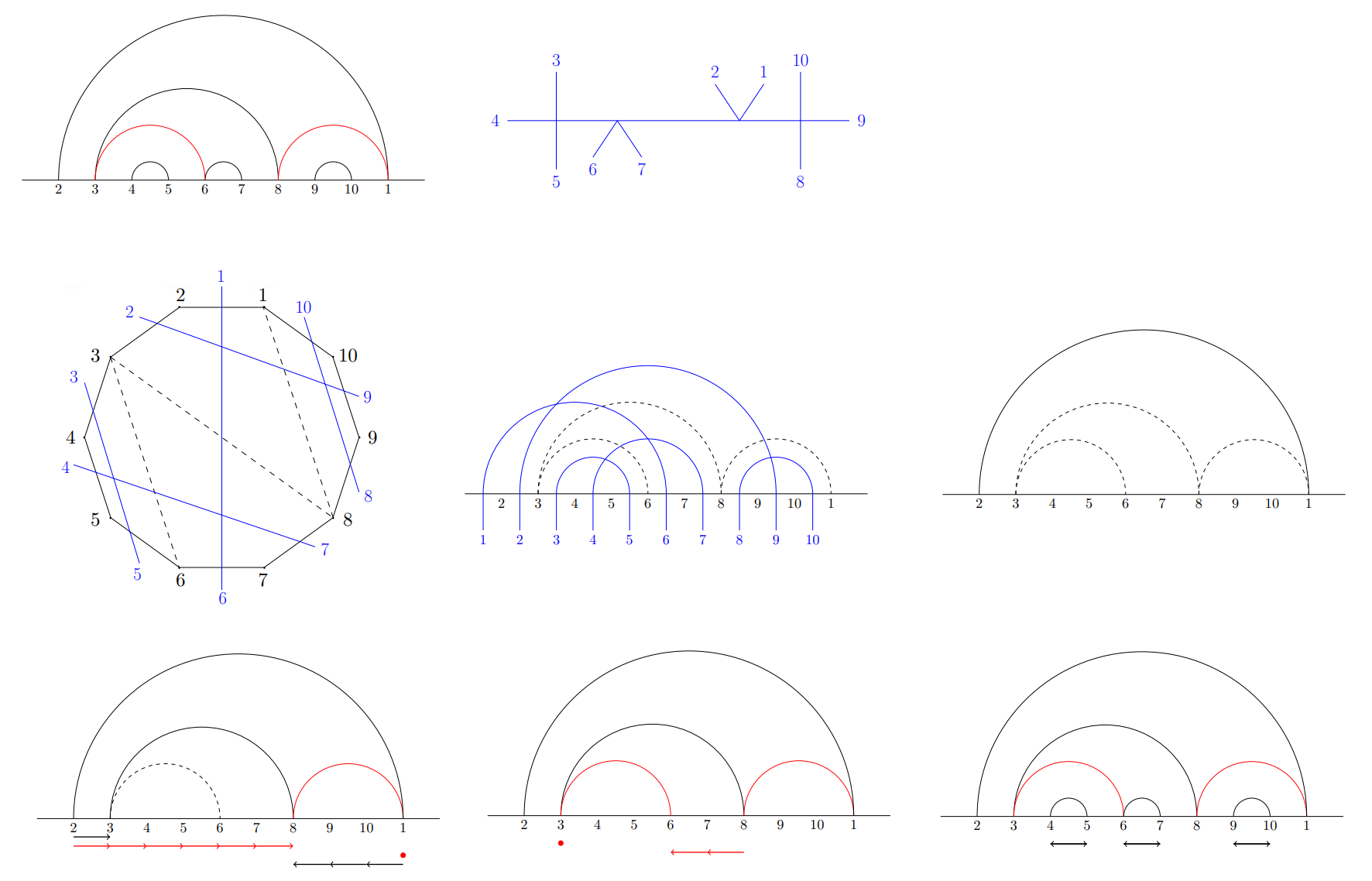}
\centering
\caption{Sequence of steps presented in the proof. \textit{First row}: triangulated extended diagram for $n=10$ together with the $\phi^4$ tree that it corresponds to. \textit{Second row}: the quadrangulation of the 10-gon that is associated to the tree, also represented after cutting edge 1. On the right we have included a black chord from 2 to 1. Uncolored chords are represented as dashed lines. \textit{Third row}: we color the chords using the procedure explained in the proof. Sequences of black arrows count an odd number of steps, while sequences of red arrows count an even number of steps, including zero steps (represented as a dot). Double sided arrows indicate pairs of consecutive vertices with no black chords as endpoints, thus leading to black chords joining them.}
\label{picproof}
\end{figure}

\begin{proof}
    We will prove the bijection by exhibiting the function, the purported inverse, and then checking the purported inverse is a two sided-inverse, which implies the injectivity and surjectivity of both functions.

    The function from triangulations of extended diagrams to leaf-labelled planar 4-valent trees is as described above.  It is clear that this function is well-defined and maps to the correct set with $n$ as described in the statement.

    The key step is to describe the inverse function.  Dual to a planar 4-valent tree with $n$ leaves cyclically labelled $1, 2, \ldots, n$ is a quadrangulation of an $n$-gon with the edges cyclically labelled $1,2,\ldots, n$.  Shifting each label clockwise to the neighbouring vertex and then cutting the edge between vertex $1$ and $2$ (which before the label shift had been edge $1$), we can draw the remaining edges of the $n$-gon horizontally on a line with vertices labelled $2, 3, \ldots, n, 1$ from left to right and the internal edge of the quadrangulation becoming arcs.  This diagram is the triangulated extended diagram with two exceptions.  First, the chords of length 2 are missing and second the chords which will be internal chords in the extended diagram must be distinguished from the triangulating chords.

    This will be done by a recursive parity rule.  For convenience we will call the chords of the extended diagram (whether internal or external) \emph{black} and the triangulating chords \emph{red}, lining up with the conventions in our figures.

    First add the chord from $2$ to $1$ as a black chord.  Call $2$ the left root and $1$ the right root.  Count inwards from the left root until reaching a vertex incident to a chord that is not yet colored.  If an even number of steps (including potentially no steps) were taken to reach this vertex then color all the uncolored chords at this vertex red and repeat.  If an odd number of steps were taken to reach this vertex then color the outermost chord at this vertex black, jump to the other end of this black chord, counting this as one step, and continue.  Stop when the right root is met. Likewise count inwards from the right root until reaching a vertex incident to a chord that is not yet colored, coloring and proceeding as in the other case but with the stopping condition being if the right root is met.

    We claim that this process is unambiguous in the sense that if a chord is reached from both directions then it will get the same color from each side.  The claim holds because in a quadrangulation the two vertices at the ends of any chord must have opposite parity, but our left and right root also had opposite parity, so counting from each one the parity to the nearest ends of another chord will be the same.  Furthermore, jumping from one end of a chord to another changes the parity, so counting it as one step when jumping across a black chord preserves parity considerations.

    For each chord colored black in the process above, repeat the process on the region inside this chord with the leftmost endpoint of the chord as the left root and the rightmost endpoint of this chord as the left root.

    Recursively this colors all the chords either red or black.  Furthermore, the black chords have a nested structure in which the black chords immediately inside a particular black chord have their left and right endpoints an odd number of steps from the left and right endpoints of the surrounding black chord.  This implies two things, first no black chords share an endpoint, and second consecutive sequences of vertices with no endpoint of a black chord must be of even length.  For each such sequence of vertices, add a black chord to the first and second vertex, the third and fourth vertex, and so on.

    This describes the purported inverse function.  Now we will make a few remarks about it.  By the observations of the previous paragraph, the black chords define a chord diagram, and by construction there is a chord from $2$ to $1$ so this is an extended diagram in the sense of this paper.  By the parity conditions on red edges, the red edges are valid triangulation chords and the result is a quadrangulation so all the meadows are 3-point meadows.  Therefore, the purported inverse function does define a triangulated extended diagram.

    It remains to show that the functions are inverses in both directions.  Beginning with a tree, taking the purported inverse as described above and then taking the original funtion back to trees clearly gives the identity because the quadrangulation dual to the tree remains intact throughout, with the chords of length 2 not affecting the tree in the original function.  The other direction is a bit more intricate.  Begin with a triangulated extended diagram.  Taking the original function to obtain a tree and then applying the purported inverse function, we have the same underlying quadrangulation since it is simply the dual to the tree.  What we need to check is that the coloring of the chords is consistent with the original and that the size 2 chords are added in a way that is consistent with the original.  Since the internal black chords determine the length two chords, it suffices to check that the internal chords are correctly colored.  This we prove following the recursive structure of the map (which could be set up as a formal induction should the reader care to). The other black chords of the meadow touching the chord from $2$ to $1$ must all have an odd left endpoint and an even right endpoint, and so are correctly colored black in the first iteration of the algorithm defining the purported inverse, while all the other chords in the meadow are correctly colored red.  Now looking at each meadow adjacent to this meadow notice that the parity of left and right endpoints has flipped, both of the outer chord and the inner ones.  This leaves the relative parity unchanged and so the algorithm correctly colors the chords in this meadow and so on.

    This proves that the purported inverse really is a two-sided inverse to the original map and hence completes the proof of the theorem.
\end{proof}

In this section we have seen how to compute $\phi^4$ amplitudes just by triangulating the extended diagrams that characterize the regions in $\textrm{Trop}^+G(2,n)$ that turn into distributions in the global Schwinger formula for $\phi^4$. In the next section we show how to analogously compute $\phi^p$ amplitudes for general $p$, by triangulating extended non-crossing $(p-2)$-chord diagrams. But before finishing with this section, we proceed to proof Conjecture \ref{conj1}.

\subsection{Proof of Conjecture~\ref{conj1}}\label{proof}

The aim of this section is to prove Conjecture~\ref{conj1}.
We need some preliminary definitions and lemmas.

\begin{defn}\label{def ki li}
  Let $x_0, x_1, \ldots, x_{n-3}$ be totally ordered.  Write 
  \begin{align*}
       k_i & = \max\{j: x_{i-\ell} \geq x_i \text{ for } 0 \leq \ell \leq i \}, \\
       \ell_i & = \max\{j: x_{i+\ell} \geq x_i \text{ for } 0 \leq \ell \leq i\}.
\end{align*}
\end{defn}

The idea of $k_i$ and $\ell_i$ is as follows.  Write the string $x_0x_1\cdots x_{n-3}$. Then immediately before $x_i$ there are $k_i$ larger elements (after that either we run out of numbers or find a smaller one) and immediately after $x_i$ there are $\ell_i$ larger elements (after that either we run out of numbers or find a smaller one).

\begin{defn}\label{def RC}
    Given a non-crossing chord diagram $C$ on $2, 3, \ldots, n, 1$ let $R_C$ be the region corresponding to $C$ according to Conjecture~\ref{conj1}, that is, let $R_C$ be the region in $\mathbb{R}^{n-2}$ with $x_{a-3}=x_{b-3}$ for every chord $\theta_{ab}$ of $C$ and with $x_{a-3}=x_{b-3} < x_{c-3}=x_{d-3}$ whenever chord $\theta_{ab}$ surrounds chord $\theta_{cd}$ in $C$.
\end{defn}

The key to the proof is to understand the behaviour of the explicit formula for $H(x)$ given in \eqref{genH}.  This formula is an alternating sum of terms of the form $\min\{x_a, x_{a+1}, \ldots, x_b\}$.  We refer to each such term as a \emph{window} of length $b-a+1$ and if $x_i$ is the minimum of the $x$s in that window, then we say that window \emph{contributes} $x_i$.  The contribution of $x_i$ to $R(x)$ as a whole is the sum, with signs and factors of $2$ as in \eqref{genH}, of the terms to which it contributes.

\begin{lemma}\label{lem ki li}
  Let $x_0, x_1, \ldots, x_{n-3}$ be totally ordered. With $k_i$ and $\ell_i$ as in Definition~\ref{def ki li}, then the contribution of $x_i$ to $R(x)$ is
  \[
  \begin{cases}
    x_i & \text{if $k_i$, $\ell_i$ are both even} \\
    -x_i & \text{otherwise}
  \end{cases}
  \]
\end{lemma}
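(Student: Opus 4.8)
The plan is to fix the total order once and for all (so that the values $x_0,\dots,x_{n-3}$ are distinct) and then simply read off the coefficient of each $x_i$ in the resulting linear form. Once the order is fixed, every term $\min(x_a,\dots,x_b)$ equals the unique smallest entry of the window $\{a,a+1,\dots,b\}$, so $H(x)$ (the function called $R(x)$ in the statement) collapses to a linear combination $\sum_i c_i x_i$ with integer coefficients, and ``the contribution of $x_i$'' is exactly $c_i x_i$. From the first sum each $x_i$ contributes $+1$, and from the double sum $x_i$ contributes $2(-1)^{b-a}$ for every pair $(a,b)$ with $a<b$ whose window is minimised at $i$. Thus I would write
\[
c_i \;=\; 1 + 2\!\!\sum_{\substack{a<b\\ x_i=\min(x_a,\dots,x_b)}}\!\!(-1)^{b-a},
\]
and the whole lemma reduces to evaluating this inner sum.

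The key step is to describe exactly which windows are minimised at $i$. Such a window must contain the index $i$, so $a\le i\le b$, and every other entry must exceed $x_i$. Reading $k_i$ and $\ell_i$ as the maximal-run lengths described after \cref{def ki li} — so that $x_{i-1},\dots,x_{i-k_i}$ are consecutively larger than $x_i$ while $x_{i-k_i-1}$ is smaller or absent, and symmetrically on the right — the admissible left endpoints are precisely $a\in\{i-k_i,\dots,i\}$ and the admissible right endpoints are precisely $b\in\{i,\dots,i+\ell_i\}$. Any such pair with $a<b$ automatically meets the index bounds of the double sum, and the only pair to discard is the diagonal $(a,b)=(i,i)$.

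With this description the inner sum factorises: substituting $p=i-a$ and $q=b-i$ gives
\[
\Biggl(\sum_{p=0}^{k_i}(-1)^{p}\Biggr)\Biggl(\sum_{q=0}^{\ell_i}(-1)^{q}\Biggr)-1,
\]
where the $-1$ removes the forbidden term $(i,i)$. Each alternating sum equals $1$ when its upper limit is even and $0$ when it is odd, so the product is $1$ exactly when $k_i$ and $\ell_i$ are both even and $0$ otherwise. Feeding this back gives $c_i=1+2(1-1)=1$ in the both-even case and $c_i=1+2(0-1)=-1$ in every other case, which is precisely the claim.

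The main obstacle is not any hard calculation but the bookkeeping in the middle paragraph: one must verify that the maximal-run reading of $k_i$ and $\ell_i$ captures all and only the windows minimised at $i$. This is also where I would flag that the literal set-maximum $\max\{j:x_{i-j}>x_i\}$ in \cref{def ki li} can differ in parity from the run length (for instance for the string $6,4,1,5,3$ at the last position), and that it is the run length from the explanatory text that makes the argument go through. The degenerate windows touching the ends of the string, where $k_i=0$ or $\ell_i=0$, together with the careful exclusion of the single diagonal pair, are the remaining points to check; once the window characterisation is pinned down the parity computation is immediate.
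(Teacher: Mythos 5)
Your proposal is correct, and while it identifies exactly the same set of windows as the paper, it evaluates the resulting signed sum by a genuinely different and cleaner route. The shared core is the observation that $x_i$ is the minimum of the window $\{a,\dots,b\}$ precisely when $i-k_i\le a\le i\le b\le i+\ell_i$, with $k_i,\ell_i$ read as run lengths. From there the paper stratifies by window length $j$: it counts $\min\{j-1,\ell_i\}+\min\{j-1,k_i\}-(j-2)$ windows of length $j$, assumes $k_i\le\ell_i$ without loss of generality, splits the sum over $j$ into three ranges, and finishes with a four-way parity case check. You instead parametrize each window by the offsets $p=i-a$ and $q=b-i$, so the sign factors as $(-1)^{b-a}=(-1)^p(-1)^q$ and the whole sum collapses to $\bigl(\sum_{p=0}^{k_i}(-1)^p\bigr)\bigl(\sum_{q=0}^{\ell_i}(-1)^q\bigr)-1$, each factor being $1$ or $0$ according to the parity of its upper limit. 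This removes the WLOG, the range-splitting, and all case analysis, and makes it transparent that only the parities of $k_i$ and $\ell_i$ can matter; the paper's length-stratified computation obscures exactly this point. Your flag about \cref{def ki li} is also apt: the literal set maximum $\max\{j:x_{i-j}>x_i\}$ can differ, even in parity, from the run length (your example $6,4,1,5,3$ at the last position gives $4$ versus $1$), and it is the run-length reading from the paper's explanatory sentence that both your window characterization and the paper's own count of length-$j$ windows actually rely on.
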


\begin{proof}
  The number of length $j$ windows which contribute $x_i$ to $R(x)$ is the number of length $j$ windows beginning at or after $\max\{i-j+1, i-k_i\}$ and ending at or before $\min\{i+j-1, i+\ell_i\}$, which is
  \begin{align*}
    & \min\{i+j-1, i+\ell_i\} - \max\{i-j+1, i-k_i\} - (j-2) \\
    & = \min\{j-1, \ell_i\} - \max\{-j+1, -k_i\} - (j-2) \\
    & = \min\{j-1, \ell_i\} + \min\{j-1, k_i\} - (j-2).
  \end{align*}
  
  Fix $i$ and write $k$ and $\ell$ for $k_i$ and $\ell_i$ respectively, to keep the notation lighter.  Reversing the order if necessary, we can, without loss of generality suppose $k\leq \ell$.  Then we directly compute the contribution of $x_i$ to $R(x)$ to be
  \begin{align*}
    & x_i + 2x_i\sum_{j=2}^{k+1} (-1)^{j-1}j + 2x_i\sum_{j=k+2}^{\ell+1}(-1)^{j-1}(k+1) + 2x_i\sum_{j=\ell+2}^{k+\ell+1}(-1)^{j-1}(k+\ell-j+2) \\
    & = x_i + 2x_i\sum_{j=2}^{k+1} (-1)^{j-1}j + 2x_i(k+1)\sum_{j=2}^{\ell-k+1}(-1)^{j+k-1} + 2x_i\sum_{j=2}^{k+1}(-1)^{j+\ell-1}(k+\ell-j+2) \\
    & = x_i + 2x_i\sum_{j=2}^{k+1}\left((-1)^{j-1} + (-1)^{j+\ell}\right) + 2x_i(k+1)\sum_{j=2}^{\ell-k+1}(-1)^{j+k-1} + 2x_i(k+2)\sum_{j=2}^{k+1}(-1)^{j+\ell-1} \\
    & = \begin{cases}
      x_i + 0 + 0 + 0 & \text{if $\ell$ even and $k$ even} \\
      x_i + 0 + 2x_i(k+1) - 2x_i(k+2) & \text{if $\ell$ even and $k$ odd} \\
      x_i + 2x_i\cdot 2\cdot \frac{k}{2} - 2x_i(k+1) + 0 & \text{if $\ell$ odd and $k$ even} \\
      x_i - 2x_i\cdot 2 \cdot \frac{k+3}{2} + 0 + 2x_i(k+2) & \text{if $\ell$ odd and $k$ odd}
    \end{cases} \\
    & = \begin{cases}
      x_i & \text{if $\ell$ even and $k$ even} \\
      -x_i & \text{if $\ell$ even and $k$ odd} \\
      -x_i & \text{if $\ell$ odd and $k$ even} \\
      -x_i & \text{if $\ell$ odd and $k$ odd}
    \end{cases}
  \end{align*}
  as desired.
\end{proof}

\begin{defn}\label{def bounding cd}
  Given $x_0, x_1, \ldots, x_{n-3} \in \mathbb{R}$ we will call a non-crossing chord diagram on $\{3, 4, \ldots, n\}$ a \emph{bounding chord diagram} for these $x_i$s if there is some total order which refines the order structure of $x_0, x_1, \ldots, x_{n-3}$ as elements of $\mathbb{R}$ (that is, if any $x_i$ are equal, the total order chooses one to be larger and otherwise agrees with the order of the $x_i$ as elements of $\mathbb{R}$) such that, with respect to this total order, the following two properties hold:
  \begin{itemize}
  \item  Suppose $\theta_{ab}$ is a chord and $a<b$.  If $x_{a-3}<x_{b-3}$, then $k_{b-3}$ and $\ell_{b-3}$ are both even while $\ell_{a-3}$ is odd, while if $x_{a-3}>x_{b-3}$, then $k_{a-3}$ and $\ell_{a-3}$ are both even while $k_{b-3}$ is odd.
  \item  Whenever a chord $\theta_{ab}$ surrounds another chord $\theta_{cd}$ then $x_{a-3}, x_{b-3} < x_{c-3}, x_{d-3}$.
  \end{itemize}
\end{defn}
Note that the second condition on a bounding chord diagram is the same as the second condition in the conjecture.  The first condition is designed so that, using the notation of Lemma~\ref{lem ki li}, $k_{b-3}$ and $\ell_{b-3}$ are even and $\ell_{a-3}$ is odd for $x_{a-3}<x_{b-3}$ and analogously for $x_{a-3}>x_{b-3}$.

\begin{lemma}\label{lem bounding cd exists}
For any $x_0, x_1, \ldots, x_{n-3} \in \mathbb{R}$ there is at least one bounding chord diagram and the bounding chord diagrams are determined only by the order structure of $x_0, x_1, \ldots, x_{n-3}$ as elements of $\mathbb{R}$.
\end{lemma}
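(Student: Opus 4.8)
The statement has two parts: existence of a bounding chord diagram, and dependence only on the order structure. The second part is the easier one, and I would dispose of it first. Unwinding Definition~\ref{def bounding cd}, whether a fixed noncrossing chord diagram $D$ is bounding for $x_0,\dots,x_{n-3}$ is a predicate that refers to the data only through (i) the set of total orders refining the order structure, and (ii), for such a total order, the parities of the numbers $k_i,\ell_i$ of Definition~\ref{def ki li} together with the pairwise comparisons $x_a\lessgtr x_b$. All of these are functions of the order structure alone: two tuples with the same order structure admit exactly the same refining total orders, and for a common refinement the $k_i,\ell_i$ and the comparisons are literally identical. Hence the set of bounding chord diagrams is unchanged under any order-preserving reparametrisation of the $x_i$, which is the claim.

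For existence I would argue by induction on $n$ (even, so that there are an even number $n-2$ of points). Fix a total order refining the order structure; this also fixes a global minimum, at position $p$ say. The key structural fact is that the minimum cannot be enclosed: a chord surrounding $p$ would, by the second bullet of Definition~\ref{def bounding cd}, need an endpoint strictly smaller than $x_p$, which is impossible. Thus the chord at $p$ is outermost, and because $x_p$ is a strict minimum it acts as a barrier that no run of larger elements can cross; consequently deleting the outermost arch at $p$ leaves independent subproblems (its interior, and its two flanks) that do not interact through any chord. The direction of the arch is forced: the two sides of $p$ contain $p$ and $n-3-p$ elements, summing to the odd number $n-3$, so exactly one side has an odd count, and the requirement from the first bullet that the smaller endpoint $x_p$ have an odd run toward its partner forces the arch to point to that odd side (its run then equals the whole side and is odd). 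The same parity choice makes every resulting block even, as a perfect matching requires.

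It remains to choose the partner $q$ on the odd side and to recurse. The admissible partners are the running minima of that side (positions $q$ with $x_q$ smaller than everything strictly between $p$ and $q$, forced by nesting). Say the arch points right; those $q$ with $q-p$ odd automatically have $k_q$ even, since the run on the $p$-side of $q$ stops at the barrier $p$ and has length $q-p-1$, so the first bullet reduces to also making $\ell_q$ even. A short parity count shows such a $q$ exists: if none did, then the farthest running minimum would have even offset, and walking back, each running minimum in turn would be forced to even offset, contradicting that the nearest one has offset $1$. Granting this, I would verify by induction that the constructed arcs satisfy the first bullet while the nesting requirement holds for free: for any chord $\theta_{ab}$ its two endpoints are the two smallest values on $[a,b]$ (the minimum and its partner at the stage where that interval is isolated), so every surrounded chord has strictly larger endpoints. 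By Lemma~\ref{lem ki li} only the parities of $k_i,\ell_i$ enter, so the whole verification is parity bookkeeping.

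The main obstacle is the consistency of this recursion with the \emph{global} quantities $k_i,\ell_i$. When the minimum points right, the left flank and the interior are genuinely clean — their runs are trapped between two smaller endpoints — but a run inside the right flank can protrude leftward across the new arch, so the $k_i$ computed inside the subproblem need not share the parity of the true $k_i$; on small examples a locally legal choice of $q$ can thereby fail several levels down, so no purely local rule for $q$ suffices. I would resolve this by strengthening the induction hypothesis to carry, for each block, the parity of the number of larger elements protruding past each of its two ends (up to the next smaller element), and by phrasing both the selection of $q$ and the verification of the first bullet in terms of these decorated parities. Proving that the decorated recursion always admits a choice of $q$ compatible with the protrusion parities — equivalently, that the forced local data never contradict the global run parities — is the crux, and is where I expect essentially all of the work to lie.
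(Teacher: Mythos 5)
Your handling of the second claim is correct and is the same observation the paper makes: the conditions in \cref{def bounding cd} see only the refining total orders and the parities of the $k_i,\ell_i$ of \cref{def ki li}, all of which are order-theoretic data. The existence half, however, is not a proof. You anchor the recursion at the global minimum $x_p$, force the direction of its arch by parity, and restrict its partner $q$ to running minima with $q-p$ odd --- all fine --- but you then observe yourself that runs from the right flank protrude across the new arch by the odd amount $q-p$, so the flank's local $k$-parities disagree with the global ones, and you defer the repair (a strengthened induction hypothesis carrying ``protrusion parities,'' plus a proof that a compatible $q$ always exists) as ``the crux'' where ``essentially all of the work'' lies. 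That crux \emph{is} the lemma; what you have is a plan, not an argument. The paper's proof uses a different anchor designed to dodge exactly this: it inducts from the leftmost element $x_0$ (so $k_0=0$ for free) and splits on the parity of $\ell_0$. If $\ell_0$ is even the first chord is $\theta_{0,\ell_0+1}$; the removed block $x_0,\dots,x_{\ell_0+1}$ then has even size and its minimum sits at its right end, so a run protruding into it from the right either stops immediately or swallows it whole, and all parities survive the recursion. If $\ell_0$ is odd the first chord is $\theta_{0,i_j}$, where $i_j$ is the first prefix minimum of $x_1,\dots,x_{\ell_0}$ with $\ell_{i_j}$ even (it exists by a parity count much like yours), and the analogous parity invariance is claimed for the two subproblems.

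You should also know that the obstruction you flagged is not an artifact of your decomposition: it is real, and it is precisely where the paper's own odd case breaks. The paper asserts that for positions $i$ with $i_j+1\le i\le \ell_0$ the index $k_i$ grows by $0$ or by $i_j+1$ (even) in passing from the suffix subproblem to the whole sequence; but such elements satisfy $x_i>x_0$, so a protruding run stops \emph{at} $x_0$ without swallowing it, and the growth is $i_j$, which is odd. Concretely, for $(x_0,\dots,x_5)=(0,10,5,6,3,1)$ the paper's construction yields $\{\theta_{0,1},\theta_{2,3},\theta_{4,5}\}$, yet on the full sequence $k_4=3$ and $k_5=4$, so $\theta_{4,5}$ violates the first bullet of \cref{def bounding cd}; the diagram $\{\theta_{0,5},\theta_{1,2},\theta_{3,4}\}$ is bounding instead. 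So the lemma appears to be true, but both your sketch and the paper's proof are missing the same piece: a mechanism controlling how run-parities change across the removed block. Your proposal at least names this hole; it does not fill it.
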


\begin{proof}
  The definition of a bounding chord diagram only sees the order structure of the $x_i$ as elements of $\mathbb{R}$ so the second part of the lemma is immediate.

  For the first part the proof will be by induction. The base case with one chord is apparent.  For the inductive case, choose a total order extending the order structure of $x_0, \ldots, x_{n-3}$.  Using the notation of Definition~\ref{def ki li}, consider $x_0$. We have $k_0=0$.

\medskip

  \textbf{Case 1:} Suppose first that $\ell_0$ is even.  In this case put a chord from $3$ to $\ell_0+4$.

  Note that $x_{\ell_0+1}< x_0$ in the total order and $k_{\ell_0+1}=\ell_0+1$ which is odd so the first condition holds for this chord.  By construction there are an even number of $x_i$ inside the chord $\theta_{3, \ell_0+4}$ and all these $x_i$ inside this chord have larger values than $x_0$ so the values of $k_i$ and $\ell_i$ are the same whether considered on $x_0, \ldots, x_{n-3}$ or on $x_1, \ldots, x_{\ell_0}$, so we can apply the induction hypothesis to $x_1, \ldots, x_{\ell_0}$ to obtain a bounding chord diagram on $4, \ldots, \ell_0+3$.

  Now consider $x_i$ with $i>\ell_0+1$. The index $\ell_i$ is unchanged whether considered on $0, \ldots, n-3$ or on $\ell_0+2, \ldots, n-3$.  For the index $k_i$, if $x_i>x_{\ell_0+1}$ then $k_i$ is also unchanged between $0, \ldots, n-3$ and $\ell_0+2, \ldots, n-3$ while if $x_i<x_{\ell_0+1}$ then either $k_i$ is unchanged or it is increased by $\ell_0+2$.  In all cases the parity of $k_i$ is unchanged.  Apply the induction hypothesis to $x_{\ell_0+2}, \ldots, x_{n-3}$ to obtain a bounding chord diagram on $\ell_0+5, \ldots, n$.  Since the defining conditions on bounding chord diagrams only involves the $\ell_i$ and $k_i$ via their parity, these chords still satisfy the bounding condition on $x_0, \ldots, x_{n-3}$.

  All together this builds a bounding chord diagram on $x_0, \ldots, x_{n-3}$.

  \medskip

  \textbf{Case 2:} Now suppose $\ell_0$ is odd. Let $i_1, i_2, \ldots, i_m$ be the indices where minima within prefixes of $x_1, \ldots, x_{\ell_0}$ are achieved, that is, $x_{i_j}<x_1, \ldots, x_{i_j-1}$.  Note that $i_1=1$ since the prefix of size 1's minimum is its one element.  Then we have $k_{i_j} = i_j-1$ and $\ell_{i_j} = i_{j+1}-i_j-1$ with the convention that $i_{m+1}=\ell_0+1$.  If all the $\ell_{i_j}$ were odd then all the differences $i_{j+1}-i_j$ would be even and so summing these differences would give $i_{m+1} - i_1 = \ell_0+1-1 = \ell_0$.  But $\ell_0$ is odd, so cannot be the sum of even numbers.  Therefore at least one $\ell_{i_j}$ is even.  Let $j$ be minimal with $\ell_{i_j}$ even.  Then $k_{i_j} = i_j-1 = (i_2-i_1)+(i_3-i_2)+\cdots + (i_j-i_{j-1})$ which is a sum of even numbers by the minimality of $j$ and hence is itself even.   Therefore, we can put in the chord $\theta_{3,i_j+3}$ and this chord will satisfy the conditions of Definition~\ref{def bounding cd}.

  Apply the induction hypothesis to obtain a bounding chord diagram on $x_1, \ldots, x_{i_j-1}$. All the elements strictly between $x_0$ and $x_{i_j}$ are larger than $x_0$ and $x_{i_j}$ so the $k$s and $\ell$s for these elements are unchanged when viewed on $x_0, \ldots x_{n-3}$ and hence these chords continue to satisfy the condition of a bounding chord diagram when viewed on $x_0, \ldots, x_{n-3}$.

  Also apply the induction hypothesis to obtain a bounding chord diagram on $x_{i_j+1}, \ldots,$ $x_{n-3}$.  The $\ell$s for these elements are unchanged when viewed on $x_0, \ldots, x_{n-3}$.  For the $k$s, for elements beyond $x_{\ell_0}$ the parity is unchanged when viewed on $x_0, \ldots, x_{n-3}$ for the same reason as in the case when $\ell_0$ was even.  For elements between $x_{i_j+1}$ and $x_{\ell_0}$ (inclusive) either $k$ is unchanged or is increased by $i_j+1$ which is even as discussed above.  In all cases the parity of the $k$s and $\ell$s is unchanged and so these chords continue to satisfy the bounding condition on $x_0, \ldots, x_{n-3}$.

  This constructs a bounding chord diagram on $x_0, \ldots, x_{n-3}$.
\end{proof}

\begin{lemma}\label{lem no spurious 0s}
  For $x_0, x_1, \ldots, x_{n-3} \in \mathbb{R}$, $R(x)>0$ unless the $x_i$ have a bounding chord diagram $C$ with the additional property that for every chord $\theta_{ab}$, $x_{a-3}=x_{b-3}$, and in this case the point $(x_0, x_1, \ldots, x_{n-3})$ is within the region $R_C$.
\end{lemma}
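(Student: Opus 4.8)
The plan is to reduce everything to the single identity $R(x) = \sum_{\theta_{ab}\in D} |x_a-x_b|$, valid for \emph{any} bounding chord diagram $D$. By \cref{lem ki li} we may write $R(x) = \sum_i \epsilon_i x_i$, where $\epsilon_i = +1$ exactly when $k_i$ and $\ell_i$ are both even and $\epsilon_i=-1$ otherwise; this is legitimate because, once a total order is fixed, every window has a unique minimizing index, so summing the per-index contributions (including the length-one windows, which supply the $\sum_a x_a$ term) recovers $R(x)$. Fixing such an order and taking a bounding chord diagram $D$ from \cref{lem bounding cd exists} (a noncrossing perfect matching of $\{0,\dots,n-3\}$), the first condition of \cref{def bounding cd} says precisely that in each chord $\theta_{ab}$ the larger-valued endpoint has $\epsilon=+1$ and the smaller-valued endpoint has $\epsilon=-1$. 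Hence that chord contributes $\max(x_a,x_b)-\min(x_a,x_b)=|x_a-x_b|$ to $R(x)$, a tied pair contributing $0$ since the two real values agree however the order breaks the tie. Grouping $\sum_i \epsilon_i x_i$ by the chords of the perfect matching $D$ then gives the identity.

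From this identity $R(x)\ge 0$ is immediate, with equality if and only if every chord of $D$ satisfies $x_a=x_b$. If $R(x)>0$ the first alternative of the lemma holds, so suppose $R(x)=0$; then $D$ is a bounding chord diagram all of whose chords satisfy $x_a=x_b$, which is exactly the exceptional configuration. It remains to check that the point then lies in the region the conjecture attaches to $D$, i.e.\ that the nesting inequalities $x_a=x_b<x_c=x_d$ hold strictly, \emph{as real numbers}, whenever $\theta_{ab}$ surrounds $\theta_{cd}$. The second condition of \cref{def bounding cd} supplies these inequalities only with respect to the chosen total order, hence only as $\le$ between real values, so the real work is to upgrade $\le$ to strict $<$.

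This is the main obstacle, and it is genuine: for the all-equal point $(0,0,0,0)$ a poorly chosen order yields the nested diagram $\{\theta_{03},\theta_{12}\}$, whose region demands $x_0<x_1$ and so excludes the point. I would resolve it by breaking ties by position, setting $x_i\prec x_j$ whenever $x_i=x_j$ as reals and $i<j$, and running the construction of \cref{lem bounding cd exists} in this order. With ties broken this way the even branch of that construction cannot occur in the equality case, since it would create a chord $\theta_{p,\ell_p+1}$ with $x_{\ell_p+1}$ strictly real-smaller than $x_p$ (the run of larger elements ends there), contradicting that all chords of $D$ are equal; hence only the odd branch arises. For a chord $\theta_{pq}$ built by the odd branch, any interior index $c$ with $x_c=x_p$ as reals (equivalently $x_c=x_q$, since the chord is equal) would satisfy $x_c\prec x_q$ because $c<q$, contradicting the choice of $q$ as a prefix-minimum, which forces $x_q\prec x_c$. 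Thus every chord has all interior indices strictly real-larger than its endpoints, the nesting inequalities hold strictly as reals, and the point lies in the region of $D$ according to the conjecture, completing the equality case.
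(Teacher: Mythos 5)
Your proposal is correct, and its engine is the same as the paper's: both arguments decompose $R(x)$ over the chords of a bounding diagram, using \cref{lem ki li} together with the first condition of \cref{def bounding cd} to see that each chord contributes $|x_a-x_b|$, whence $R(x)\geq 0$ with equality exactly when every chord is an equality. The only difference in this part is the treatment of ties: the paper perturbs the tied values to agree with the total order, applies the chord-wise positivity, and passes to a limit, whereas you assign each window to its unique order-minimal index so that the signed decomposition $R(x)=\sum_i \epsilon_i x_i$ holds directly for the real values; that is cleaner but not substantively different.

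The genuine divergence is in the equality case, and there your extra work fills a gap that the paper's own proof leaves open. The paper finishes by asserting, for an \emph{arbitrary} bounding diagram all of whose chords are equalities, that ``both conditions defining the chord diagram of the conjecture hold.'' But the second condition of \cref{def bounding cd} gives the nesting inequalities only with respect to the auxiliary total order, hence only weakly as reals, while the conjecture's region demands them strictly. Your $(0,0,0,0)$ example shows the assertion can genuinely fail: with the order $x_0\prec x_3\prec x_1\prec x_2$ the nested diagram $\{\theta_{03},\theta_{12}\}$ is a bounding diagram whose chords are all equalities, yet the point is not in that diagram's region (it lies in the region of $\{\theta_{01},\theta_{23}\}$). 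So the lemma must be read existentially, and one must exhibit a well-chosen diagram. Your choice --- break ties by position and rerun the construction of \cref{lem bounding cd exists} in that order --- does this correctly: with positional tie-breaking the even branch always produces a chord whose endpoints are strictly unequal as reals, so it cannot occur once $R(x)=0$ forces all chords to be equalities, and the odd branch's prefix-minimum condition makes every interior index strictly real-larger than the chord's endpoints, which yields the strict nesting inequalities because nested chords have their endpoints interior to the surrounding chord. Your proof is therefore complete, and I would recommend incorporating exactly this tie-breaking argument into the paper's proof of the lemma.
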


\begin{proof}
  Take a bounding chord diagram for these $x_i$, which must exist by Lemma~\ref{lem bounding cd exists}.
If there are any equalities among the $x_i$ perturb their values slightly to agree with the total order corresponding to the bounding chord diagram.  For each chord $\theta_{ab}$ with $x_{a-3}<x_{b-3}$ then by Lemma~\ref{lem ki li} the contribution of $x_{a-3}$ and $x_{b-3}$ to $R(x)$ is $x_{b-3}-x_{a-3}>0$ (and analogously if $x_{a-3}>x_{b-3}$ then the contribution is $x_{a-3}-x_{b-3}>0$.  All $x_i$ belong to some chord so $R(x)$ is a sum of contribitions of each chord and these contribitions are all positive.  Now take limits to return to the original values of the $x_i$, which may have equalities.  Since the total order extended the original order structure on the $x_i$, in taking these limits the contribution of each chord remains non-negative and is zero precisely if $x_{a-3}=x_{b-3}$.  Therefore $R(x)>0$ unless $x_{a-3}=x_{b-3}$ for all chords $\theta_{ab}$, in which case $R(x)=0$ and both conditions defining the chord diagram of the conjecture hold, hence $(x_0, x_1, \ldots, x_{n-3})$ is within the region defined by this chord diagram.
\end{proof}

In a noncrossing chord diagram, say two chords are \emph{siblings} if they are both nested directly under the same chord.

\begin{lemma}\label{lem region is 0}
  Every noncrossing chord diagram defines a region where $R(x)=0$.
\end{lemma}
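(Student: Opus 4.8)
The plan is to reduce everything to \cref{lem no spurious 0s} by showing that the diagram itself is a bounding chord diagram for every point of its region. Fix a noncrossing chord diagram $D$, i.e.\ a noncrossing perfect matching of $\{0,\ldots,n-3\}$, and let $x$ be any point of the region defined by $D$ via the conjecture, so that $x_a=x_b$ for every chord $\theta_{ab}$ and $x_a=x_b<x_c=x_d$ whenever $\theta_{ab}$ surrounds $\theta_{cd}$. If I can produce a total order refining the order structure of $x$ with respect to which $D$ satisfies \cref{def bounding cd}, then $D$ is a bounding chord diagram all of whose chords have equal endpoints, and \cref{lem no spurious 0s} yields $R(x)=0$ immediately. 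Doing this for every point of the region shows that $D$ defines a region on which $R$ vanishes.

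To build the order I assign to each chord $\theta_{ab}$ a distinct real value $V_{ab}$, chosen order-isomorphic to the common chord values $x_a=x_b$ (breaking any remaining ties among sibling chords arbitrarily); nesting is then respected automatically, since a surrounded chord already carries a strictly larger value. I place the two endpoints of $\theta_{ab}$ at $V_{ab}-\epsilon$ and $V_{ab}+\epsilon$, with the left endpoint $a$ just below and the right endpoint $b$ just above, where $\epsilon$ is smaller than every gap between distinct values of $V$. By construction $x_a<x_b$ for every chord in this order, so only the first case of \cref{def bounding cd} ever occurs, and its second (nesting) condition holds because the $V$-gaps dominate $\epsilon$.

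It then remains to verify, for each chord $\theta_{ab}$ with $a<b$, that $k_b$ and $\ell_b$ are even while $\ell_a$ is odd, after which \cref{lem no spurious 0s} applies. Since $D$ is noncrossing, the positions strictly between $a$ and $b$ form a union of complete chords, so there are an even number $b-a-1$ of them and $b-a$ is odd; all of them are surrounded by $\theta_{ab}$ and hence carry larger values, so reading left from $b$ one passes exactly these $b-a-1$ positions before reaching $a$, giving $k_b=b-a-1$ even. Reading right from $b$, one meets the sibling subdiagrams of $\theta_{ab}$ in turn (and finally the right endpoint of the surrounding chord, if any); because each subdiagram attains its minimum value at its own two outer endpoints, one either traverses a whole sibling subdiagram—contributing an even number of positions—or halts at its left endpoint, never in its interior, so $\ell_b$ is a sum of even numbers and is even. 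Finally $\ell_a$ counts the $b-a-1$ interior positions, then $b$ itself (whose value $V_{ab}+\epsilon$ exceeds $x_a=V_{ab}-\epsilon$), and then exactly the run that computed $\ell_b$, since the only positions with value in the window $(V_{ab}-\epsilon,V_{ab}+\epsilon)$ are $a$ and $b$; hence $\ell_a=(b-a)+\ell_b$ is odd.

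The main obstacle is this parity bookkeeping, and within it the claim that the run of larger values immediately to the right of $b$ decomposes into complete sibling subdiagrams. This is precisely where the noncrossing structure and the notion of siblings enter: because each subdiagram's smallest value sits at its two outer endpoints, the reading head can only stop at a subdiagram boundary, forcing the run length to be even. Once the three parities are in hand, $D$ satisfies \cref{def bounding cd} with every chord having equal endpoints, and \cref{lem no spurious 0s} gives $R(x)=0$, completing the proof.
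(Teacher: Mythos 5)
Your proposal is correct and takes essentially the same route as the paper's own proof: perturb a point of the region to a total order in which each chord's right endpoint becomes the least element above its left endpoint, verify the three parities ($k_b$, $\ell_b$ even, $\ell_a$ odd) using the noncrossing and sibling structure, and conclude via \cref{lem no spurious 0s}. The differences are cosmetic (your symmetric $V_{ab}\pm\epsilon$ placement versus the paper's one-sided shifts---just take $\epsilon$ smaller than half of each gap, and note ties may need breaking between non-sibling chords as well), and your identification of the element terminating the $\ell_b$ run as the \emph{right} endpoint of the surrounding chord is actually more accurate than the paper's wording.
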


\begin{proof}
  Fix a noncrossing diagram $C$.  Let $x_0, \ldots, x_{n-3}$ be in the region $R_C$. The claim is that the noncrossing diagram is a bounding diagram for $x_0, \ldots, x_{n-3}$.  The second condition is the same between the conjecture and the definition of a bounding diagram, so we only need to check the first condition 

  Choose a total order extending the order structure on $x_0, \ldots, x_{n-3}$
  as follows.  Let $\epsilon$ be smaller than the difference between any two nonequal $x_i$.  If multiple chords of the chord diagram have their $x_i$ equal add $k\epsilon/n$ to the $x_i$s corresponding to both ends of the $k$th chord whose variables had this value from left to right.  Then for each chord of the chord diagram add $\epsilon/(2n)$ to the variable corresponding to its right end point.  By this construction we get a total order such that for each chord $\theta_{ab}$ of the chord diagram with $a-3<b-3$ we have $x_{a-3}<x_{b-3}$ and $x_{b-3}$ is the least element of the total order which is larger than $x_{a-3}$.

  For any chord $\theta_{ab}$ with $a-3<b-3$, consider $k_{a-3}, k_{b-3}, \ell_{a-3}, \ell_{b-3}$.  Note that all chords in a noncrossing chord diagram are of even length.  By the choice of the total order $k_{b-3} = b-a-2$ which is even.  $\ell_{b-3}$ must also be even, since moving rightwards, the first element we encounter which is less than $x_{b-3}$ is either the left endpoint of a sibling of $\theta_{ab}$ (anything nested within a sibling is larger than the left endpoint of the sibling) or is the left endpoint of the chord immediately above $\theta_{ab}$, so the number of elements passed on the way is the entirety of 0 or more siblings and the chords they contain, which is even.  Finally, $\ell_{a-3} = \ell_{b-3}+k_{b-3}+1$ which is odd since we chose our total order so that no elements are between $x_{a-3}$ and $x_{b-3}$.  Therefore $\theta_{ab}$ satisfies the definition of a bounding diagram for $x_0, \ldots, x_{n-3}$, but this holds for all chords, and so this chord diagram is a bounding chord diagram.

  Hence, by the second part of Lemma~\ref{lem no spurious 0s} $R(x)=0$ for this point $x_0, \ldots, x_{n-3}$, but this was an arbitrary point in the region so $R(x)=0$ on the region.
\end{proof}

\begin{proof}[Proof of Conjecture~\ref{conj1}]
By Lemma~\ref{lem no spurious 0s} and Lemma~\ref{lem region is 0}, $R(x)=0$ if and only if $x$ is in one of the regions defined by a noncrossing chord diagram as in the statement of the conjecture.
\end{proof}

\medskip

Note, we can give a direct combinatorial proof that $H(x)$ is invariant under an overall shift of all the variables (a fact that we we also know by physical arguments, but that we did not use above) as follows.  Suppose we shift all variables by adding $y$.  This shifts all minima by $y$ and so, using that $n$ is even,
\begin{align*}
  H(x+y) & = H(x) + \sum_{a=0}^{n-3}y + 2\sum_{a=0}^{n-4}\sum_{b=a+1}^{n-3} (-1)^{b-a}y \\
  & = H(x) + y(n-3)+2y\sum_{a=0}^{n-4}\sum_{c=1}^{n-a-3}(-1)^{c} \\
  & = H(x) + y(n-3) + 2y\sum_{a=0}^{n-4}\begin{cases} -1 & \text{if $n-a-3$ odd} \\0 & \text{if $n-a-3$ even} \end{cases} \\
  & = H(x) + y(n-3) + 2y\sum_{\substack{0\leq a\leq n-4\\a \text{ even}}}-1 \\
  & = H(x) + y(n-3) - 2y\left(\frac{n-4}{2}+1\right) \\
  & = H(x) 
\end{align*}

\section{$\phi^p$ from Triangulations of Extended Diagrams}\label{sectriangp} 
For general $\phi^p$ theories, the global Schwinger formulation presented in \cite{Cachazo:2022voc} is analogous to that of $\phi^4$ but now the regions that compute $A_n^{\phi^p}$ are given by non-crossing $(p-2)$-chord diagrams, see definition below, which are counted by $\textrm{FC}_{(n-2)/(p-2)}(p-2,1)$. Such regions are of dimension $n/(p-2)-1$ in $\mathbb{R}^{n-3}$. The sum over all regions leads to the expected amplitude $A_n^{\phi^p}$, where now each region is connected to a double-ordered cubic amplitude $m_{(n+2(p-3))/(p-2)}(\alpha,\beta)$, as we will comment in section \ref{discs}. We refer the reader to \cite{Cachazo:2022voc} for more details on the global Schwinger formula for $\phi^p$ theories and on the role non-crossing $(p-2)$-chord diagrams play in this formulation. 

In this section, by following the same philosophy as in section \ref{sectriang4}, we perform the study of $\phi^p$ theories. In particular, we will show how we can compute the $\phi^p$ amplitude by triangulating an extended version of non-crossing $(p-2)$-chord diagrams, without ever having to compute the integral of the corresponding global Schwinger formula. The number of triangulated extended diagrams coincides with the number of $\phi^p$ trees, which are counted by $\textrm{FC}_{(n-2)/(p-2)}(p-1,1)$.

To start with, let us recall the definitions of $(p-2)$-chord diagrams and their extension, which in this paper we will call again \textit{extended diagrams} for simplicity. Notice that we will slightly change again the convention on the labelling with respect to \cite{Cachazo:2022voc} for convenience.

\begin{defn}\label{kchords}

Place $n-2$ points labeled $3,4, \ldots ,n$ on the real line in increasing order. A {\it non-crossing $(p-2)$-chord} diagram is a partition of the points into sets of size $p-2$ where each part we call a $(p-2)$-chord, and furthermore, where the $(p-2)$-chords can be drawn on the upper half plane without any crossings. Let us denote the $(p-2)$-chord connecting points $a_1,a_2, \ldots ,a_{p-2}$ as $\theta_{a_1,a_2,...,a_{p-2}}$. We also call $\Theta_{ab}$ to the unique path in a $(p-2)$-chord joining two points $a$ and $b$. 
\end{defn}
	Note, this notion of non-crossing agrees with the usual combinatorial notion of non-crossing set partition, which can be defined rigorously by the condition that there do not exist $a<b<c<d$ with $a,c$ in one part and $b,d$ in another.

\begin{defn}
An extended diagram in this context is a non-crossing $(p-2)$-chord diagram on $n$ points labelled by $\{2,3,4,5,\ldots,n,1\}$ in which $\Theta_{21}$ is always included\footnote{In the definition presented in \cite{Cachazo:2022voc} this chord was a $(p-2)$-chord. In this paper we use a 2-chord to surround all others for simplicity. We will therefore abuse notation since, strictly speaking, now the extended diagram is not a $(p-2)$-chord diagram anymore. One can imagine that we are still talking about a $(p-2)$-chord diagram with an additional $(p-2)$-chord surrounding all others but only the lower part of it, i.e. the path $\Theta_{21}$, is what matters for the argument.}. We also define a meadow of an extended diagram as any region in the diagram delimited by 
more than one chord and by the line where the points lie.  A meadow is an $m$-point meadow if there are $m-1$ $(p-2)$-chords delimiting it. 
\end{defn}

In section \ref{secpcubic} we will comment on the fact that a meadow which is delimited by $m-1$ such paths and the real line corresponds to a biadjoint $m$-subamplitude participating in $m_{(n+2(p-3))/(p-2)}(\alpha ,\beta)$. Moreover, the upper boundary of a meadow, $\Theta_{ab}$, actually corresponds to a propagator of the form $1/X_{a,b}$, with the exception of $\Theta_{21}$. 

Now let us proceed to explain how to triangulate an extended diagram. The rules are analogous to those seen in section \ref{sectriang4} for $\phi^4$, where the triangulating chords are all $2$-chords:

\begin{itemize}
    \item A triangulating chord $\Omega_{ab}$ must start, reading from left to right, from the label $a$ on the right of a lower $(p-2)$-chord of the meadow or on the left of the upper path $\Theta_{ad}$ in the meadow, and must end on the label $b$ on the left of a lower chord of the meadow or on the right of $\Theta_{cb}$.
    \item A triangulating chord must surround at least another $(p-2)$-chord in the diagram.
    \item The triangulation ends when only regions with 4 delimiting chords and lines are left. \end{itemize}
Again these are called triangulations because they will correspond to triangulations in the context of section \ref{secpcubic}.  Once the triangulation of an extended diagram is done, one can easily compute its contribution to the $\phi^p$ amplitude in the following way:
\begin{itemize}
    \item Every already existing path $\Theta_{ab}$ which defines the upper path of a meadow, with the exception of $\Theta_{21}$, and every triangulating chord $\Omega_{ab}$ correspond to a propagator of the form $1/X_{a,b}$.
    \item The contribution of a triangulated diagram to the amplitude is given by multiplying all propagators involved.
\end{itemize}
Now we proceed to show some examples to see how the triangulation works. We start with $\phi^6$ and $n=10$, which involves four extended diagrams. In figure \ref{p6n10} the $4$-chords are shown in black, the different meadows are shaded in colors and the triangulating chords are the red ones.

\begin{figure}[H]
\includegraphics[width=15cm]{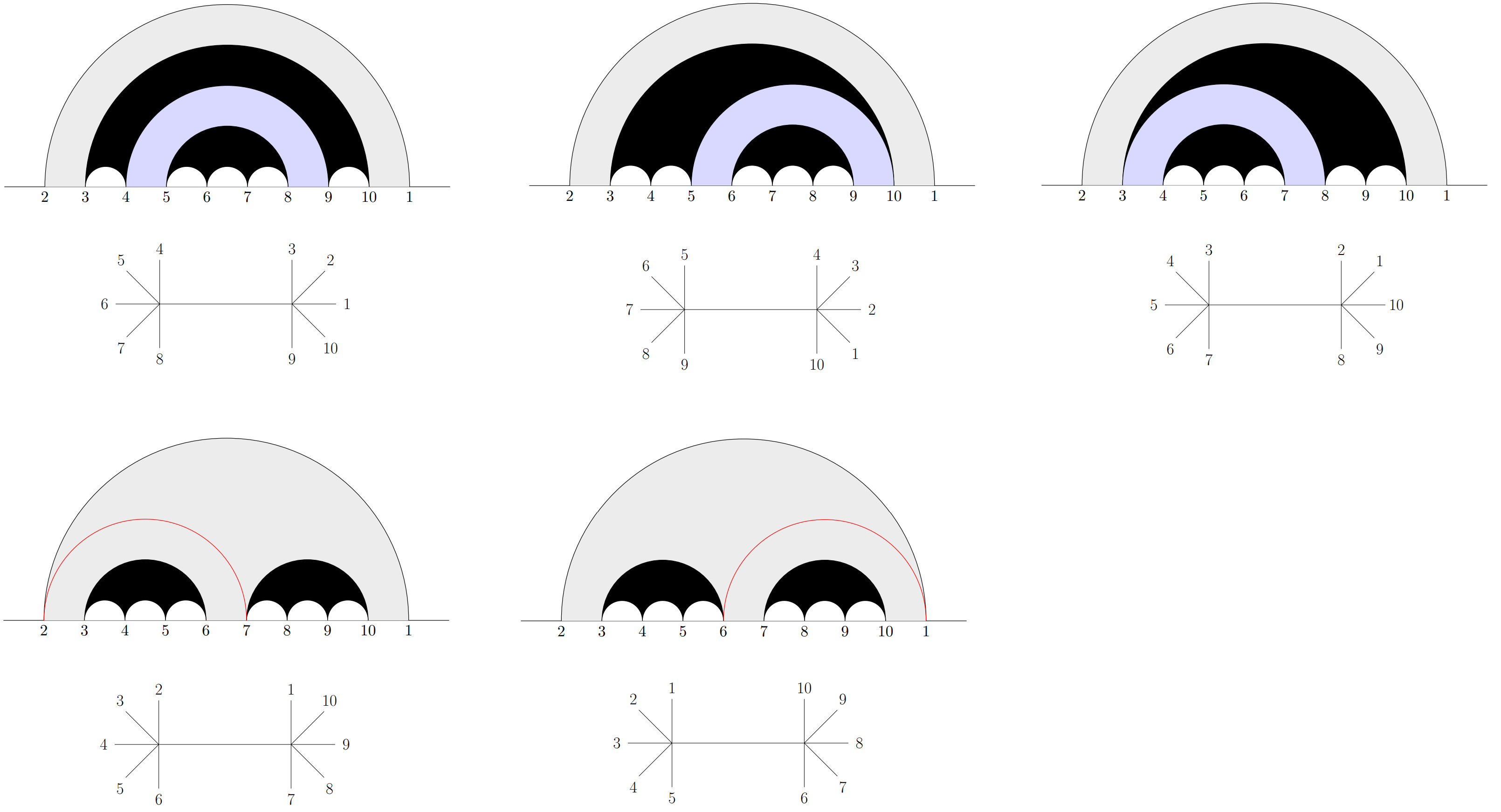}
\centering
\caption{Triangulated extended diagrams for $n=10$ and their corresponding $\phi^6$ tree. The triangulating chords are shown in red, and the colored regions correspond to different meadows. There are four extended diagrams in total, which give rise to five possible triangulations.}
\label{p6n10}
\end{figure}
The first row of extended diagrams in figure \ref{p6n10} are those which only consist of 3-point meadows and are therefore already triangulated. For example, the first extended diagram has two 3-point meadows --shown in gray and blue-- and only one propagator associated to the upper boundary of the blue meadow, i.e. $\Theta_{49}$, which corresponds to a contribution of the form $1/X_{4,9}$, the same as the $\phi^6$ tree below. One can do a similar analysis for the other two diagrams in the first row.

The second row of figure \ref{p6n10} shows the two possible triangulations of the remaining extended diagram, which contains a single 4-point meadow shown in gray. The first diagram in the second row has a triangulating chord of the form $\Omega_{27}$, which gives rise to a propagator of the form $1/X_{2,7}$, the same given by the $\phi^6$ tree below. Similarly, the second diagram has a triangulating chord $\Omega_{61}$, which corresponds to a propagator of the form $1/X_{1,6}$ which is also given by the tree below.

Summing over the five triangulated extended diagrams is equivalent to summing over the five existing $\phi^6$ trees for $n=10$, and one obtains the amplitude $A_{10}^{\phi^6}$.

Now let us see how by triangulating the extended diagrams for $\phi^5$ and $n=11$ leads to the full amplitude. In this case, there are 12 extended diagrams in total. First of all, we show in figure \ref{p5n111} all the 5 extended diagrams which only consist of 3-point meadows and are therefore already fully triangulated.

\begin{figure}[H]
\includegraphics[width=15cm]{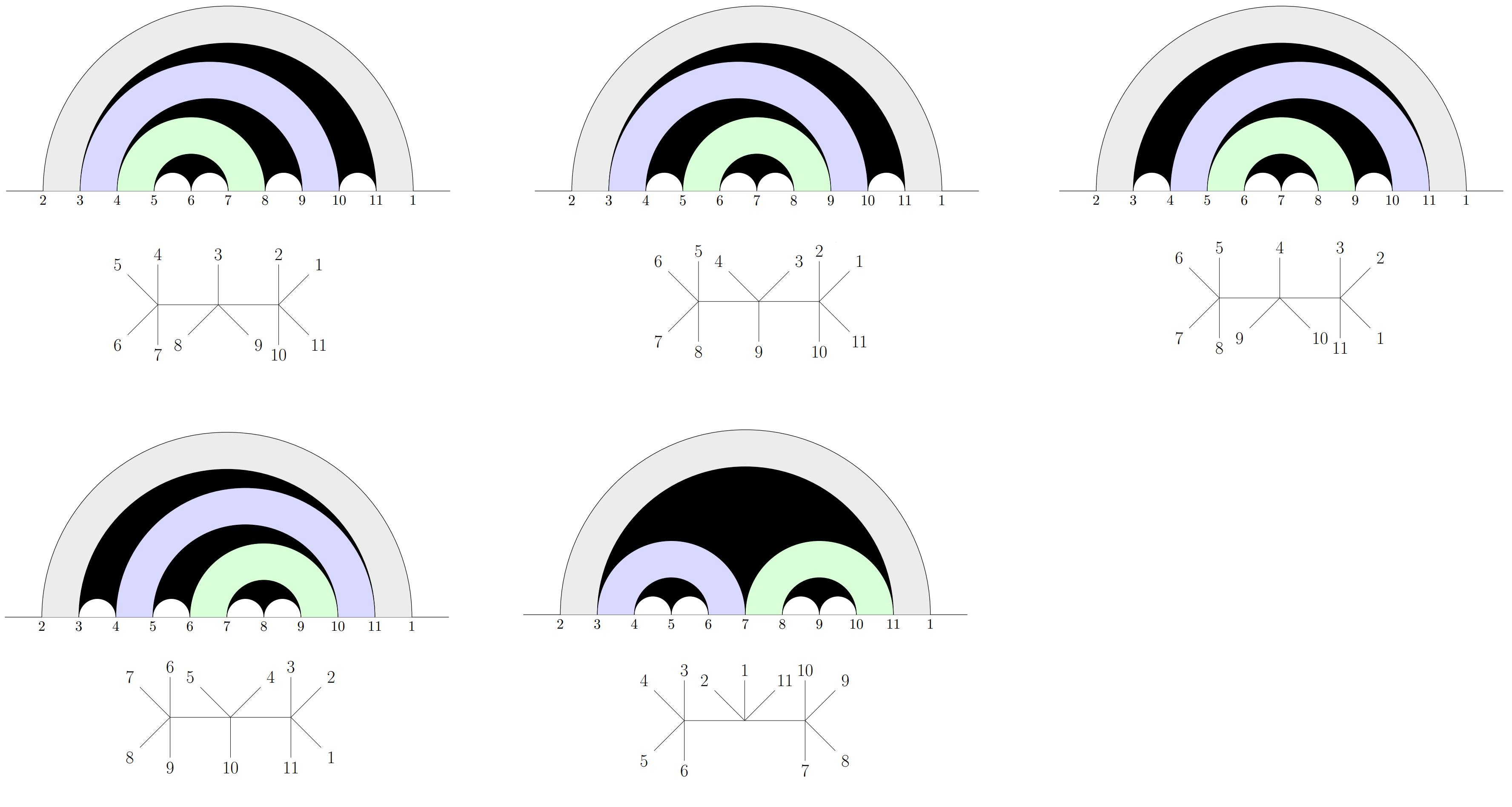}
\centering
\caption{The 5 extended diagrams with only 3-point meadows, colored in different ways, for $n=11$ and the corresponding $\phi^5$ tree shown below.}
\label{p5n111}
\end{figure}
For example, the first diagram in figure \ref{p5n111} has three 3-point meadows with upper boundaries of the form $\Theta_{3,10}$ and $\Theta_{48}$, which correspond to the propagators $1/X_{3,10}$ and $1/X_{4,8}$, the same appearing in the $\phi^5$ tree below. Hence, this diagram contributes
$$\frac{1}{X_{3,10}X_{4,8}}$$
to the amplitude. As another example, the last diagram contains three 3-point meadows with upper boundaries $\Theta_{37}$ and $\Theta_{7,11}$ corresponding to the propagators $1/X_{3,7}$ and $1/X_{7,11}$, which one can also read from the tree below, and it contributes 
$$\frac{1}{X_{3,7}X_{7,11}}$$
to the amplitude.
For the $\phi^5$ and $n=11$ case, we also have 6 extended diagrams with one 3-point meadow and one 4-point meadow, and therefore $6\times2=12$ possible triangulations, as shown in figure \ref{p5n112}.

\begin{figure}[H]
\includegraphics[width=15cm]{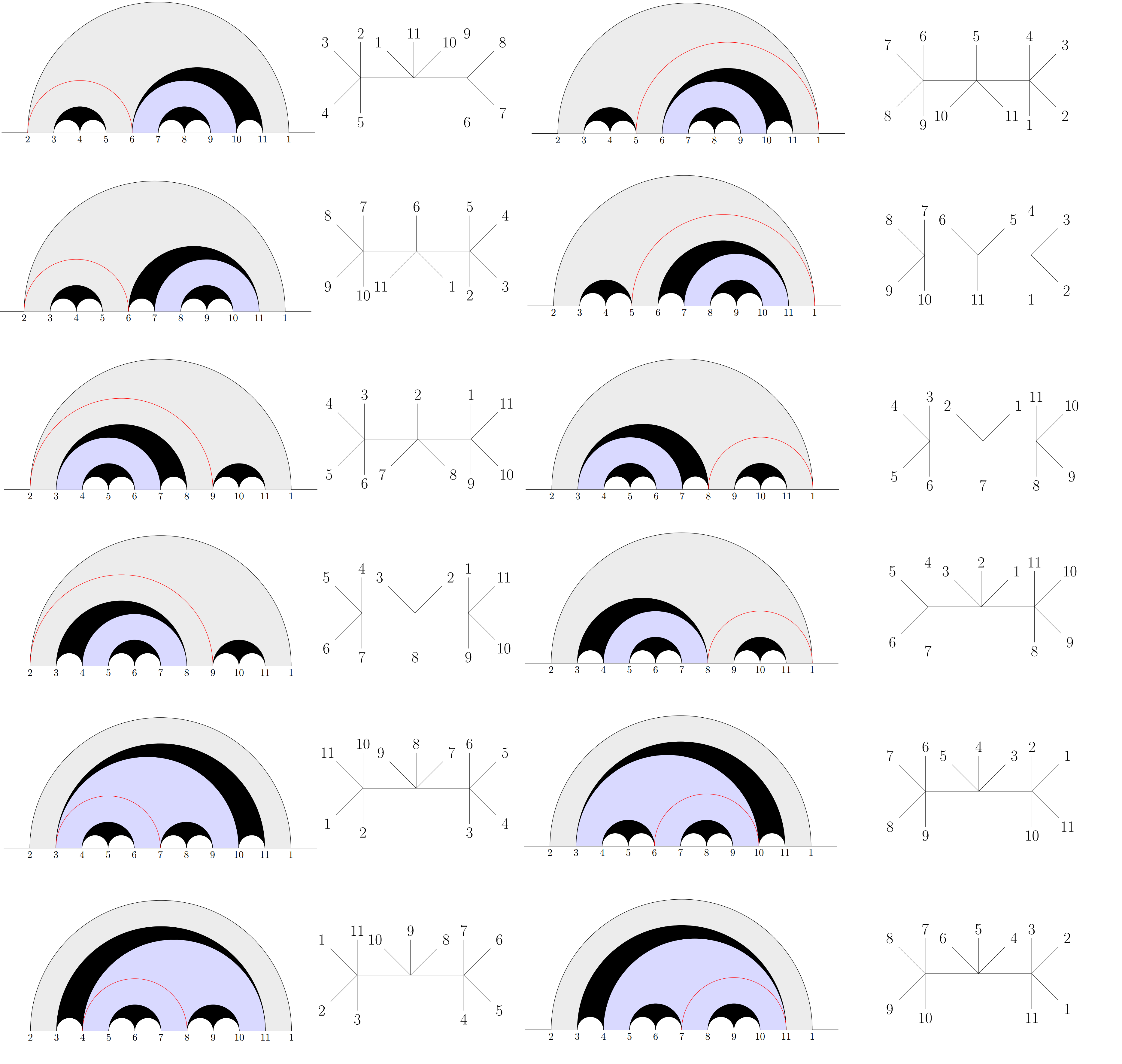}
\centering
\caption{The 6 extended diagrams in $n=11$ with one 3-point meadow and one 4-point meadow. Every row corresponds to an extended diagram, and the two columns correspond to the two possible triangulations, with the corresponding $\phi^5$ tree shown on the right.}
\label{p5n112}
\end{figure}
In this case, every extended diagram has two possible triangulations, due to the fact that each extended diagram contains one 4-point meadow. The extended diagram on the first row in figure \ref{p5n112} has an upper boundary of the form $\Theta_{6,10}$ in the blue meadow, associated to a propagator $1/X_{6,10}$. Each of its triangulations, corresponding to every column in the first row, has a red triangulating chord of the form $\Omega_{26}$ and $\Omega_{51}$, respectively, giving rise to propagators of the form $1/X_{2,6}$ and $1/X_{1,5}$.

Therefore, the contribution of the first extended diagram in figure \ref{p5n112} to the $\phi^5$ amplitude is given by
$$\frac{1}{X_{6,10}}\left(\frac{1}{X_{2,6}}+\frac{1}{X_{1,5}}\right)\,,$$
which is the same given by summing over the two trees on the first row of figure \ref{p5n112}. Again, this expression resembles that of a double-ordered 5-point cubic amplitude $m_5(\alpha,\beta)$, for some orderings $\alpha$ and $\beta$. Finally, one can perform a similar analysis to the rest of the extended diagram and their triangulations.

There is one more extended diagram in $\phi^5$ and $n=11$ that we have not talked about yet, which corresponds to the one containing a single 5-point meadow. We show its 5 possible triangulations in figure \ref{p5n113}.

\begin{figure}[H]
\includegraphics[width=15cm]{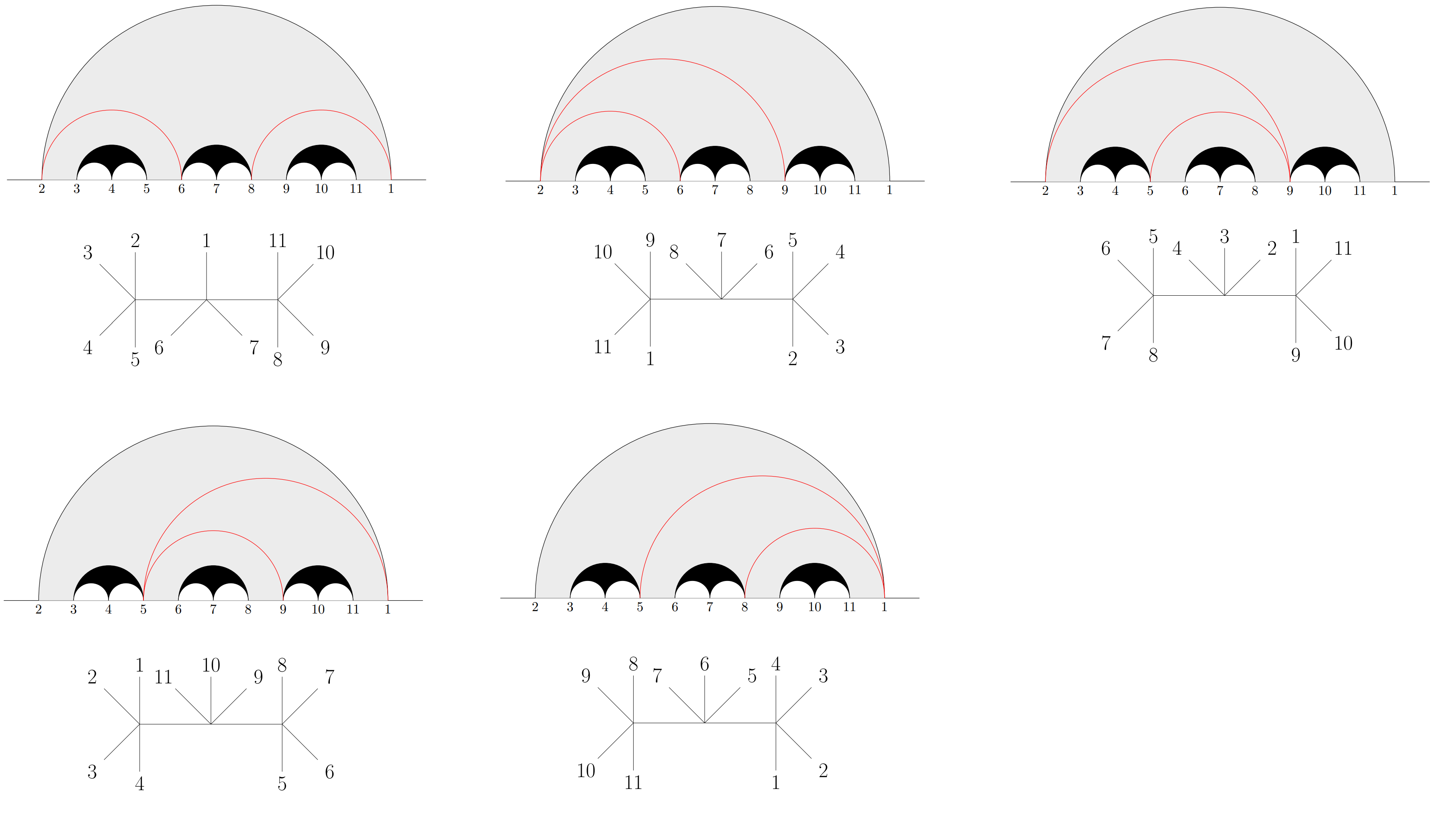}
\centering
\caption{The extended diagram in $n=11$ with one 5-point meadow has 5 possible triangulations, shown together with the corresponding $\phi^5$ tree below.}
\label{p5n113}
\end{figure}
The triangulation corresponding to the left diagram in the first row of figure \ref{p5n113}, has two triangulating chords $\Omega_{26}$ and $\Omega_{81}$. Therefore, its contribution to the $\phi^5$ amplitude is given by
$$\frac{1}{X_{2,6}X_{1,8}}\,,$$
which is the same as that of the $\phi^5$ tree below the triangulated diagram. By a similar analysis done for the rest of the triangulations of this extended diagram one ends up with its contribution to $A_{11}^{\phi^5}$, given by
$$\frac{1}{X_{2,6}X_{1,8}}+\frac{1}{X_{2,9}X_{2,6}}+\frac{1}{X_{2,9}X_{5,9}}+\frac{1}{X_{5,9}X_{1,5}}+\frac{1}{X_{1,5}X_{1,8}}\,,$$
which resembles that of a 5-point diagonal amplitude $m_5(\alpha,\alpha)$. This is not a coincidence, as we will discuss in section \ref{secpcubic}.

To end with this example, one can sum over the contributions of all the FC$_3(3,1)=12$ extended diagrams, which are given by $5\times1+6\times2+1\times5=$FC$_3(4,1)=22$ triangulations --which are in correspondence with the 22 $\phi^5$ trees in $n=11$--, to end up with the expected $A_{11}^{\phi^5}$ amplitude.

\section{Comments on Relation with Cubic Amplitudes}\label{secp4cubic} 
One of the powers of the global Schwinger formula is that it provides a framework to discover some properties of scattering amplitudes which are hard to see from the standard Feynman diagram formulation. For example, in section \ref{sectriang4} we saw how extended diagrams contribute to the $\phi^4$ amplitude in a form which resembles that of a double-ordered cubic amplitude $m_{n/2+1}(\alpha,\beta)$, as first noticed in \cite{Cachazo:2022voc}. In fact, in this work it was proven that, for a given number of particles $n$, extended diagrams consisting of a single $(n/2+1)$-point meadow\footnote{This corresponds to the region in $\textrm{Trop}^+G(2,n)$ coming from the non-crossing chord diagram in which all the chords join two consecutive points, i.e. $\{x_0=x_1,x_2=x_3,...,x_{n-4}=x_{n-3}\}$.} are in bijection with a diagonal amplitude $m_{n/2+1}(\alpha,\alpha)$ for some bijection of the kinematic invariants.

In \cite{Cachazo:2013iea} it was shown that double-ordered cubic amplitudes can be expressed as a product of cubic \textit{subamplitudes}, using a polygon decomposition that we review in the next subsection. Therefore, in \cite{Cachazo:2022voc} it was proposed that $\phi^4$ amplitudes could be expressed as a sum of products of cubic amplitudes, and a formula for the general schematic structure of $A_n^{\phi^4}$ was given based on the Lagrange inversion procedure. For example, using this formula one could write
$$A_8^{\phi^4}=m_5+3m_3m_4\frac{1}{P^2}+m_3^3\left(\frac{1}{P^2}\right)^2\,,$$
where $m_i$ is a cubic subamplitude with $i$ points and $1/P^2$ is a generic propagator. 

Going back to the $n=8$ example in section \ref{sectriang4}, we know that in this case we have an extended diagram consisting of a single 5-point meadow, three extended diagrams consisting of a 3-point meadow together with a 4-point meadow, and one extended diagram consisting of three 3-point meadows, as shown in figure \ref{extp4n8m}. This is precisely the schematic structure of $A_8^{\phi^4}$ shown above, if one relates the $m$-point meadows with the subamplitudes in the expression. Moreover, the number of propagators coincides with the number of chords separating different meadows in the extended diagrams.

In \cite{Cachazo:2022voc} it was also pointed out that extended diagrams for general $\phi^p$ theories are also connected to double-ordered amplitudes $m_{(n+2(p-3))/(p-2)}(\alpha,\beta)$, and a formula for the general schematic structure of $A_n^{\phi^p}$ in terms of products of cubic subamplitudes was also given.

In this section we delve more into the connection between extended diagrams in $\phi^4$ and double-ordered amplitudes. In fact, we propose a combinatorial way to relate any extended diagram with such a schematic structure in lines with the standard polygon decomposition used in double-ordered amplitudes, hence extending the results of \cite{Cachazo:2022voc}. However, we must mention that our new proposed polygon-decomposition does not generate the two permutations $\alpha$ and $\beta$ of the double-ordered amplitude. Instead, it provides an alternative way to understand an extended diagram as a polygon-decomposed object, with different triangulating rules to produce the desired amplitude, which are of course equivalent to the triangulating rules of extended diagrams, hence justifying the term triangulation in the extended diagram case.

\subsection{Review of the Polygon Decomposition of $m_n(\alpha,\beta)$}
In section \ref{sec1} we saw the definition of what double-ordered cubic amplitudes\footnote{In the literature these are  also known as off-diagonal amplitudes.} $m_n(\alpha,\beta)$ are. In this subsection we review a very efficient diagrammatic procedure to compute them, first introduced in \cite{Cachazo:2013iea}, which will turn out to be very convenient to explain the main point of this section.

The procedure is the following. Given $m_n(\alpha,\beta)$, which is defined by any two permutations $\alpha$ and $\beta$, we first draw points on a circle following the permutation $\alpha$. We next join the points according to the ordering defined by $\beta$ inside the circle\footnote{Or viceversa. Notice that $m_n(\alpha,\beta)=m_n(\beta,\alpha)$.}. The resulting object is a set of polygons which correspond to cubic subamplitudes. Each vertex in the polygon is dual to a leg of a tree that would appear after triangulating the polygon (note this is a slightly different notion of duality to that giving the dual quadrangulation used in section~\ref{sec bijection}). Also, the internal vertices joining two subamplitudes define propagators. 

There is also a global sign that multiplies the double-ordered amplitude. We can obtain the sign e.g. by following the convention of \cite{Mizera:2016jhj}, which consists of computing the relative winding number $w_{\alpha,\beta}$ between the two permutations. The overall sign is given by $(-1)^{1+w_{\alpha,\beta}}$.

\begin{figure}[H]
\includegraphics[width=5cm]{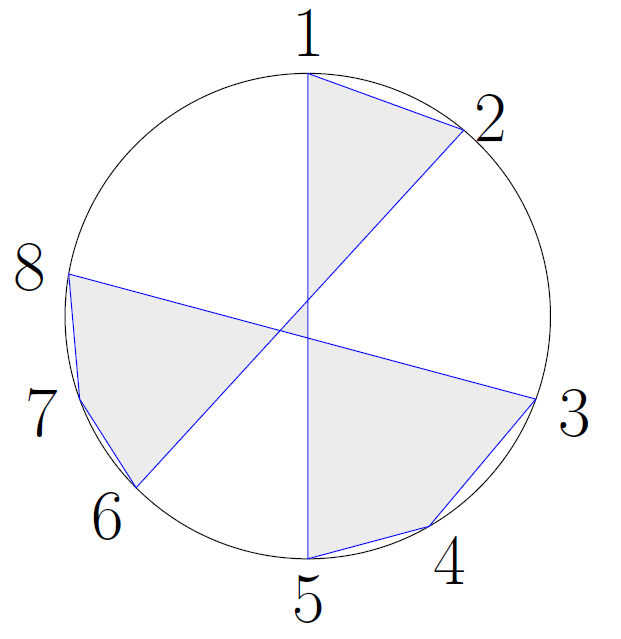}
\centering
\caption{The permutations $\alpha=(12678345)$ in blue and $\beta=\mathbb{I}=(12345678)$ in black.}
\label{off}
\end{figure}

It is instructive to see an example of how the whole procedure works. Consider, e.g., the double-ordered amplitude $m_8(12678345,\mathbb{I})$. The two permutations are pictorially represented in figure \ref{off}. Notice from this figure that we can see four different polygons whose interior has been shaded, i.e. two triangles and two squares, and three internal vertices. This means that this double-ordered amplitude has a schematic form
$$m_8(12678345,\mathbb{I})=m_3^2m_4^2\left(\frac{1}{P^2}\right)^3\,.$$
The square formed by vertices 3, 4, 5 and an internal one corresponds to a 4-point subamplitude given by
$$\frac{1}{s_{34}}+\frac{1}{s_{45}}\,.$$
Similarly, the square formed by vertices 6, 7, 8 and another internal one also corresponds to a 4-point subamplitude, in this case
$$\frac{1}{s_{67}}+\frac{1}{s_{78}}\,.$$
The two triangles correspond to 3-point subamplitudes, which are constants that we take as 1 by convention. There are also 3 internal vertices, which are propagators of the form $1/s_{12}$, $1/X_{3,6}$ and $1/X_{1,6}$. Moreover, the relative winding number in this case is given by $w_{\alpha,\beta}=2$ and the amplitude is, therefore,
$$m_8(12678345,\mathbb{I})=-\frac{1}{s_{12}X_{3,6}X_{1,6}}\left(\frac{1}{s_{34}}+\frac{1}{s_{45}}\right)\left(\frac{1}{s_{67}}+\frac{1}{s_{78}}\right)\,.$$

\subsection{Polygon Decomposition of a $\phi^4$ Extended Diagram}
In this subsection we start by proposing a polygon-decomposition like procedure for an extended diagram participating in $\phi^4$ theories, motivated by the nested structure that extended diagrams have in terms of meadows. The rules to express a generic extended diagram into a polygon-decomposed object are the following:

\begin{itemize}
    \item Given an extended diagram, we draw an $m$-gon for every $m$-point meadow in the diagram.
    \item For every $m$-gon, two of its vertices are given by the two labels of the upper chord defining the meadow where it comes from. The rest of the vertices are labelled by the pair of points joined by lower chords $\theta_{ab}$ defining the meadow, in the form $[a,b]:=a,a+1,\cdots,b-1,b$.
    \item Every propagator $\theta_{ab}$ separating two meadows in the extended diagram corresponds to an internal vertex, which is labelled $b$ and lies inside the polygon corresponding to the meadow surrounding the other one from above --with $b>a$-- and which joins the two corresponding $m$-gons. These kind of vertices lie strictly inside a circle. The rest of the vertices lie on the boundary of the circle.
\end{itemize}
Let us see some examples to help us clarify the procedure. Recall that for $n=6$ we have two different extended diagrams, which are shown in figure \ref{extp4n6m}. The proposed polygon decomposition is shown in figure \ref{polp4n6}.

\begin{figure}[H]
\includegraphics[width=15cm]{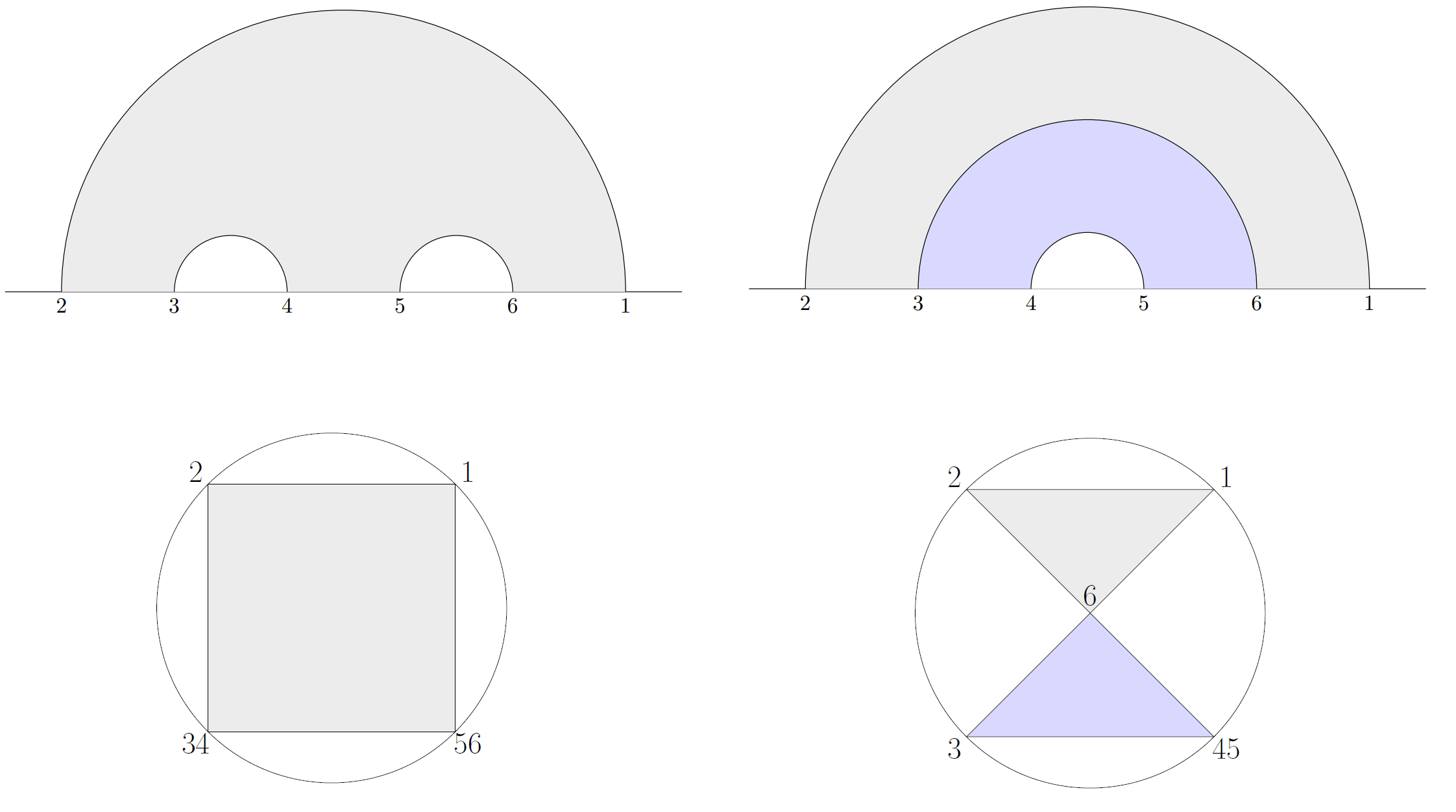}
\centering
\caption{Polygon decomposition of the extended diagrams in $n=6$.}
\label{polp4n6}
\end{figure}
The extended diagram on the left consists of a single $4$-point meadow, so its polygon decomposition corresponds to a square, as shown below. The extended diagram on the right consists of two 3-point meadows, shaded in gray and blue, which give rise to the polygon decomposition shown below and which corresponds to two triangles. Note that the triangles are joined by an internal vertex, labelled by 6, which is the greater label in the propagator $\theta_{36}$ in the extended diagram.

\begin{figure}[H]
\includegraphics[width=10cm]{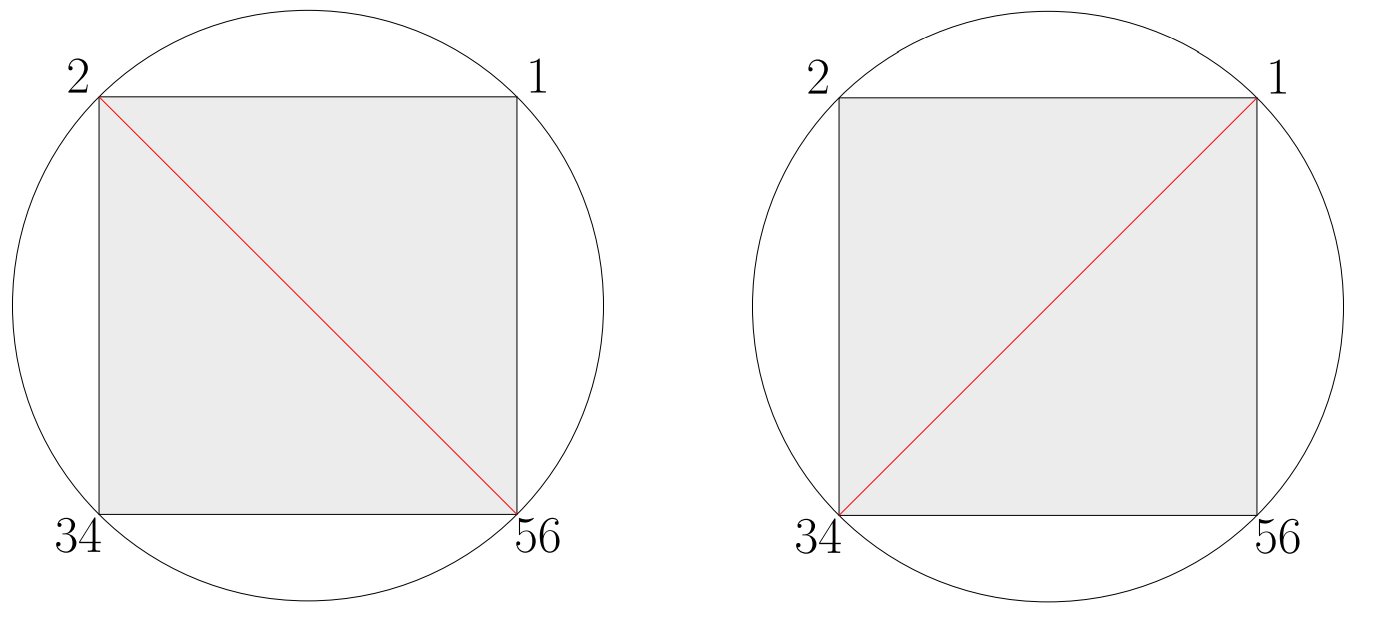}
\centering
\caption{Triangulations of a polygon-decomposed extended diagram in $n=6$.}
\label{trin61pol}
\end{figure}

One might be tempted to say that the extended diagram on the left corresponds to a diagonal amplitude $m_{{4}}(\alpha,\alpha)$ with $\alpha=({1},{2},{34},{56})$. Similarly, the diagram on the right might correspond to an off-diagonal amplitude of the form $m_{{4}}(\alpha,\beta)$, with the permutations defined by $\alpha=({1},{2},{3},{45})$ and $\beta=({1},{2},{45},{3})$. However, the $m_{{4}}(\alpha,\beta)$ amplitudes in general cannot be read in a standard way from the permutations presented in this picture, since there is a caveat when it comes to triangulating the polygons. 

Notice that the two possible triangulations of the extended diagram on the left, which are those in figure \ref{trin61}, are equivalent to the triangulations of the square, shown in figure \ref{trin61pol}. The triangulation on the left would give us the expected contribution of $1/s_{{2},{34}}=1/s_{{56},{1}}$. But the triangulation on the right, if not interpreted carefully, would give us $1/s_{{34},{56}}=1/s_{{1},{2}}$. This is obviously not what we want, and the reason is that the corresponding triangulating chord in the extended diagram started from label 4 and did not include label 3. 

\begin{figure}[H]
\includegraphics[width=15cm]{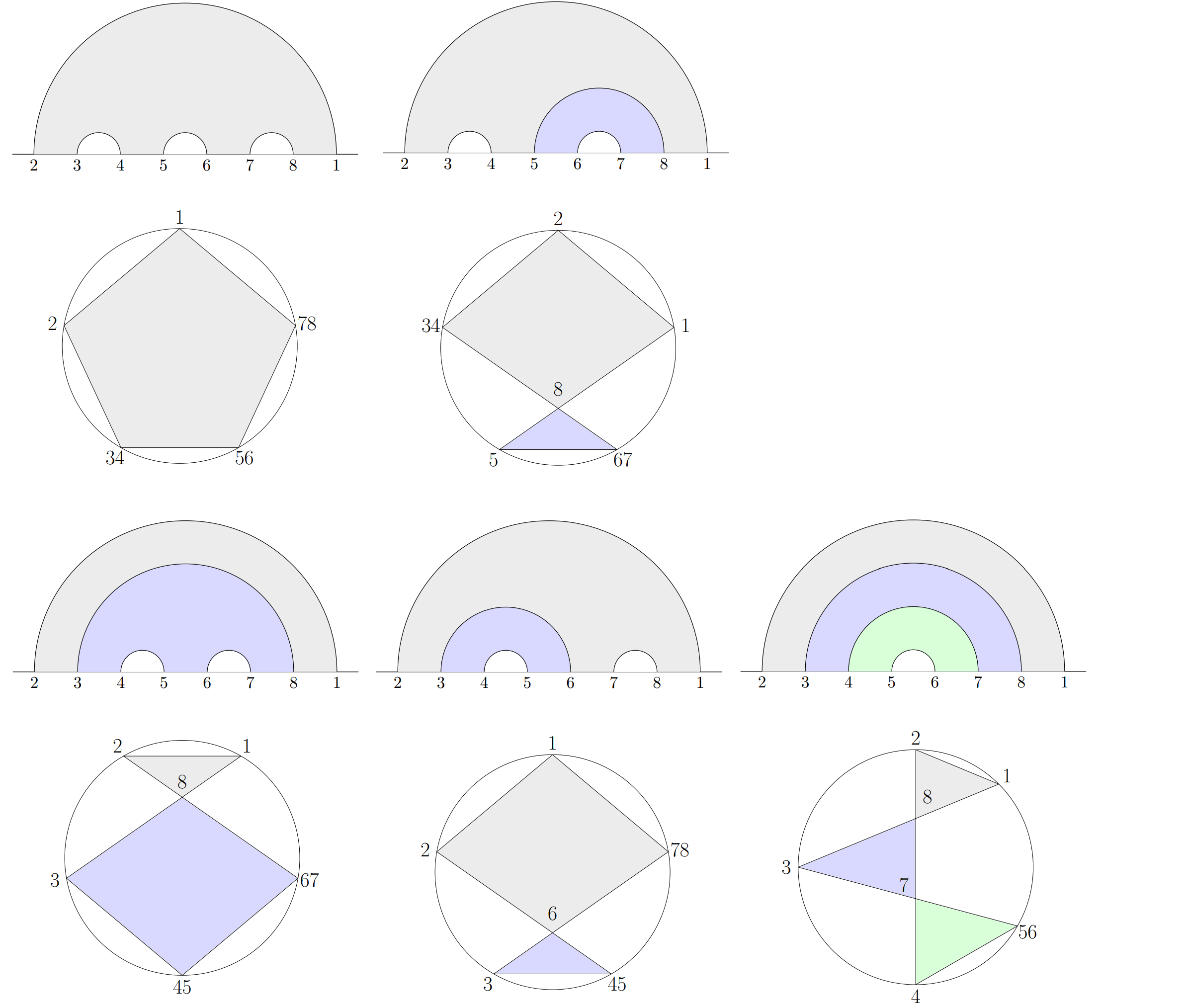}
\centering
\caption{Extended diagrams in $n=8$ and their polygon decomposition shown below.}
\label{p4n8pol}
\end{figure}

This motivates us to introduce a couple more new rules in order to read a triangulation of a polygon-decomposed extended diagram:

\begin{itemize}
    \item We say that a triangulating chord $\Omega_{a,b}$ inside a polygon starts at a vertex $a$ and ends at a vertex $b$ when $a<b$ in the ordering $(2,3,4,...,n,1)$. If a triangulating chord inside a polygon starts at a vertex labelled by a pair of points $ij$, it partitions the pair and label $j$ goes along with the following vertex in the polygon. 
    \item If an $m$-gon with vertices $\{a_1,a_2,...,a_i,...a_m\}$ has an internal vertex $a_i$, the propagator corresponding to that vertex is $1/X_{{a}_1,{a_m+1}}$. It is also read as $1/X_{{b}_1,{b_{l}+1}}$, where $\{b_1,b_2,...,b_l\}$ are the remaining vertices of the polygon that shares the same internal vertex as the previous one, with $a_i\notin\{b_1,b_2,...,b_l\}$, and which corresponds to a meadow below the chord in the extended diagram containing $a_i$.
\end{itemize}
Therefore, now the triangulation on the left of figure \ref{trin61pol} reads as $1/s_{{2},{34}}=1/s_{{56},{1}}$ and the one on the rights reads as $1/s_{{4},{56}}$, as we expected. Moreover, the already triangulated diagram in figure \ref{polp4n6} contributes as $1/s_{{3},{45}}$.

Now let us consider the more interesting case of $n=8$. In figure \ref{p4n8pol} we show the polygon decomposition of the five extended diagrams previously shown in figure \ref{extp4n8m}, where we have colored the meadows and their corresponding n-gons in the same way to make the correspondence more clear. The extended diagram on the bottom-right only consists of 3-point meadows, which are given by triangles in its polygon decomposition. Therefore, it is already triangulated. Its contribution to the amplitude $A_8^{\phi^4}$ is then given by the product of its two internal vertices or propagators, which are of the form $1/X_{3,8}$ and $1/X_{4,7}$.

In figure \ref{trin81pol} we show the five triangulations of the pentagon, which correspond to the five triangulations of the extended diagram shown in figure \ref{trin81}, and presented in the same order.

\begin{figure}[H]
\includegraphics[width=15cm]{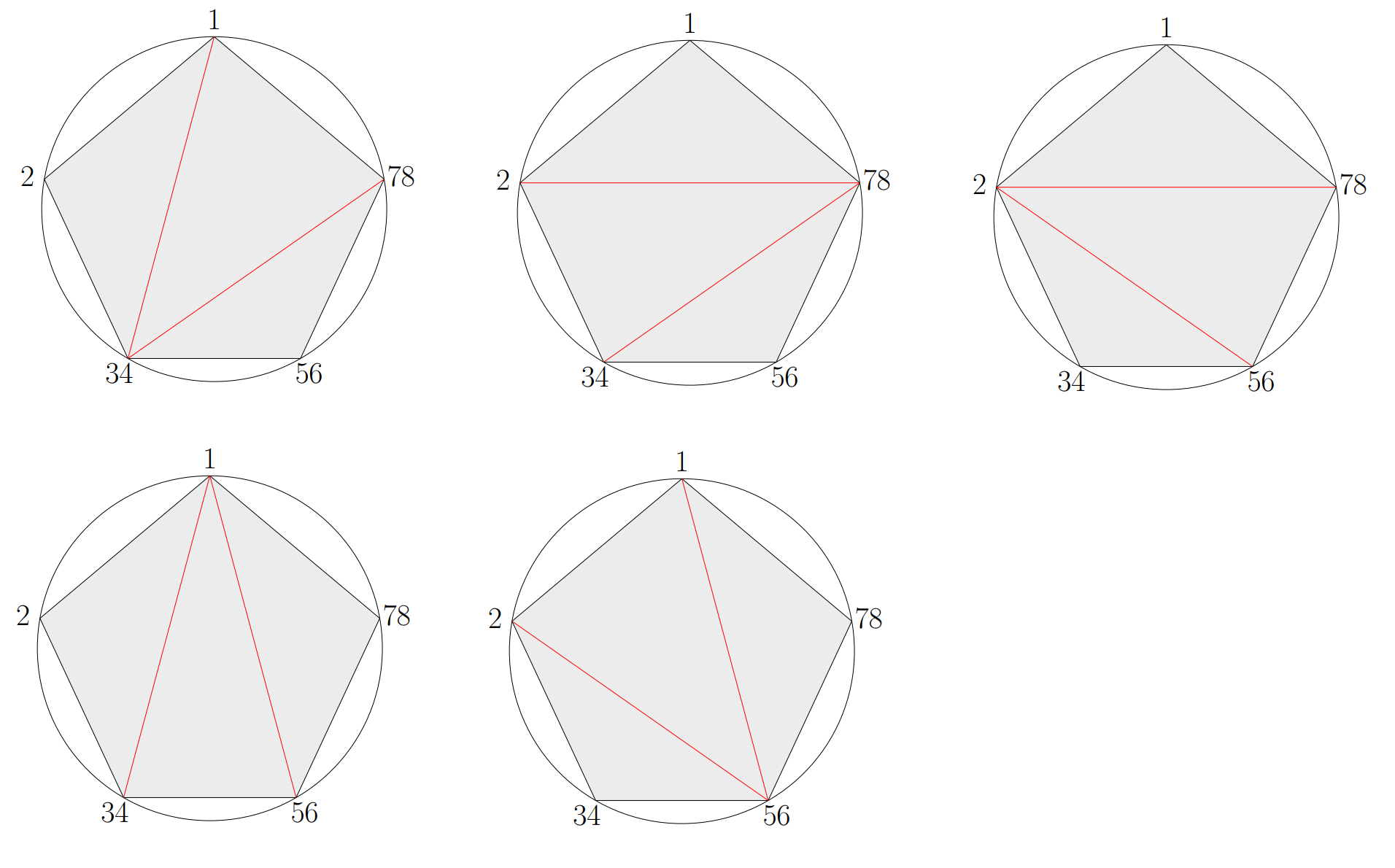}
\centering
\caption{Triangulations of the pentagon corresponding to the extended diagram with a single 5-point meadow in $n=8$. They are the same as those in figure \ref{trin81}.}
\label{trin81pol}
\end{figure}
Going from left to right, top to bottom, the first triangulation gives a contribution of the form $1/X_{4,7}\times1/X_{1,4}$. This is because the triangulating chords $\Omega_{34,78}$ and $\Omega_{34,1}$ start from the pair $34$, thus splitting it into 3 and 4.
Similarly, the rest of the triangulations in figure \ref{trin81pol} give contributions of the form
$$\frac{1}{X_{2,7}X_{4,7}}\,, \hspace{2mm} \frac{1}{X_{2,7}X_{2,5}}\,, \hspace{2mm} \frac{1}{X_{1,4}X_{1,6}}\,, \hspace{2mm} \frac{1}{X_{2,5}X_{1,6}}\,,$$
respectively. Thus summing over the five triangulations of the pentagon we obtain the contribution $A_8^{\phi^4:(1)}$ seen in \eqref{qmet}.

\begin{figure}[H]
\includegraphics[width=15cm]{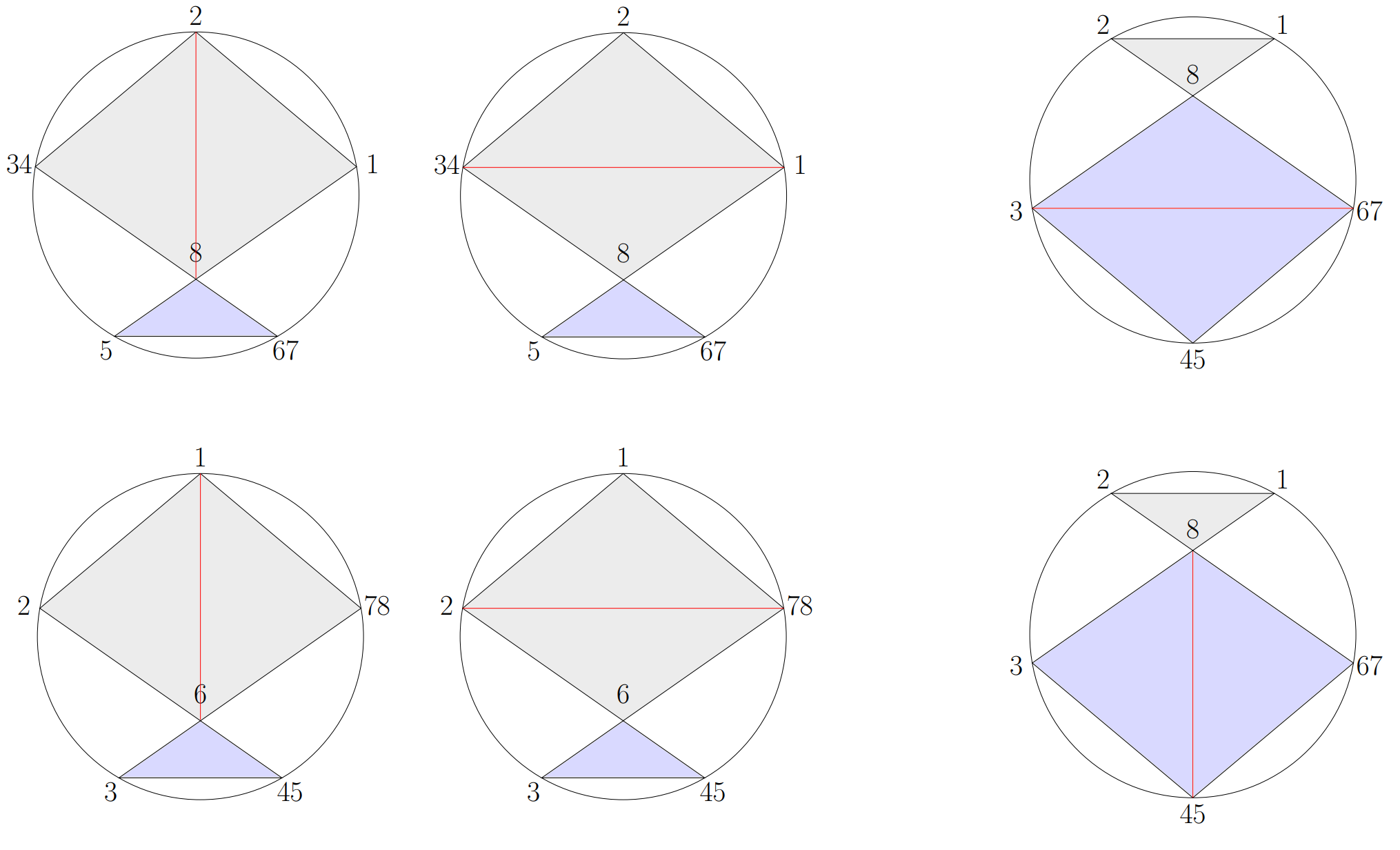}
\centering
\caption{Rest of the triangulations of the polygon-decomposed extended diagrams in $n=8$. They are the same as those in figure \ref{trin82}.}
\label{trin82pol}
\end{figure}

Figure \ref{trin82pol} shows the triangulations for the rest of the polygon-decomposed extended diagrams that appear in $n=8$. The first two triangulations in the first row are those of the same polygon decomposition, and their sum gives a contribution of 
$$\frac{1}{X_{5,8}}\left(\frac{1}{X_{2,5}}+\frac{1}{X_{1,4}}\right)\,,$$
which agrees with that of $A_8^{\phi^4:(2)}$ seen in \eqref{qmet}.

Similarly, by summing over the first two triangulations in the second row of figure \ref{trin82pol}, which are those of the same polygon decomposition, we obtain a contribution of 
$$\frac{1}{X_{3,6}}\left(\frac{1}{X_{1,6}}+\frac{1}{X_{2,7}}\right)\,,$$
which agrees with that of $A_8^{\phi^4:(4)}$ seen in \eqref{qmet}.

Finally, the two triangulations in the third column give a contribution of 
$$\frac{1}{X_{3,8}}\left(\frac{1}{X_{3,6}}+\frac{1}{X_{5,8}}\right)\,,$$
which agrees with that of $A_8^{\phi^4:(3)}$ in \eqref{qmet}.

Therefore, the sum of all the triangulations of the polygon-decomposed diagrams in $n=8$ compute the full amplitude $A_8^{\phi^4}$.

To finish with the section, we will have a look at two triangulations of a more complicated polygon decomposition. In fact, we will study that of the extended diagram appearing in figure \ref{p4n16}, now represented in figure \ref{p4n16pol} on the left with the polygons in same color as their corresponding meadows.

\begin{figure}[H]
\includegraphics[width=15cm]{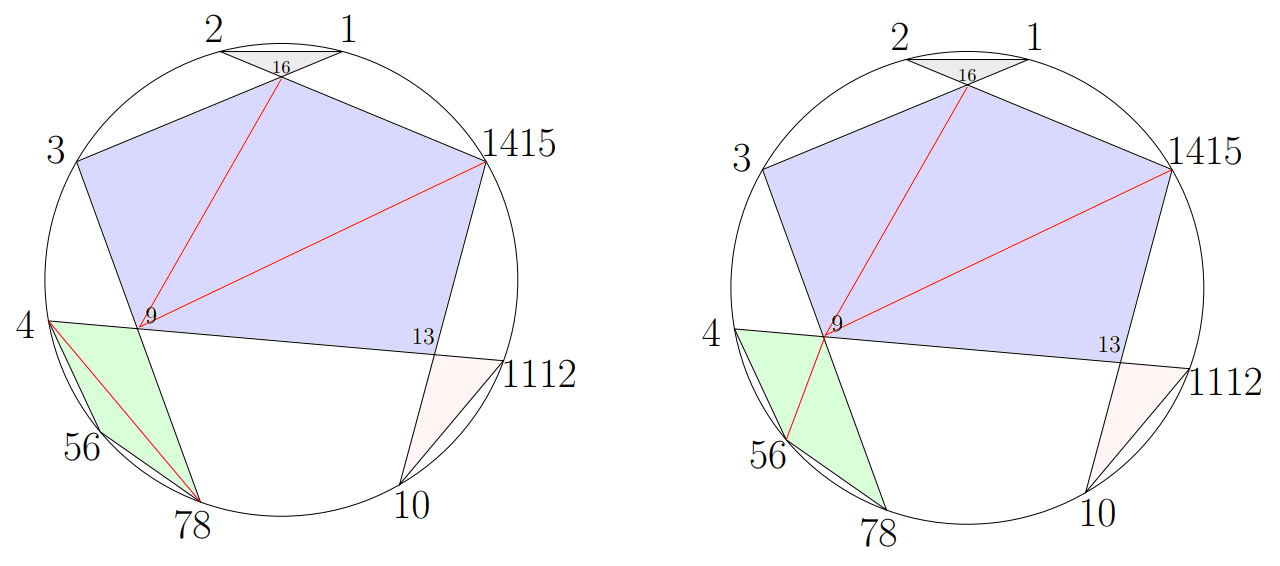}
\centering
\caption{Triangulations of a polygon decomposition in $n=16$. The triangulation on the left is equivalent to that in figure \ref{p4n16}.}
\label{p4n16pol}
\end{figure}
In the case on the left of figure \ref{p4n16pol}, we have three internal vertices which correspond to propagators. The vertex joining the triangle $\{16,1,2\}$ and the pentagon gives a propagator of the form $1/X_{3,16}$. Likewise, the vertex joining the pentagon and the triangle with external vertices $10$ and $11,12$ corresponds to a propagator of the form $1/X_{10,13}$. Finally, the vertex joining the pentagon and the square gives rise to $1/X_{4,9}$.

In this diagram, we have three triangulating chords, shown in red. The triangulating chord $\Omega_{4,78}$ gives rise to $1/X_{4,7}$. The chord $\Omega_{9,1415}$ generates $1/X_{9,14}$ and the chord $\Omega_{9,16}$ generates $1/X_{9,16}$. By multiplying all the propagators one ends up with the same result as that seen from figure \ref{p4n16}.

In the case on the right of figure \ref{p4n16pol}, the only difference is that now the triangulation of the square, i.e. $\Omega_{56,9}$, gives a contribution of $1/X_{6,9}$, according to our triangulating rules.

Hence, in this subsection we have provided a polygon-decomposition pictorial procedure to connect extended diagrams with products of cubic subamplitudes, which slightly differs from the standard polygon decomposition for off-diagonal amplitudes. Of course, this procedure is exactly equivalent to the triangulations of extended diagrams. The main problem is that it is non-obvious how $\phi^4$ trees coming from a triangulation of an extended diagram can be read as $\phi^3$ trees contributing to the off-diagonal amplitude. One of the main obstacles we encounter in this regard is that one has to be careful in that the variable $x_{a-3}$ coming from the
parameterization \eqref{preT}, which we relate with the label $a$ on the real line in extended diagrams, does not exactly correspond to a single particle $a$, as this variable appears in all rows $r\geq a$. However, due to the similarities with the polygonal decomposition of double-oredered amplitudes, we hope this approach might provide a step further into finding the permutations $\alpha$ and $\beta$ of the expected amplitude from an extended diagram.

\section{Discussions}\label{discs}
In this note we have shown that we can obtain tree-level $\phi^4$ scattering amplitudes by triangulating extended diagrams. We have proved that these diagrams are in bijection with the regions in $\textrm{Trop}^+G(2,n)$ contributing to the global Schwinger formula for $A_n^{\phi^4}$. This means that every extended diagram contains at least one $\phi^4$ tree. The reason is that, in general, an extended diagram by itself does not account for all the propagators of a tree, but needs to be triangulated. In fact, we have also proven that every triangulated extended diagram is in bijection with a $\phi^4$ Feynman diagram, and that all the triangulated extended diagrams add up to $A_n^{\phi^4}$. This implies that every triangulating chord in an extended diagram is a quadrangulating chord in an even n-gon. However, to account for all the quadrangulating chords of the n-gon, i.e. the propagators of the tree, one must also consider the chords appearing in the original non-triangulated extended diagram. So in a sense, the collection of extended diagrams, each of which characterizes a polyhedral cone in $\textrm{Trop}^+G(2,n)$ which is related to a double-ordered cubic amplitude, partitions the $\phi^4$ amplitude in a new way.

Moreover, we have also seen how triangulations of the extended diagrams that participate in the global Schwinger formulation of $A_n^{\phi^p}$ also lead to $\phi^p$ amplitudes, although we lack of a formal proof for the general case.

An interesting property of this procedure is that extended diagrams are very easily obtained by using a simple combinatorial operation, and so are their triangulations. The procedure to compute $\phi^p$ amplitudes presented in this work actually differs from the Stokes polytopes and accordiohedra literature, as well as other approaches  \cite{Banerjee:2018tun,Salvatori:2019phs,Aneesh:2019cvt,Srivastava:2020dly,Raman:2019utu,Aneesh:2019ddi,Kojima:2020tox,Kalyanapuram:2020vil,John:2020jww,Kalyanapuram:2020axt,Jagadale:2021iab,Baadsgaard:2015ifa}, as pointed out in \cite{Cachazo:2022voc}, since each of our regions/extended diagrams is isomorphic to an associahedron or to an intersection of two associahedra with a numerical weight of 1 associated to them. It would be very interesting to make a connection between the two approaches in the future.

It is important to mention that a global Schwinger formula for general positive tropical Grassmannians $\textrm{Trop}^+G(k,n)$ also exists \cite{Cachazo:2020wgu}, which is naturally related to generalized Grassmannian amplitudes, also known as CEGM amplitudes \cite{Cachazo:2019ngv}. This framework provides an intriguing generalization of the notion of quantum field theory, with many exciting results so far, see e.g. \cite{Early:2021tce,Arkani-Hamed:2019mrd,Cachazo:2022vuo,GarciaSepulveda:2019jxn,Early:2023dlt}. For example, the analogous objects to Feynman diagrams that compute CEGM amplitudes consist of arrays of lower-point Feynman diagrams \cite{Borges:2019csl,Cachazo:2019xjx,Cachazo:2022pnx,Cachazo:2023ltw}. It would be interesting to explore the possibility of finding what the $k>2$ analogous to $\phi^p$ amplitudes are as sums of $\phi^p$-like arrays of Feynman diagrams, that could be obtained by triangulating a more general version of extended diagrams.

Moreover, the global Schwinger formula was recently generalized to higher loop order in \cite{Arkani-Hamed:2023lbd}. All these frameworks leave plenty of room for exciting future research directions.

To finish with the paper, we comment on a few interesting research lines, more related to the work presented here.

\subsection{Prospects for Extending Proofs in $\phi^p$}

It would be interesting to have a proof of how triangulations of the extended diagrams that participate in the global Schwinger formulation of $A_n^{\phi^p}$ also lead to $\phi^p$ amplitudes, analagously to the proof of section~\ref{sec bijection} in the $\phi^4$ case.  Hopefully similar ideas would succeed, but there are subtleties on account of the interplay between the $(p-2)$-chords and the $2$-chords.

Furthermore, in section \ref{proof} we gave a proof of Conjecture \ref{conj1}, based on the expression \eqref{genH} for $H(x)$ in $\phi^4$. Namely, we proved that the solutions to $H(x)=0$ are in bijection with the regions defined by non-crossing chord diagrams that contribute to the amplitude $A_n^{\phi^4}$. An analogous conjecture was proposed in \cite{Cachazo:2022voc} for general $\phi^p$ theories. In particular, Conjecture 7.3 of \cite{Cachazo:2022voc} proposes a similar bijection between regions contributing to $A^{\phi^p}_n$ or, equivalently, regions where $H^{\phi^p}(x)=0$ and non-crossing $(p-2)$-chord diagrams. However, no such function $H^{\phi^p}(x)$ was given. 

Here we conjecture that, in fact, the function whose zeros give the desired regions is

$$H^{\phi^p}(x)=\sum_{a=0}^{n-3}x_a-\sum_{{\cal X}}\textrm{min}(x_a,x_{a+1},...,x_b)+2\sum_{{\cal Y}}\textrm{min}(x_a,x_{a+1},...,x_b)\,,$$
where ${\cal X}$ means summing for $a<b$ over ordered pairs $(a,b)$ such that $b-a\in\{1,p-3,p-1,2p-5,2p-3,3p-7,3p-5,...,n-3\}$. The domain ${\cal Y}$ means summing for $a<b$ over ordered pairs $(a,b)$ such that $b-a\equiv0$ mod $p-2$, up to $n-3$.


Because of the very similar form of the conjectured form of $H^{\phi^p}$, one would hope that similar methods, like those in section \ref{proof}, would work to prove Conjecture 7.3 of \cite{Cachazo:2022voc}.

\subsection{Polygon Decomposition of an Extended Diagram in $\phi^p$} \label{secpcubic} 
Of course, one of the most pressing problems is to be able to figure out the precise bijection between an extended diagram and a double-ordered cubic amplitude. We have made a small step towards this goal in section \ref{secp4cubic}, by proposing a way to read the polygonal decomposition of an extended diagram in $\phi^4$. However, we still lack of a precise combinatorial method that would give the desired permutations from it.

In this subsection we start the exploration of the polygon decomposition of extended diagrams for general $\phi^p$ theories.

For example, let us start with $p=6$ and $n=10$. Figure \ref{trinp6n10pol} shows the already triangulated extended diagrams, together with their corresponding proposed polygon decomposition. Note from the three top diagrams in this example that only one of the labels in the 4-chord separating the two meadows becomes an external vertex. Namely, the label on the left of the upper chord of the lower meadow. The rest of the labels become internal vertices inside the upper polygon. One difference from the $\phi^4$ case is that now we must consider internal labels on the left and on the right, see e.g. the first polygon decomposition. Moreover, by looking at the two bottom diagrams in the figure, we see again how a triangulating chord only picks the higher label between those where it starts from. 

\begin{figure}[H]
\includegraphics[width=15cm]{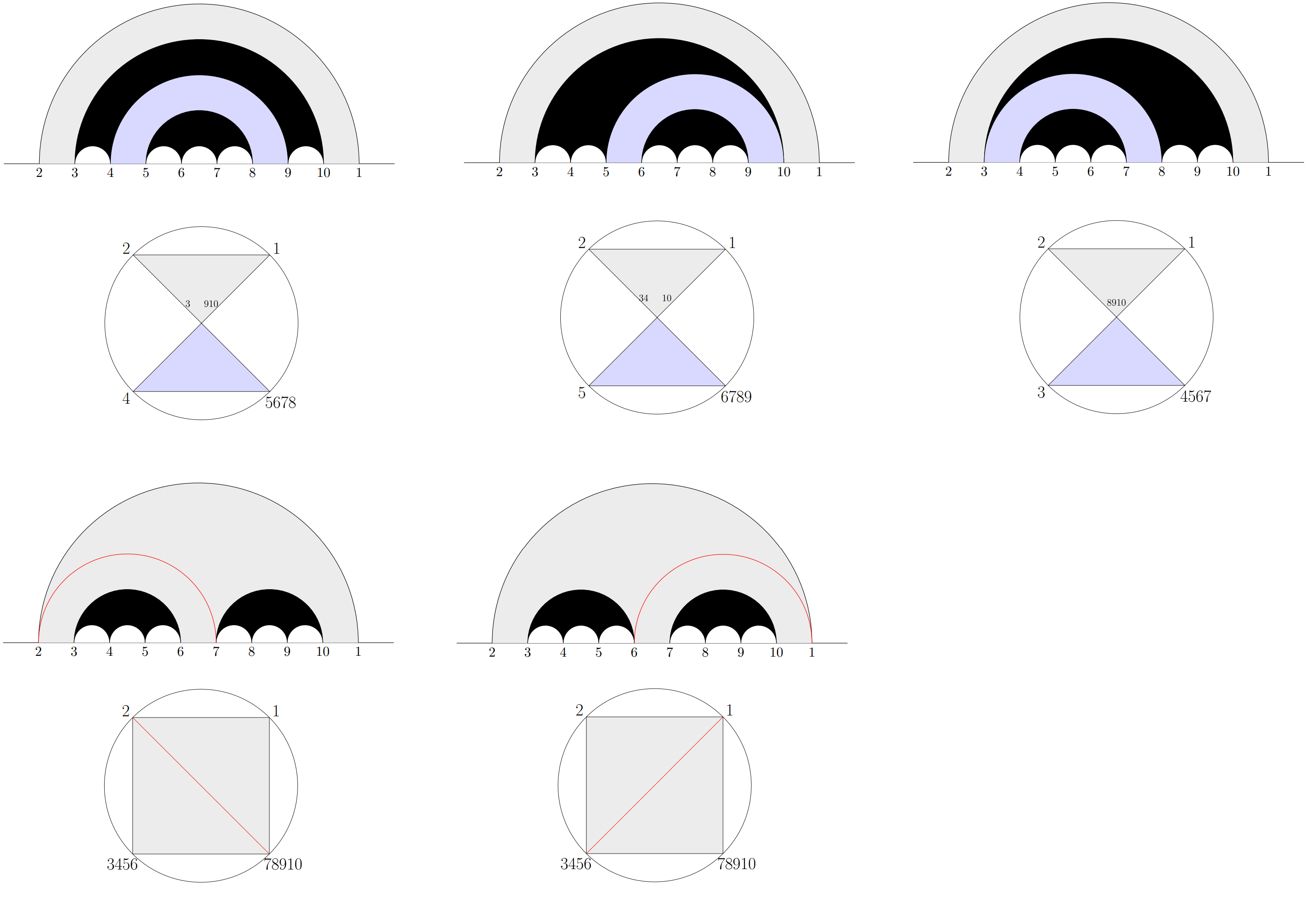}
\centering
\caption{Triangulated extended for $p=6$ and $n=10$ and their corresponding polygon decomposition shown below.}
\label{trinp6n10pol}
\end{figure}
Now let us have a look at figure \ref{trinp5n11pol1}, which belongs to the $p=5$, $n=11$ case. This is the first example in which we encounter with a case where we have a single 3-chord separating three meadows (last diagram in the figure). In this case, we have a triangle with two labelled internal vertices, one of them being label 2, which is a label from an upper chord in a meadow. This is reminiscent of the $\phi^4$ example for $n=16$ (see figure \ref{p4n16}), in which a 5-point meadow was separating three different meadows, and which corresponded to a pentagon with two labelled internal vertices.

Figures \ref{trinp5n11pol2} and \ref{trinp5n11pol3} show the rest of the polygon decompositions for the remaining triangulated extended diagrams in $p=5$ and $n=11$.

\begin{figure}[H]
\includegraphics[width=15cm]{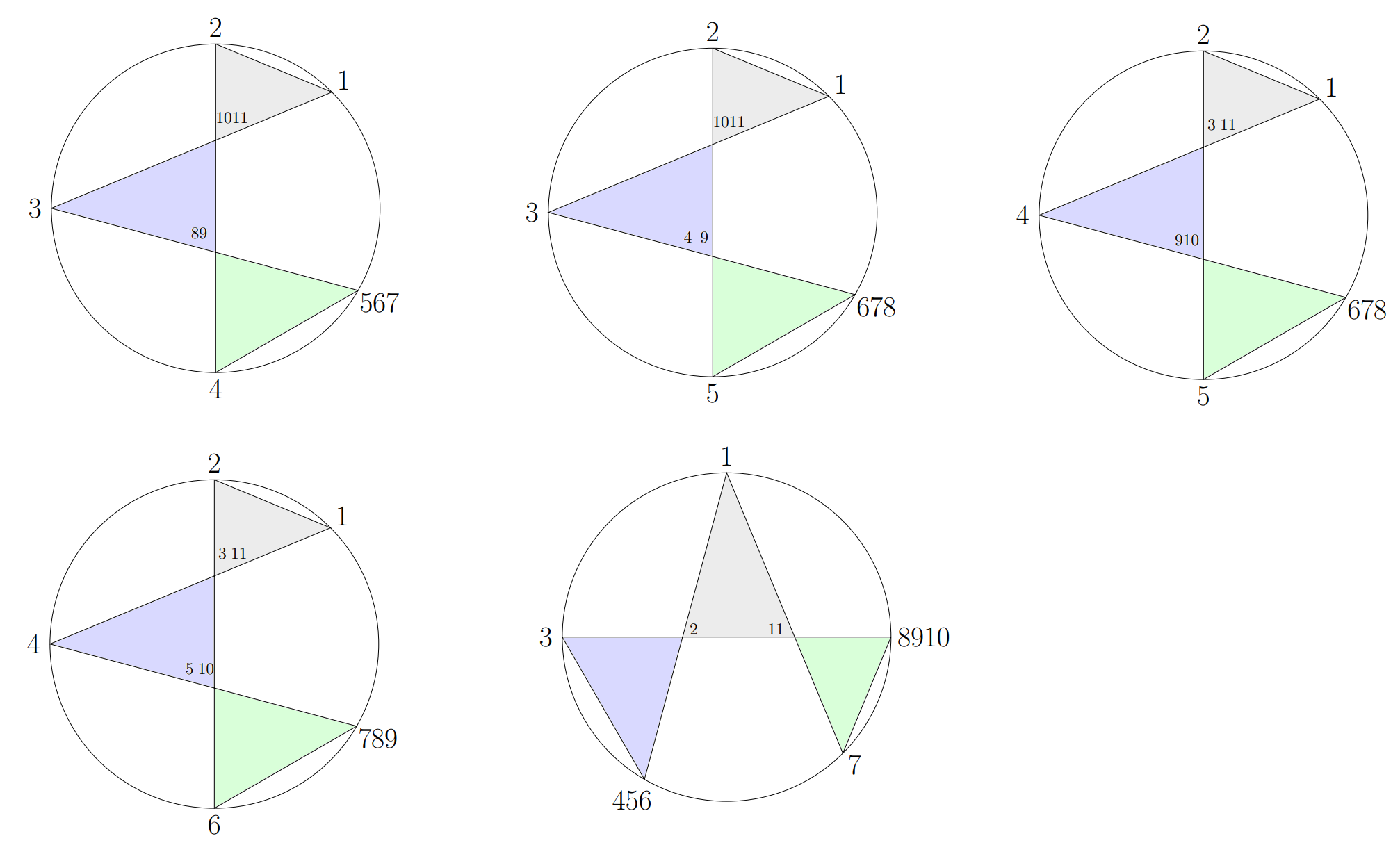}
\centering
\caption{Polygon decomposition of the extended diagrams for $p=5$ and $n=11$ appearing in figure \ref{p5n111}, presented in the same order.}
\label{trinp5n11pol1}
\end{figure}

\begin{figure}[H]
\includegraphics[width=15cm]{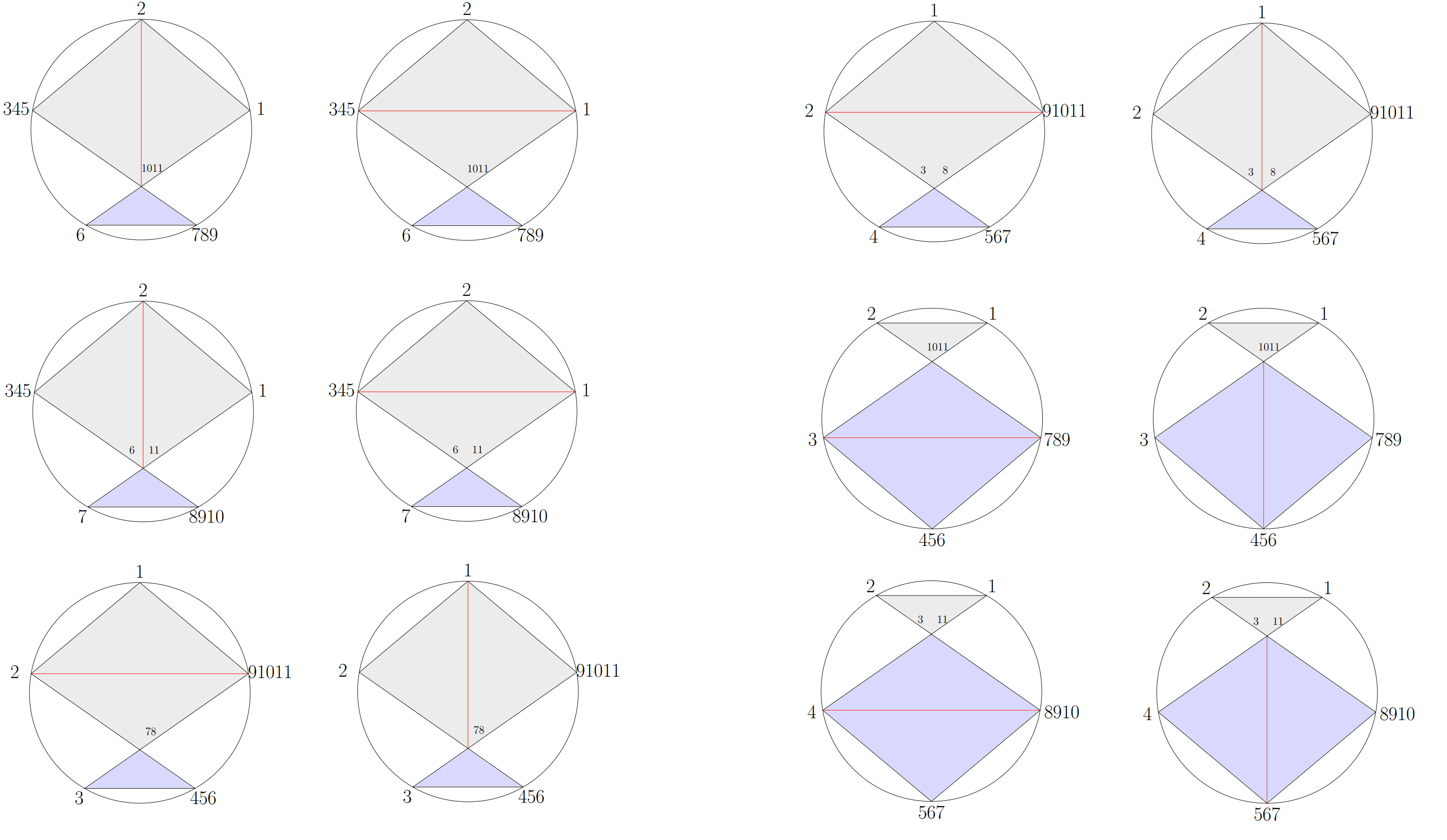}
\centering
\caption{Triangulations of the polygon decompositions for the extended diagrams in $p=5$ and $n=11$ appearing in figure \ref{p5n112}. The first two columns correspond to the first three rows in figure \ref{p5n112}, while the last two columns correspond to the last three rows in figure \ref{p5n112}, ordered in the same way.}
\label{trinp5n11pol2}
\end{figure}

\begin{figure}[H]
\includegraphics[width=15cm]{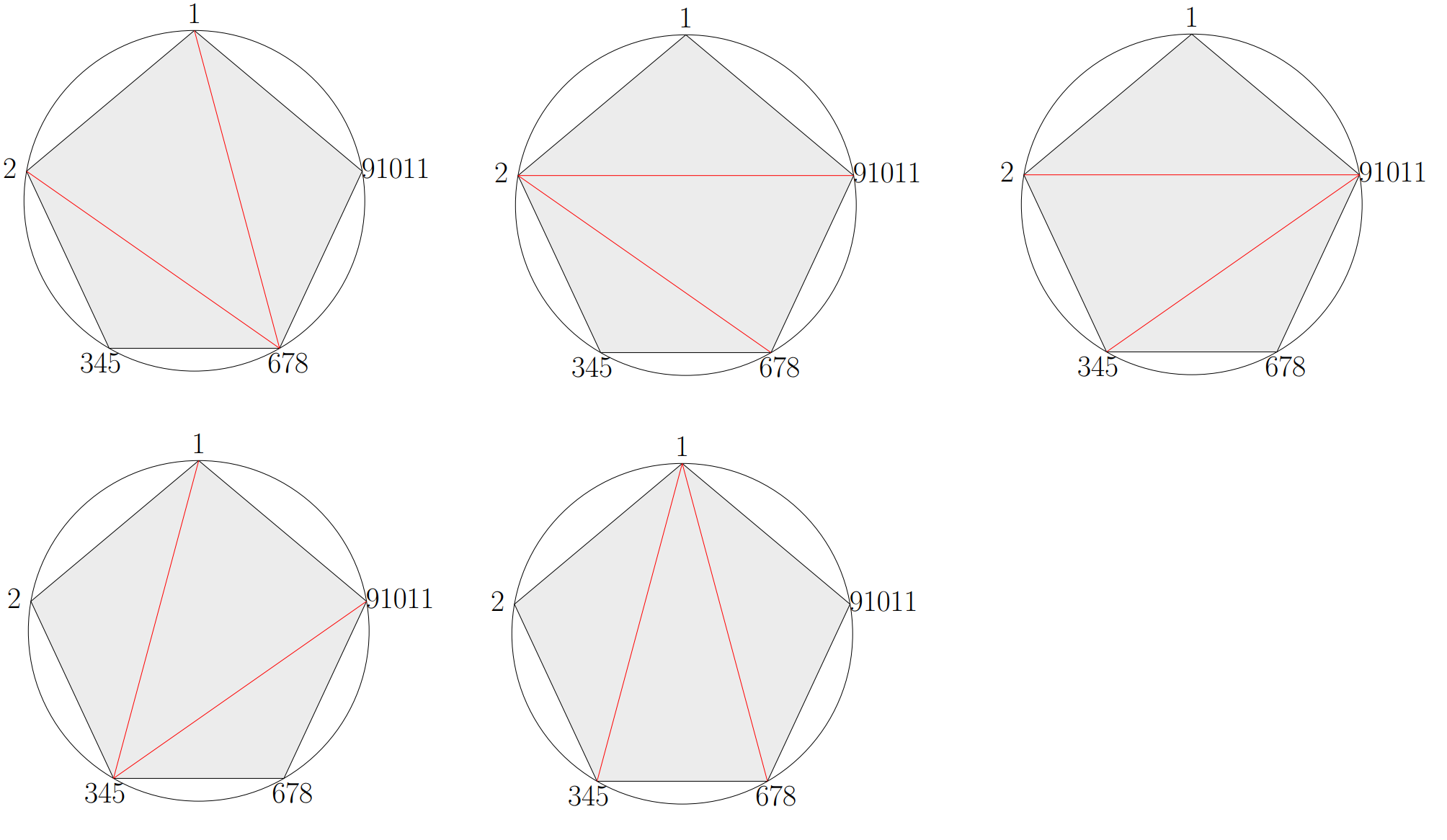}
\centering
\caption{Polygon decomposition of the extended diagram for $p=5$ and $n=11$ appearing in figure \ref{p5n113}, presented in the same order.}
\label{trinp5n11pol3}
\end{figure}

Figures \ref{p5n14} and \ref{p5n14v2} correspond to two similar decompositions for $p=5$ and $n=14$.

\begin{figure}[H]
\includegraphics[width=15cm]{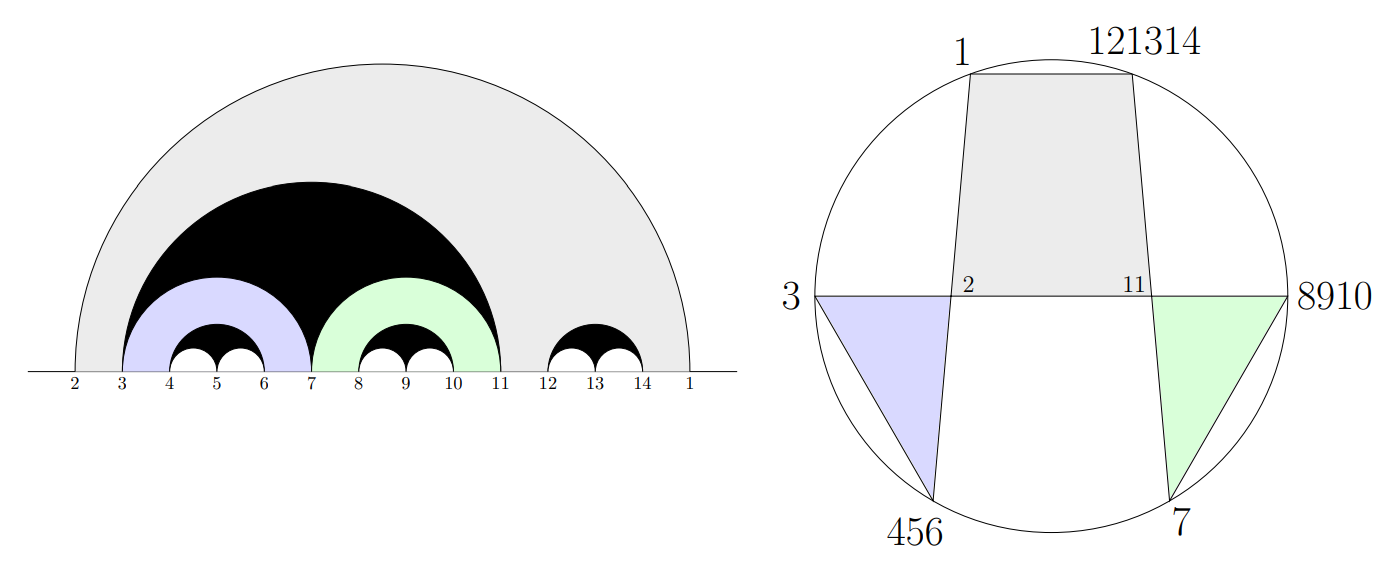}
\centering
\caption{One of the extended diagrams for $p=5$ and $n=14$, together with its polygon decomposition. Note that this is the same extended diagram as the last one in figure \ref{p5n111}, which corresponds to $p=5$ and $n=11$, but with an additional 3-chord joining the new points 12, 13 and 14. In the polygon decomposition representation this is equivalent to adding a new vertex $\overline{12,13,14}$ to the upper triangle in figure \ref{trinp5n11pol1}.}
\label{p5n14}
\end{figure}

\begin{figure}[H]
\includegraphics[width=15cm]{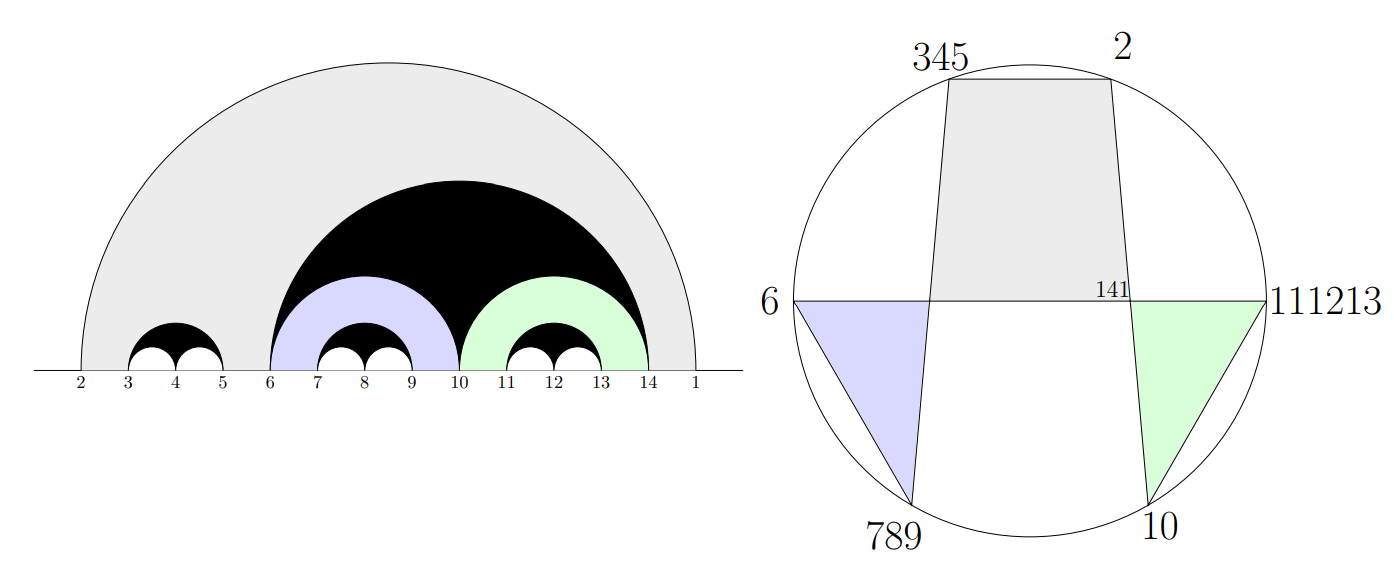}
\centering
\caption{Another similar example of an extended diagram for $p=5$ and $n=14$, together with its polygon decomposition.}
\label{p5n14v2}
\end{figure}

Finally, in figure \ref{p5n14v3} we show a more exotic polygon decomposition of an extended diagram in $p=6$ and $n=14$. Interestingly, now one of the triangles has three internal labels, but one of them is not naturally associated to a vertex.

\begin{figure}[H]
\includegraphics[width=15cm]{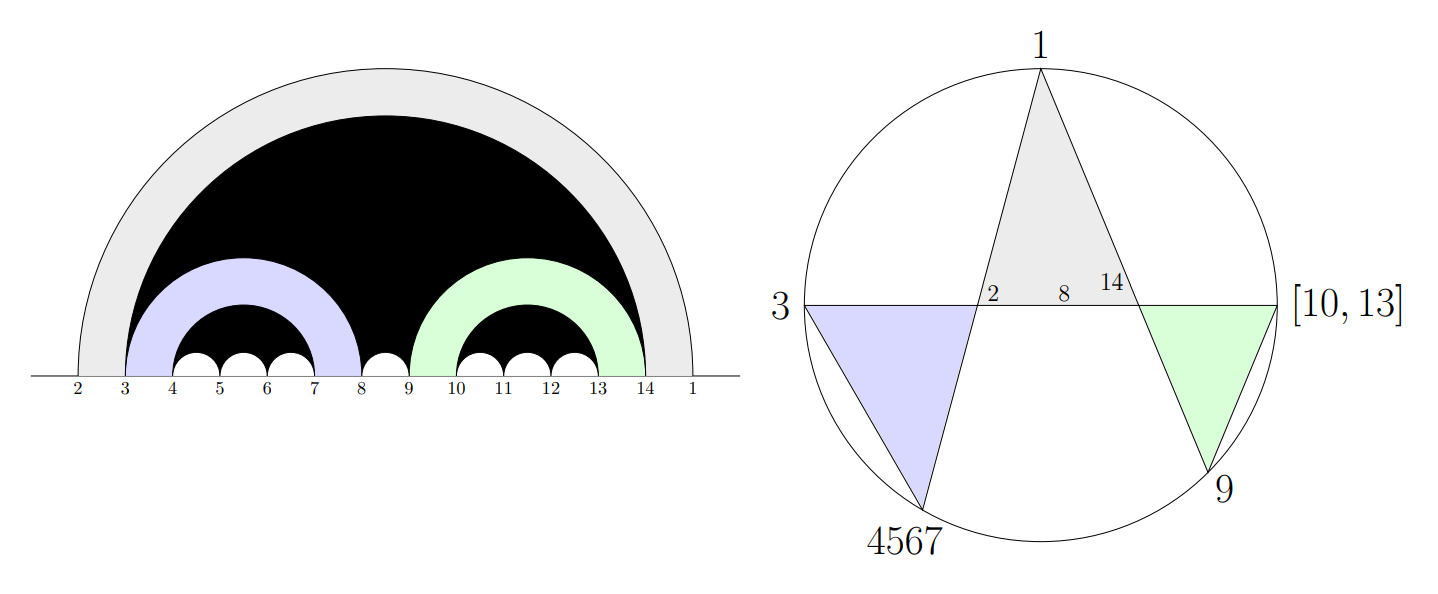}
\centering
\caption{An example of an extended diagram for $p=6$ and $n=14$, together with its polygon decomposition. Adding another 4-chord joining points 15, 16, 17 and 18 in the $n=18$ case adds an extra vertex to the upper triangle, forming a square. The triangulation of this square is not affected by the internal label 8 in the middle.}
\label{p5n14v3}
\end{figure}
From these examples we conclude that a deeper understanding of the relation between extended diagrams and off-diagonal amplitudes is needed. We leave the study of this important and intriguing connection to future research.

\subsection{Towards Smoothly Splitting $\phi^p$ Amplitudes}

In \cite{Cachazo:2021wsz} a new intriguing property of tree-level scattering amplitudes was found. This property was called a 3-split, which comes into play in certain subspaces of the kinematic space where the scattering amplitude smoothly splits into three amputated currents \cite{Dixon:1996wi,Mafra:2016ltu} without becoming singular. This and related phenomena, including factorization near zeros in ``stringy'' amplitudes \cite{Arkani-Hamed:2023swr}, or extensions of the splitting behavior \cite{Cao:2024gln}, have recently been studied.

In this subsection we comment on the possibility of finding subspaces in the kinematic space that would lead to similar smooth splits in $\phi^p$ amplitudes.

For example, it is reasonable to start with $\phi^4$ and $n=10$, since extended diagrams are related to 6-point cubic amplitudes, which is the minimum number of particles required for the smooth split to happen in $\phi^3$ amplitudes. In this case, the extended diagram corresponding to a single 6-point meadow is a diagonal amplitude $m_{{6}}(\mathbb{I},\mathbb{I})$ under the bijection
$$\{X_{1,4}\to s_{{12}},X_{2,5}\to s_{{23}},X_{4,7}\to s_{{34}}, X_{6,9}\to s_{{45}},X_{1,8}\to s_{{56}},X_{2,9}\to s_{{61}},X_{1,6}\to X_{{1},{4}},$$
$$X_{2,7}\to X_{{2},{5}},X_{4,9}\to X_{{3},{6}}\}\,.$$
The conditions $X_{1,6}=X_{1,4}+X_{2,5}$, $X_{2,7}=X_{1,8}+X_{2,9}$ and $X_{4,9}=X_{4,7}+X_{6,9}$ are equivalent, under the bijection, to the split kinematics subspace $({2},{4},{6})$ on 6 particles, given by the conditions $X_{{1},{4}}=s_{{12}}+s_{{23}}$, $X_{{3},{6}}=s_{{34}}+s_{{45}}$ and $X_{{2},{5}}=s_{{56}}+s_{{61}}$. The contribution of the extended diagram in this subspace smoothly splits into
$$\left(\frac{1}{X_{2,5}}+\frac{1}{X_{4,7}}\right)\left(\frac{1}{X_{6,9}}+\frac{1}{X_{1,8}}\right)\left(\frac{1}{X_{1,4}}+\frac{1}{X_{2,9}}\right)\,.$$
It would be interesting to study if the whole $\phi^4$ amplitude can therefore smoothly split. At first it might seem impossible since other extended diagrams are given by products of 4-point or even 3-point and 5-point subamplitudes, but the whole combination might allow for smooth splits, maybe using a different kinematic subspace.

One interesting possibility would be to try and understand how the smooth splitting works at the level of extended diagrams. Of course, by directly looking at Feynman diagrams it is non-obvious what is the combinatorial picture that underlies smooth splits. But, similarly to what is seen in other contexts like the CHY formula \cite{Cachazo:2021wsz} or from the kinematic mesh \cite{Arkani-Hamed:2023swr}, extended diagrams in some sense resum Feynman diagrams, so it might be possible to see this behavior in a more obvious way. We leave this fascinating question for future research.

\section*{Acknowledgements}
We would like to thank F. Cachazo for reading a previous version of this draft, as well as for helpful comments and discussions. We also thank J. Drummond and \"O. G\"urdo\u{g}an for discussions about the topic of this work, and C. Chowdhury for explaining how to use TikZ, which helped to generate the figures of the paper. BGU is supported by the STFC consolidated grant ST/X000583/1. KY is supported by an NSERC Discovery grant, the Canada Research Chairs program, and was supported by the Emmy Noether Fellows program at Perimeter Institute which is supported in part by the Simons Foundation. Research at Perimeter Institute is supported in part by the Government of Canada through the Department of Innovation, Science and Economic Development Canada and by the Province of Ontario through the Ministry of Colleges and Universities.



\bibliographystyle{JHEP}
\bibliography{references}

\providecommand{\href}[2]{#2}\begingroup\raggedright\begin{thebibliography}{10}

\bibitem{Tourkine:2013rda}
P.~Tourkine, {\it {Tropical Amplitudes}},  {\em Annales Henri Poincare} {\bf 18} (2017), no.~6 2199--2249, [\href{http://arxiv.org/abs/1309.3551}{{\tt arXiv:1309.3551}}].

\bibitem{Cachazo:2019ngv}
F.~Cachazo, N.~Early, A.~Guevara, and S.~Mizera, {\it {Scattering Equations: From Projective Spaces to Tropical Grassmannians}},  \href{http://arxiv.org/abs/1903.08904}{{\tt arXiv:1903.08904}}.

\bibitem{Cachazo:2022voc}
F.~Cachazo and B.~G. Umbert, {\it {Connecting Scalar Amplitudes using The Positive Tropical Grassmannian}},  \href{http://arxiv.org/abs/2205.02722}{{\tt arXiv:2205.02722}}.

\bibitem{Drummond:2019qjk}
J.~Drummond, J.~Foster, {\"O}.~G{\"u}rdogan, and C.~Kalousios, {\it {Tropical Grassmannians, cluster algebras and scattering amplitudes}},  \href{http://arxiv.org/abs/1907.01053}{{\tt arXiv:1907.01053}}.

\bibitem{Cachazo:2021llu}
F.~Cachazo, {\it {Diagonally Embedded Sets of ${\rm Trop}^+G(2,n)$'s in ${\rm Trop}\, G(2,n)$: Is There a Critical Value of $n$?}},  \href{http://arxiv.org/abs/2104.10628}{{\tt arXiv:2104.10628}}.

\bibitem{Drummond:2019cxm}
J.~Drummond, J.~Foster, O.~Gürdoğan, and C.~Kalousios, {\it {Algebraic singularities of scattering amplitudes from tropical geometry}},  \href{http://arxiv.org/abs/1912.08217}{{\tt arXiv:1912.08217}}.

\bibitem{Henke:2019hve}
N.~Henke and G.~Papathanasiou, {\it {How tropical are seven- and eight-particle amplitudes?}},  \href{http://arxiv.org/abs/1912.08254}{{\tt arXiv:1912.08254}}.

\bibitem{Arkani-Hamed:2024vna}
N.~Arkani-Hamed, C.~Figueiredo, H.~Frost, and G.~Salvatori, {\it {Tropical Amplitudes For Colored Lagrangians}},  \href{http://arxiv.org/abs/2402.06719}{{\tt arXiv:2402.06719}}.

\bibitem{Arkani-Hamed:2022cqe}
N.~Arkani-Hamed, A.~Hillman, and S.~Mizera, {\it {Feynman polytopes and the tropical geometry of UV and IR divergences}},  {\em Phys. Rev. D} {\bf 105} (2022), no.~12 125013, [\href{http://arxiv.org/abs/2202.12296}{{\tt arXiv:2202.12296}}].

\bibitem{Lukowski:2020dpn}
T.~Lukowski, M.~Parisi, and L.~K. Williams, {\it {The Positive Tropical Grassmannian, the Hypersimplex, and the m = 2 Amplituhedron}},  {\em Int. Math. Res. Not.} {\bf 2023} (2023), no.~19 16778--16836, [\href{http://arxiv.org/abs/2002.06164}{{\tt arXiv:2002.06164}}].

\bibitem{Early:2023tjj}
N.~Early and J.-R. Li, {\it {Tropical Geometry, Quantum Affine Algebras, and Scattering Amplitudes}},  \href{http://arxiv.org/abs/2303.05618}{{\tt arXiv:2303.05618}}.

\bibitem{SSTrop}
D.~Speyer and B.~Sturmfels, {\it {The tropical Grassmannian}},  {\em Advances in Geometry} {\bf 4} (2004), no.~3 389--411, [\href{http://arxiv.org/abs/math/0304218}{{\tt math/0304218}}].

\bibitem{BHV}
L.~J. Billera, S.~P. Holmes, and K.~Vogtmann, {\it Geometry of the space of phylogenetic trees},  {\em Adv. Appl. Math.} {\bf 27} (Nov., 2001) 733--767.

\bibitem{SWTrop}
D.~{Speyer} and L.~K. {Williams}, {\it {The tropical totally positive Grassmannian}},  {\em arXiv Mathematics e-prints} (Dec., 2003) math/0312297, [\href{http://arxiv.org/abs/math/0312297}{{\tt math/0312297}}].

\bibitem{Cachazo:2013iea}
F.~Cachazo, S.~He, and E.~Y. Yuan, {\it {Scattering of Massless Particles: Scalars, Gluons and Gravitons}},  {\em JHEP} {\bf 07} (2014) 033, [\href{http://arxiv.org/abs/1309.0885}{{\tt arXiv:1309.0885}}].

\bibitem{Cachazo:2020wgu}
F.~Cachazo and N.~Early, {\it {Planar Kinematics: Cyclic Fixed Points, Mirror Superpotential, k-Dimensional Catalan Numbers, and Root Polytopes}},  \href{http://arxiv.org/abs/2010.09708}{{\tt arXiv:2010.09708}}.

\bibitem{GimenezUmbert:2023ykk}
B.~Gim\'enez~Umbert, {\em {New Aspects of Scattering Amplitudes, Higher-k Amplitudes, and Holographic Quark Gluon Plasmas}}.
\newblock PhD thesis, Western Ontario U., Western Ontario U., 2023.

\bibitem{Arkani-Hamed:2023mvg}
N.~Arkani-Hamed, H.~Frost, G.~Salvatori, P.-G. Plamondon, and H.~Thomas, {\it {All Loop Scattering For All Multiplicity}},  \href{http://arxiv.org/abs/2311.09284}{{\tt arXiv:2311.09284}}.

\bibitem{Arkani-Hamed:2023lbd}
N.~Arkani-Hamed, H.~Frost, G.~Salvatori, P.-G. Plamondon, and H.~Thomas, {\it {All Loop Scattering as a Counting Problem}},  \href{http://arxiv.org/abs/2309.15913}{{\tt arXiv:2309.15913}}.

\bibitem{Alex}
A.~{Postnikov}, {\it {Total positivity, Grassmannians, and networks}},  {\em arXiv Mathematics e-prints} (Sept., 2006) math/0609764, [\href{http://arxiv.org/abs/math/0609764}{{\tt math/0609764}}].

\bibitem{Banerjee:2018tun}
P.~Banerjee, A.~Laddha, and P.~Raman, {\it {Stokes polytopes: the positive geometry for $\phi^{4}$ interactions}},  {\em JHEP} {\bf 08} (2019) 067, [\href{http://arxiv.org/abs/1811.05904}{{\tt arXiv:1811.05904}}].

\bibitem{Salvatori:2019phs}
G.~Salvatori and S.~Stanojevic, {\it {Scattering Amplitudes and Simple Canonical Forms for Simple Polytopes}},  {\em JHEP} {\bf 03} (2021) 067, [\href{http://arxiv.org/abs/1912.06125}{{\tt arXiv:1912.06125}}].

\bibitem{Aneesh:2019cvt}
P.~B. Aneesh, P.~Banerjee, M.~Jagadale, R.~Rajan, A.~Laddha, and S.~Mahato, {\it {On positive geometries of quartic interactions: Stokes polytopes, lower forms on associahedra and world-sheet forms}},  {\em JHEP} {\bf 04} (2020) 149, [\href{http://arxiv.org/abs/1911.06008}{{\tt arXiv:1911.06008}}].

\bibitem{Srivastava:2020dly}
I.~Srivastava, {\it {Constraining the weights of Stokes polytopes using BCFW recursions for \ensuremath{\phi}$^{4}$}},  {\em JHEP} {\bf 04} (2021) 064, [\href{http://arxiv.org/abs/2005.12886}{{\tt arXiv:2005.12886}}].

\bibitem{Mizera:2017cqs}
S.~Mizera, {\it {Combinatorics and Topology of Kawai-Lewellen-Tye Relations}},  {\em JHEP} {\bf 08} (2017) 097, [\href{http://arxiv.org/abs/1706.08527}{{\tt arXiv:1706.08527}}].

\bibitem{Arkani-Hamed:2017mur}
N.~Arkani-Hamed, Y.~Bai, S.~He, and G.~Yan, {\it {Scattering Forms and the Positive Geometry of Kinematics, Color and the Worldsheet}},  {\em JHEP} {\bf 05} (2018) 096, [\href{http://arxiv.org/abs/1711.09102}{{\tt arXiv:1711.09102}}].

\bibitem{Mizera:2016jhj}
S.~Mizera, {\it {Inverse of the String Theory KLT Kernel}},  {\em JHEP} {\bf 06} (2017) 084, [\href{http://arxiv.org/abs/1610.04230}{{\tt arXiv:1610.04230}}].

\bibitem{Raman:2019utu}
P.~Raman, {\it {The positive geometry for $\phi^{p}$ interactions}},  {\em JHEP} {\bf 10} (2019) 271, [\href{http://arxiv.org/abs/1906.02985}{{\tt arXiv:1906.02985}}].

\bibitem{Aneesh:2019ddi}
P.~B. Aneesh, M.~Jagadale, and N.~Kalyanapuram, {\it {Accordiohedra as positive geometries for generic scalar field theories}},  {\em Phys. Rev. D} {\bf 100} (2019), no.~10 106013, [\href{http://arxiv.org/abs/1906.12148}{{\tt arXiv:1906.12148}}].

\bibitem{Kojima:2020tox}
R.~Kojima, {\it {Weights and recursion relations for $\phi^p$ tree amplitudes from the positive geometry}},  {\em JHEP} {\bf 08} (2020) 054, [\href{http://arxiv.org/abs/2005.11006}{{\tt arXiv:2005.11006}}].

\bibitem{Kalyanapuram:2020vil}
N.~Kalyanapuram and R.~G. Jha, {\it {Positive Geometries for all Scalar Theories from Twisted Intersection Theory}},  {\em Phys. Rev. Res.} {\bf 2} (2020), no.~3 033119, [\href{http://arxiv.org/abs/2006.15359}{{\tt arXiv:2006.15359}}].

\bibitem{John:2020jww}
R.~R. John, R.~Kojima, and S.~Mahato, {\it {Weights, Recursion relations and Projective triangulations for Positive Geometry of scalar theories}},  {\em JHEP} {\bf 10} (2020) 037, [\href{http://arxiv.org/abs/2007.10974}{{\tt arXiv:2007.10974}}].

\bibitem{Kalyanapuram:2020axt}
N.~Kalyanapuram, {\it {On Polytopes and Generalizations of the KLT Relations}},  {\em JHEP} {\bf 12} (2020) 057, [\href{http://arxiv.org/abs/2009.10114}{{\tt arXiv:2009.10114}}].

\bibitem{Jagadale:2021iab}
M.~Jagadale and A.~Laddha, {\it {Towards positive geometry of multi scalar field amplitudes. Accordiohedron and effective field theory}},  {\em JHEP} {\bf 04} (2022) 100, [\href{http://arxiv.org/abs/2104.04915}{{\tt arXiv:2104.04915}}].

\bibitem{Baadsgaard:2015ifa}
C.~Baadsgaard, N.~E.~J. Bjerrum-Bohr, J.~L. Bourjaily, and P.~H. Damgaard, {\it {Scattering Equations and Feynman Diagrams}},  {\em JHEP} {\bf 09} (2015) 136, [\href{http://arxiv.org/abs/1507.00997}{{\tt arXiv:1507.00997}}].

\bibitem{Early:2021tce}
N.~Early, {\it {Planarity in Generalized Scattering Amplitudes: PK Polytope, Generalized Root Systems and Worldsheet Associahedra}},  \href{http://arxiv.org/abs/2106.07142}{{\tt arXiv:2106.07142}}.

\bibitem{Arkani-Hamed:2019mrd}
N.~Arkani-Hamed, S.~He, and T.~Lam, {\it {Stringy Canonical Forms}},  \href{http://arxiv.org/abs/1912.08707}{{\tt arXiv:1912.08707}}.

\bibitem{Cachazo:2022vuo}
F.~Cachazo and N.~Early, {\it {Biadjoint scalars and associahedra from residues of generalized amplitudes}},  {\em JHEP} {\bf 10} (2023) 015, [\href{http://arxiv.org/abs/2204.01743}{{\tt arXiv:2204.01743}}].

\bibitem{GarciaSepulveda:2019jxn}
D.~Garc\'\i{}a~Sep\'ulveda and A.~Guevara, {\it {A Soft Theorem for the Tropical Grassmannian}},  \href{http://arxiv.org/abs/1909.05291}{{\tt arXiv:1909.05291}}.

\bibitem{Early:2023dlt}
N.~Early, {\it {Moduli Space Tilings and Lie-Theoretic Color Factors}},  \href{http://arxiv.org/abs/2310.12130}{{\tt arXiv:2310.12130}}.

\bibitem{Borges:2019csl}
F.~Borges and F.~Cachazo, {\it {Generalized Planar Feynman Diagrams: Collections}},  \href{http://arxiv.org/abs/1910.10674}{{\tt arXiv:1910.10674}}.

\bibitem{Cachazo:2019xjx}
F.~Cachazo, A.~Guevara, B.~Umbert, and Y.~Zhang, {\it {Planar Matrices and Arrays of Feynman Diagrams}},  \href{http://arxiv.org/abs/1912.09422}{{\tt arXiv:1912.09422}}.

\bibitem{Cachazo:2022pnx}
F.~Cachazo, N.~Early, and Y.~Zhang, {\it {Color-Dressed Generalized Biadjoint Scalar Amplitudes: Local Planarity}},  {\em SIGMA} {\bf 20} (2024) 016, [\href{http://arxiv.org/abs/2212.11243}{{\tt arXiv:2212.11243}}].

\bibitem{Cachazo:2023ltw}
F.~Cachazo, N.~Early, and Y.~Zhang, {\it {Generalized Color Orderings: CEGM Integrands and Decoupling Identities}},  \href{http://arxiv.org/abs/2304.07351}{{\tt arXiv:2304.07351}}.

\bibitem{Cachazo:2021wsz}
F.~Cachazo, N.~Early, and B.~Gim\'enez~Umbert, {\it {Smoothly splitting amplitudes and semi-locality}},  {\em JHEP} {\bf 08} (2022) 252, [\href{http://arxiv.org/abs/2112.14191}{{\tt arXiv:2112.14191}}].

\bibitem{Dixon:1996wi}
L.~J. Dixon, {\it {Calculating scattering amplitudes efficiently}},  in {\em {Theoretical Advanced Study Institute in Elementary Particle Physics (TASI 95): QCD and Beyond}}, pp.~539--584, 1, 1996.
\newblock \href{http://arxiv.org/abs/hep-ph/9601359}{{\tt hep-ph/9601359}}.

\bibitem{Mafra:2016ltu}
C.~R. Mafra, {\it {Berends-Giele recursion for double-color-ordered amplitudes}},  {\em JHEP} {\bf 07} (2016) 080, [\href{http://arxiv.org/abs/1603.09731}{{\tt arXiv:1603.09731}}].

\bibitem{Arkani-Hamed:2023swr}
N.~Arkani-Hamed, Q.~Cao, J.~Dong, C.~Figueiredo, and S.~He, {\it {Hidden zeros for particle/string amplitudes and the unity of colored scalars, pions and gluons}},  \href{http://arxiv.org/abs/2312.16282}{{\tt arXiv:2312.16282}}.

\bibitem{Cao:2024gln}
Q.~Cao, J.~Dong, S.~He, and C.~Shi, {\it {A universal splitting of string and particle scattering amplitudes}},  \href{http://arxiv.org/abs/2403.08855}{{\tt arXiv:2403.08855}}.

\end{thebibliography}\endgroup

\end{document}